\documentclass[11pt,a4paper]{article}
\usepackage[margin=2.5cm]{geometry}
\usepackage{amsmath,amssymb,amsthm,graphicx,booktabs,xcolor}
\usepackage[colorlinks,linkcolor=black,citecolor=black,urlcolor=black]{hyperref}
\usepackage[round]{natbib}
\newtheorem{theorem}{Theorem}[section]
\newtheorem{proposition}[theorem]{Proposition}
\newtheorem{lemma}[theorem]{Lemma}
\newtheorem{corollary}[theorem]{Corollary}
\theoremstyle{definition}
\newtheorem{definition}[theorem]{Definition}
\theoremstyle{remark}
\newtheorem{remark}[theorem]{Remark}

\newcommand{\PP}{\mathbb{P}}
\newcommand{\MM}{\mathbb{M}}
\newcommand{\QQ}{\mathbb{Q}}
\newcommand{\EE}{\mathbb{E}}
\newcommand{\LP}{\mathcal{L}_\PP}
\newcommand{\gQ}{g_\QQ}

\newcommand{\one}{\mathbf{1}}
\newcommand{\sbar}{\bar s}
\newcommand{\sebar}{\bar s_e}

\title{\bf Optimal entry and exit for variance swaps:\\
closed-form rules for the perpetual contract}
\author{J.\ Maeda\thanks{%
E-mail: \texttt{jun.maeda@warwickgrad.net}. All errors are the author's own.}}
\date{This version: \today}

\begin{document}
\maketitle

\begin{abstract}
\noindent
Variance swaps are a convenient instrument for trading vega and convexity, and a
listed contract now trades on Cboe. We ask when a trader should put such a
position on and when she should take it off, and for a perpetual, continuously
settled contract we answer both in closed form: each threshold is the unique
root of a smooth-pasting equation in confluent hypergeometric functions.

Under the pricing measure the question has no content, the mark-to-market being
a martingale. Under the physical measure with a variance risk premium it becomes
meaningful, and then reduces: the accrued variance separates exactly, the
maturity, strike and costs are absorbed into a single forcing term whose sign
fixes the geometry of the exercise region, and what is left on the perpetual is
an affine reward on a CIR process, which the optimal-stopping literature already
solves.

Entering the position and exiting it are not mirror images. An exit rule follows
from the premium and the trading spread, both observable. For the short --- the
only side worth opening under the empirical sign of the premium --- an entry
rule exists only for an interval of carrying charges, and even there triggers
only deep in the upper tail of the physical law: a trader who is out of the
market pays nothing to stay out, so an operational entry rule needs a cost of
idle capital that the exit rule does not.
\end{abstract}

\noindent\textbf{Keywords:} optimal stopping; variance swap; entry and exit
timing; variance risk premium; carrying cost; Heston model; CIR process;
excessive function

\smallskip
\noindent\textbf{JEL Classification:} C61, G13

\section{Introduction}

\subsection{Variance swaps}

The payoff of a variance swap is
\begin{equation}\label{eq:payoff}
\frac{N_v}{2\Sigma}\bigl(\sigma_R^2-\Sigma^2\bigr)=N\bigl(\sigma_R^2-\Sigma^2\bigr),
\end{equation}
where $\sigma_R$ is the realized volatility over the life of the swap, $\Sigma$
the volatility strike, $T$ the maturity, and $N_v$ the vega notional: since
$\partial/\partial\sigma_R$ of \eqref{eq:payoff} equals $N_v$ at
$\sigma_R=\Sigma$, it is the profit or loss per volatility point of realized
volatility at the money. The realized variance is computed as
\begin{equation}\label{eq:realvar}
\sigma_R^2=\frac{252}{n}\sum_{i=0}^{n-1}\left(\ln\frac{S_{i+1}}{S_i}\right)^{2},
\end{equation}
with an annualisation factor of $252$, and $N=N_v/(2\Sigma)$ is the variance
notional. When \eqref{eq:payoff} is positive the seller pays the buyer that
amount, and when it is negative the buyer pays the seller its absolute value. At
inception the value of the swap is close to zero, the strike $\Sigma$ being set
so that the swap is fair.

Trading variance swaps has several attractions. The first is directness: a
trader who believes that the volatility of an index will realize above the level
the strike implies can simply buy the swap, and need not hedge delta or any
other Greeks over the life of the trade in order to monetise that view. The
second is that the risk does not disappear when the market moves away from the
initial spot level. A delta-hedged at-the-money option loses its gamma and vega
when spot jumps away; on a variance swap the dollar gamma is constant
\citep{allen2006}, so the investor continues to earn the realized volatility
wherever spot travels.

In that light a volatility swap, whose payoff
\begin{equation}\label{eq:volswap}
N(\sigma_R-\Sigma)
\end{equation}
is linear in the difference of realized and strike volatilities, might seem the
better instrument. But volatility is not additive, so its present value cannot
be split cleanly into a realized part and a future expected part. Variance is
additive, and that additivity is what makes a variance swap position easy to
manage: a holder who wishes to close out can trade a new variance swap of the
same maturity with an appropriate variance notional, up to counterparty risk and
margin.

A further reason for the popularity of variance swaps is that they are
replicable with a strip of calls and puts. The payoff defined by
\eqref{eq:payoff} and \eqref{eq:realvar} is a \emph{discrete} variance swap; the
continuous counterpart replaces \eqref{eq:realvar} by
\begin{equation}\label{eq:contvar}
\sigma_R^2=\frac1T\int_0^T\sigma^2_{S_t}\,dt .
\end{equation}
For the continuous contract,
\begin{equation}\label{eq:replication}
\EE\left[\int_0^T\sigma^2_{S_t}dt\right]
=2\left[\int_0^F\frac{dK}{K^2}\tilde P(K)+\int_F^\infty\frac{dK}{K^2}\tilde C(K)\right],
\end{equation}
where $\tilde C(K)$ and $\tilde P(K)$ are undiscounted call and put premia and
$F$ is the forward price for delivery at $T$ \citep{gatheral2006}. Perfect
replication is of course unattainable in practice: strikes are discrete, the
listed options may lack liquidity, and for some strikes the bid/offer is very
wide. Volatility swaps admit a similar but more involved replication
\citep{carr2021}.

\begin{remark}
Equation \eqref{eq:replication} is the foundation of the VIX calculation, which
selects sufficiently liquid options in the first two monthly expiries and
evaluates a discrete version of \eqref{eq:replication}; see \citet{cboe2024b}.
\end{remark}

\subsection{The Heston model}

We work throughout in the Heston model. Since the instruments we price are
contingent claims, we state it under the pricing measure $\QQ$ and carry the
subscript from the outset; Section~\ref{sec:setup} adds the corresponding
specification under the physical measure, with the same $\gamma$ but different
mean-reversion parameters. Thus
\begin{equation}\label{eq:heston}
\begin{cases}
dS_t=rS_t\,dt+\sqrt{v_t}\,S_t\,dW^{(1)}_t,\\[2pt]
dv_t=\kappa_\QQ(\theta_\QQ-v_t)\,dt+\gamma\sqrt{v_t}\,dW^{(2)}_t,
\end{cases}
\qquad \EE\bigl[dW^{(1)}_tdW^{(2)}_t\bigr]=\varrho\,dt .
\end{equation}
For the variance process to remain strictly positive we impose the Feller
condition
\begin{equation}\label{eq:feller}
2\kappa_\QQ\theta_\QQ>\gamma^2 .
\end{equation}
The model of \citet{heston1993} is popular in academia and in practice: it
prices vanilla options in closed form, the stock follows a geometric Brownian
motion, and the variance process --- a CIR process --- exhibits mean reversion.
Only the variance equation of \eqref{eq:heston} will matter below, since the
instruments we consider are functions of realized variance alone.

\subsection{Contribution}

The existing literature on variance swaps concentrates on replication
\citep{carr2021} and on the fair strike under a given model
\citep{broadie2008,bernard2014}. The trading side is much less developed. Here
we study \emph{when} to put a position on and when to take it off, in the spirit
of the optimal-trading literature for other instruments
\citep{leung2014,cartea2015,li2016}. The question has become more than academic:
Cboe listed a redesigned S\&P~500 Variance Futures contract on 23 September 2024
\citep{cboe2024a,bartholomew2024}, so a position can now be entered and unwound
on exchange, with margin, at a spread narrower than the historical OTC market
offered.

This paper solves that problem in closed form for a perpetual, continuously
settled variance swap: both the level at which the position is opened and the
level at which it is closed are unique roots of smooth-pasting equations in the
confluent hypergeometric solutions of the CIR generator. The route there is a
reduction in three steps, and the reduction rather than the machinery is what is
new. The reduced problem, an affine reward on a CIR process entered and exited
once against transaction costs, is one \citet{dayanik2003} and \citet{leung2014}
have already solved; the work is in showing that a variance swap gives rise to a
problem of that form, and in reading what the resulting thresholds say.

\begin{enumerate}\itemsep3pt
\item \textbf{The problem must be posed under $\PP$.} Under the pricing measure
it has no content: the mark-to-market is a $\QQ$-martingale
(Lemma~\ref{lem:mart}), the discount factor applied to the stopping payoff
cancels against the one inside it (Proposition~\ref{prop:cancel}), and every
stopping time attains the supremum (Corollary~\ref{cor:degen}). Content enters
only by evaluating a $\QQ$-priced reward under the trader's measure.

\item \textbf{The reduction.} Three steps take it from there to an affine
reward. The accrued variance $A_t=\int_0^tv_s\,ds$ is a second state variable,
but it separates exactly (Proposition~\ref{prop:reduce}): the value splits as
$\varepsilon A+w(t,v)$, leaving one dimension and a running reward. This needs
the absence of discounting that Proposition~\ref{prop:cancel} supplies, so
funding enters as a running charge rather than a discount rate. With an affine
variance risk premium parametrised by $\lambda=\kappa_\QQ-\kappa_\PP$, the gain
from holding an instant longer collapses to
\[
\mathcal{A}\Phi(t,v)=\varepsilon\,\lambda\,\gQ(t)\,v+(s-c_m)
\]
(Proposition~\ref{prop:carry}), absorbing the maturity, the strike and the
costs; the sign of $\varepsilon\lambda$ then fixes whether the trader unwinds
into a spike or a collapse, giving the four regimes of
Table~\ref{tab:cases}, in two of which the decaying unwind spread is a
necessity rather than a refinement. Finally, on a perpetual, continuously
settled contract the running reward comes off and the entry strike cancels,
leaving $\tilde h=\alpha+\beta v$ with
\[
\beta=-\frac{\varepsilon\lambda}{(\delta+\kappa_\QQ)(\delta+\kappa_\PP)} ,
\]
whose sign reproduces the accruing contract's geometry exactly
(Corollaries~\ref{cor:pgeom}--\ref{cor:pnoK}).

\item \textbf{The solution.} An affine reward on CIR is the setting of
\citet{leung2014}, whose argument gives the unwind threshold as the unique root
of a smooth-pasting equation (Theorem~\ref{thm:exit}), with the slope $\beta$
surviving into it where the linear-payoff case has unity. The entry obstacle
$\rho=U-\tilde h-\sbar-\sebar$ is not affine, but the entry region is disjoint
from the exercise region (Proposition~\ref{prop:interval}) and so lies where the
exit value is already known, and the same argument delivers a second
smooth-pasting equation (Lemma~\ref{lem:rho} and Theorem~\ref{thm:entry}) with
one-sidedness a conclusion rather than an assumption. The dated contract, which
has no such closed form, is characterised by a variational inequality and solved
numerically.

\item \textbf{What the solution says.} Opening the position and closing it are
not mirror images.
This part is stated for the short: under the empirical sign of the premium $D$
is unbounded above, so $c_m^\ast=-\infty$ for a long and no carrying charge
makes one worth opening (Remark~\ref{rem:long}). For the short, the exercise
region is empty, and the position held to termination, exactly when
the carrying charge falls below $c_m^\ast=\delta(\sbar-\sup_v\varepsilon D)$
(Proposition~\ref{prop:holdforever}); raising the charge lifts the unwind
threshold $b^*$ and the entry threshold $d^*$ together (Lemma~\ref{lem:mono} and
Proposition~\ref{prop:dmono}), enlarging the set of levels on which she unwinds
and shrinking the set on which she opens, so the charges admitting both a
reachable unwind and a reachable entry form an interval
(Proposition~\ref{prop:interval}). Inside it the entry threshold is still
extreme --- across the admissible charges, and across expected contract lives
from one year to thirty-two (Table~\ref{tab:horizon}), it does not fall below
about the $95$th percentile --- because
a trader who is out of the market pays nothing to stay out. Charging idle
capital $50$ basis points a year moves it to the $58$th.

\item \textbf{How much of this needs the premium.} \citet{dewbecker2017} find
the price of variance risk concentrated at one and two months and
indistinguishable from zero beyond a quarter, which is not where a perpetual
puts its weight. Appendix~\ref{app:extras} shows the construction does not
require it: a state-dependent charge $c(v)=c_m+c_1v$ leaves every step intact
and adds one term to $\beta$, and with $\lambda=0$ the two-threshold structure
survives on the slope $c_1$ alone --- at the cost of conditioning the result on
the trader's book rather than on market prices.
\end{enumerate}

\subsection{Related literature}

Our formulation under $\PP$ with a $\QQ$-priced reward is standard in the
variance-risk-premium literature \citep{carrwu2009,bollerslev2009,egloff2010},
though that literature studies static or myopic portfolio weights rather than
timing. \citet{egloff2010} is closest in spirit: they model the term structure
of variance swap rates under a two-factor affine specification and derive optimal
variance swap \emph{investments}, but the position is rebalanced continuously
rather than stopped. The reduced problem this paper arrives at --- an affine reward on a CIR process,
entered and exited once, with transaction costs --- is one the literature has
already solved. \citet{dayanik2003} give the general one-dimensional theory, and
\citet{leung2014} apply it to precisely this pair of starting and stopping
problems on CIR, obtaining both thresholds from smooth-pasting equations in the
confluent hypergeometric solutions. Sections~\ref{sec:excessive}--\ref{sec:pair}
use their machinery, and we claim no novelty for it: our contribution is the
reduction that delivers a problem of that form from a variance swap, and what
the resulting thresholds turn out to say. Two features of the reward keep the
application from being immediate --- its slope is $\beta$ rather than unity, so
the smooth-pasting equations carry a factor absent in the linear-payoff case,
and the entry obstacle is not affine at all --- and both are dealt with where
they arise. \citet{li2016} treats VIX futures under mean reversion with regime
switching. The method of
proof --- characterising the value function as the smallest excessive majorant
of the reward and reading it off the smallest concave majorant of a transformed
obstacle --- goes back to \citet{dynkin1963} and \citet{dynkin1969}.

\begin{remark}
The phrase \emph{optimal variance stopping} appears in a separate literature
\citep{gad2020} where it denotes the maximisation of
$\mathrm{Var}(X_\tau)$ over stopping times. Despite the name, that problem is
unrelated to the present one; it is time-inconsistent, whereas ours is not.
\end{remark}

\subsection{Outline}

Section~\ref{sec:degen} shows that the unwind problem posed under $\QQ$ is
degenerate, which is what forces the physical measure on the formulation.
Section~\ref{sec:setup} sets the problem up under $\PP$ and proves that the
accrued variance separates exactly, leaving one state variable.
Section~\ref{sec:carry} derives the carry identity, the single term into which
the maturity, the strike and the costs collapse, and reads the geometry of the
exercise region off its sign. Section~\ref{sec:excessive} records the
excessive-function machinery of \citet{dayanik2003} that the reduced problem
will be handed to.

Section~\ref{sec:closed} introduces the perpetual, continuously settled
contract, shows that its reduced reward is affine, and solves the exit problem
in closed form. Section~\ref{sec:pair} does the same for the entry problem,
whose obstacle is not affine, and then asks when the resulting pair is of any
use to a trader: the answer is an interval of carrying charges, inside which the
entry threshold is still extreme unless idle capital is charged.
Section~\ref{sec:numerics} puts numbers to both, and solves the dated contract
numerically for comparison. Section~\ref{sec:conclusion} concludes, and names
the one empirical tension the construction does not resolve.
Appendix~\ref{app:proofs} collects the proofs, and Appendix~\ref{app:extras}
shows that a state-dependent carrying charge reproduces the whole structure with
no variance risk premium at all.

\section{The unwind problem is degenerate under \texorpdfstring{$\QQ$}{Q}}
\label{sec:degen}

Fix a variance swap of maturity $T$ struck at $\Sigma^2$ with variance notional
$N$, and write
\begin{equation}\label{eq:A}
A_t=\int_0^t v_s\,ds,\qquad
\gQ(t)=\frac{1-e^{-\kappa_\QQ(T-t)}}{\kappa_\QQ},
\end{equation}
so that $A_t$ is the variance accrued to date and $\gQ(t)$ is the duration of
the remaining variance exposure. Under \eqref{eq:heston},
\begin{equation}\label{eq:cond}
\EE^\QQ_t\left[\int_t^T v_s\,ds\right]=\theta_\QQ(T-t)+\gQ(t)\,(v_t-\theta_\QQ),
\end{equation}
which is the conditional form of the standard expression for the fair strike
\citep{gatheral2006,broadie2008,bernard2014}. Hence the time-$t$ value of the
swap is $\Pi^T_t=e^{-r(T-t)}\tfrac{N}{T}\bigl(A_t+\Phi_0(t,v_t)\bigr)$, where
\begin{equation}\label{eq:h}
\Phi_0(t,v)=\theta_\QQ(T-t)-T\Sigma^2+\gQ(t)\,(v-\theta_\QQ)
\end{equation}
collects the terms that do not depend on the variance accrued so far.

\begin{lemma}\label{lem:mart}
Under \eqref{eq:heston}, $\bigl(A_t+\Phi_0(t,v_t)\bigr)_{t\in[0,T]}$ is a
$\QQ$-martingale.
\end{lemma}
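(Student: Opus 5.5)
The plan is to show that $M_t:=A_t+\Phi_0(t,v_t)$ is a conditional expectation of a single integrable terminal variable. By \eqref{eq:cond} and \eqref{eq:h},
\[
M_t=A_t+\EE^\QQ_t\Bigl[\int_t^T v_s\,ds\Bigr]-T\Sigma^2
=\EE^\QQ_t\Bigl[\int_0^T v_s\,ds\Bigr]-T\Sigma^2 ,
\]
because $A_t$ is $\mathcal F_t$-measurable and can be moved inside the conditional expectation. The right-hand side is a Doob martingale in $t$, once we know that $\int_0^T v_s\,ds\in L^1(\QQ)$. Integrability follows from $\EE^\QQ[v_s]=\theta_\QQ+(v_0-\theta_\QQ)e^{-\kappa_\QQ s}$, which is bounded on $[0,T]$, together with Tonelli, since $v\ge0$. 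Adaptedness of $M$ is immediate.

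The only substantive step is \eqref{eq:cond} itself, which the paper quotes as standard. For a self-contained argument I would derive it. Take expectations in the integrated CIR equation, after checking that the stochastic integral $\int\gamma\sqrt{v}\,dW^{(2)}$ is a true martingale. That holds because $\EE\int_0^T v_s\,ds<\infty$. It follows that $m(s)=\EE^\QQ_t[v_s]$ solves $m'=\kappa_\QQ(\theta_\QQ-m)$ with $m(t)=v_t$. Integrating $m(s)=\theta_\QQ+(v_t-\theta_\QQ)e^{-\kappa_\QQ(s-t)}$ over $[t,T]$ then gives $\theta_\QQ(T-t)+\gQ(t)(v_t-\theta_\QQ)$. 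The Markov property of $v$ justifies conditioning on $v_t$ alone.

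As a cross-check, I would also verify the result directly with Itô's formula. We have $dM_t=v_t\,dt+\partial_t\Phi_0\,dt+\gQ(t)\,dv_t$, and $\partial_t\Phi_0=-\theta_\QQ+\gQ'(t)(v-\theta_\QQ)$ with $\gQ'=-e^{-\kappa_\QQ(T-t)}=\kappa_\QQ\gQ-1$. The $dt$ terms are
\[
v-\theta_\QQ+(\kappa_\QQ\gQ-1)(v-\theta_\QQ)-\gQ\kappa_\QQ(v-\theta_\QQ)=0 ,
\]
so $dM_t=\gQ(t)\gamma\sqrt{v_t}\,dW^{(2)}_t$. This is a local martingale, and it is a true martingale because $\gQ$ is bounded and $\EE\int_0^T v\,dt<\infty$.

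The main obstacle, mild as it is, is upgrading from local martingale to martingale. Both routes reduce it to the first-moment bound on $v$. The Feller condition \eqref{eq:feller} is not needed for this step.
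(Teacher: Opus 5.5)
Your proof is correct, and your Itô cross-check is exactly the paper's proof. It computes the drift of $A_t+\Phi_0(t,v_t)$, cancels it using $\gQ'=\kappa_\QQ\gQ-1$, and upgrades the local martingale to a martingale via boundedness of $\gQ$ and $\EE^\QQ\int_0^Tv_s\,ds<\infty$.

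Your main route is genuinely different. You read $\Phi_0(t,v_t)+T\Sigma^2$ as $\EE^\QQ_t\bigl[\int_t^Tv_s\,ds\bigr]$ through \eqref{eq:cond}, which turns the process into the Doob martingale $\EE^\QQ_t[A_T]-T\Sigma^2$. Its only analytic input is $A_T\in L^1(\QQ)$. This is the rigorous form of the paper's remark after the lemma that the property is not an accident of Heston. It does, however, move all the model-specific work into \eqref{eq:cond}, which the paper only quotes and which you have to rederive through the moment ODE; that is the same cancellation in another guise. The Itô route runs the implication the other way. It needs no conditional-expectation formula, and since $\gQ(T)=0$ gives $\Phi_0(T,v)=-T\Sigma^2$, it actually yields \eqref{eq:cond} as a corollary of the martingale property.

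One small repair is needed at the step you single out as the obstacle, because as written it is circular. You use the explicit mean of $v_s$ to obtain $\EE^\QQ\int_0^Tv_s\,ds<\infty$, yet that same bound is what licenses taking expectations in the integrated SDE to compute the mean. Break the circle by localising at $\tau_n=\inf\{t:v_t\ge n\}$, where the stopped stochastic integral is a true martingale. The drift is at most $\kappa_\QQ\theta_\QQ$ on $\{v\ge0\}$, so $\EE^\QQ[v_{t\wedge\tau_n}]\le v_0+\kappa_\QQ\theta_\QQ t$. Fatou then transfers this bound to $\EE^\QQ[v_t]$, which gives the integrability independently of the exact formula. The paper asserts the integrability bound without proof, so this is a matter of exposition rather than a divergence from its argument.
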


\begin{proof}
Since $\gQ'(t)=-e^{-\kappa_\QQ(T-t)}$ and $\kappa_\QQ \gQ(t)=1-e^{-\kappa_\QQ(T-t)}$,
It\^o's formula and $dA_t=v_t\,dt$ give
\[
\begin{aligned}
dh&=\Bigl[v_t-\theta_\QQ+\gQ'(t)(v_t-\theta_\QQ)
     +\gQ(t)\kappa_\QQ(\theta_\QQ-v_t)\Bigr]dt
    +\gQ(t)\gamma\sqrt{v_t}\,dW^{(2)}_t\\
  &=(v_t-\theta_\QQ)\bigl[1+\gQ'(t)-\kappa_\QQ \gQ(t)\bigr]dt
    +\gQ(t)\gamma\sqrt{v_t}\,dW^{(2)}_t,
\end{aligned}
\]
and $1+\gQ'(t)-\kappa_\QQ \gQ(t)=1-e^{-\kappa_\QQ(T-t)}-\bigl(1-e^{-\kappa_\QQ(T-t)}\bigr)=0$.
The stochastic integral is a true martingale because $\gQ$ is bounded on $[0,T]$
and $\EE^\QQ\int_0^Tv_s\,ds<\infty$.
\end{proof}

Lemma~\ref{lem:mart} is not an accident of the Heston specification: $h$ is the
undiscounted forward value of the swap, so it must be a $\QQ$-martingale in any
arbitrage-free model. Its consequences for the timing problem are immediate.

\begin{proposition}\label{prop:cancel}
Let $\tau\le T$ be a stopping time and let the trader unwind at $\tau$ for the
prevailing mark-to-market $\Pi^T_\tau$. Then the time-$0$ value of the strategy is
\[
\EE^\QQ\bigl[e^{-r\tau}\Pi^T_\tau\bigr]
=e^{-rT}\tfrac{N}{T}\,\EE^\QQ\bigl[A_\tau+\Phi_0(\tau,v_\tau)\bigr],
\]
so the objective carries no effective discounting.
\end{proposition}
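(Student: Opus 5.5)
The identity is a substitution followed by a single use of Lemma~\ref{lem:mart}. First I insert the definition of the mark-to-market at the random time $\tau$. On $\{\tau=t\}$ we have $\Pi^T_t=e^{-r(T-t)}\tfrac{N}{T}\bigl(A_t+\Phi_0(t,v_t)\bigr)$, so pathwise
\[
e^{-r\tau}\Pi^T_\tau=e^{-r\tau}e^{-r(T-\tau)}\tfrac{N}{T}\bigl(A_\tau+\Phi_0(\tau,v_\tau)\bigr)=e^{-rT}\tfrac{N}{T}\bigl(A_\tau+\Phi_0(\tau,v_\tau)\bigr).
\]
The random discount factor $e^{-r\tau}$ cancels exactly against the $e^{+r\tau}$ hidden in the forward-to-spot conversion inside $\Pi^T_\tau$. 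What remains is the deterministic constant $e^{-rT}N/T$. This uses only that $r$ is constant, as in \eqref{eq:heston}.

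Second, I take $\QQ$-expectations and pull the deterministic constant outside. This gives the displayed formula, provided $A_\tau+\Phi_0(\tau,v_\tau)$ is integrable. This integrability is the only real point to check.

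For it, I would bound $|A_\tau|\le A_T$. I would also bound $|\Phi_0(\tau,v_\tau)|\le \theta_\QQ T+T\Sigma^2+\sup_{t\le T}\gQ(t)\,(\sup_{t\le T}v_t+\theta_\QQ)$, using that $\gQ$ is bounded on $[0,T]$. Then $\EE^\QQ A_T<\infty$ holds by \eqref{eq:cond}, and $\EE^\QQ\sup_{t\le T}v_t<\infty$ is standard for CIR: it follows from Doob's/BDG inequality applied to the martingale part $\gamma\int\sqrt{v_s}\,dW^{(2)}_s$, whose quadratic variation has finite mean.

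Alternatively, and more in the spirit of the paper, Lemma~\ref{lem:mart} shows that $M_t=A_t+\Phi_0(t,v_t)$ is a $\QQ$-martingale on the compact interval $[0,T]$. Hence $M_t=\EE^\QQ_t[M_T]$ is uniformly integrable, and optional sampling yields $M_\tau\in L^1$ with $\EE^\QQ M_\tau=M_0$.

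The phrase ``no effective discounting'' is then read off directly: the objective is a fixed positive multiple of $\EE^\QQ[M_\tau]$, with no $\tau$-dependent discount. Optional sampling further shows that this quantity equals $M_0$ for every $\tau$, which is the degeneracy recorded in Corollary~\ref{cor:degen}.
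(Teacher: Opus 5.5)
Your proof is correct and follows the paper's route: the factors $e^{-r\tau}e^{-r(T-\tau)}$ cancel pathwise to $e^{-rT}$ inside $\Pi^T_\tau$, and you then take $\QQ$-expectations of a deterministic multiple of $A_\tau+\Phi_0(\tau,v_\tau)$. Your integrability check, whether via $A_\tau\le A_T$ and $\EE^\QQ\sup_{t\le T}v_t<\infty$ or via Lemma~\ref{lem:mart} and optional sampling at bounded $\tau$, is sound and is left implicit in the paper; note, though, that the identity itself needs no appeal to Lemma~\ref{lem:mart}, which the paper uses only for Corollary~\ref{cor:degen}.
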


\begin{proof}
$e^{-r\tau}\Pi^T_\tau=e^{-r\tau}e^{-r(T-\tau)}\tfrac{N}{T}\bigl(A_\tau+\Phi_0(\tau,v_\tau)\bigr)
=e^{-rT}\tfrac{N}{T}\bigl(A_\tau+\Phi_0(\tau,v_\tau)\bigr)$.
\end{proof}

\begin{corollary}\label{cor:degen}
$\sup_{\tau\le T}\EE^\QQ\bigl[e^{-r\tau}\Pi^T_\tau\bigr]=e^{-rT}\tfrac{N}{T}\Phi_0(0,v_0)=\Pi^T_0$,
and the supremum is attained by every stopping time.
\end{corollary}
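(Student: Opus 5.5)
By Proposition~\ref{prop:cancel}, for every stopping time $\tau\le T$,
\[
\EE^\QQ\bigl[e^{-r\tau}\Pi^T_\tau\bigr]=e^{-rT}\tfrac{N}{T}\,\EE^\QQ\bigl[M_\tau\bigr],
\qquad M_t:=A_t+\Phi_0(t,v_t).
\]
So the whole problem comes down to showing that $\EE^\QQ[M_\tau]=M_0$ for every such $\tau$. Once that holds, every stopping time produces the same value, namely $e^{-rT}\tfrac{N}{T}M_0$. The supremum therefore equals this common value and is attained by every $\tau$. Since $A_0=0$, we have $M_0=\Phi_0(0,v_0)$, and the formula for $\Pi^T_t$ at $t=0$ then gives $e^{-rT}\tfrac{N}{T}\Phi_0(0,v_0)=\Pi^T_0$.

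The key step is optional stopping for $M$, which Lemma~\ref{lem:mart} shows is a $\QQ$-martingale on $[0,T]$. Because $\tau$ is bounded by the deterministic time $T$, optional stopping for bounded stopping times gives $\EE^\QQ[M_\tau]=\EE^\QQ[M_0]=M_0$ directly. No uniform integrability argument beyond the bounded horizon is needed. If one prefers an explicit argument, the lemma's proof supplies one. There, $M_t=M_0+\int_0^t\gQ(s)\gamma\sqrt{v_s}\,dW^{(2)}_s$, and the integrand is square-integrable on $[0,T]$ because $\gQ$ is bounded and $\EE^\QQ\int_0^Tv_s\,ds<\infty$. The stopped integral $\int_0^{\tau}$ is then an $L^2$-bounded martingale, and it has zero mean.

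I expect the only real point of care to be measurability and integrability of $M_\tau$ at a random time. This is already covered: $M$ has continuous paths, hence is progressively measurable, so $M_\tau$ is a random variable. It is integrable because $|M_\tau|\le A_T+\sup_{t\le T}|\Phi_0(t,v_t)|$, and by Doob's inequality applied to the stochastic integral part this bound has finite expectation. So no step presents a genuine obstacle. The corollary is a direct consequence of Lemma~\ref{lem:mart} and Proposition~\ref{prop:cancel}.
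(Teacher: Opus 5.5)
Your proof is correct and is the paper's argument: Proposition~\ref{prop:cancel} reduces every candidate value to $e^{-rT}\tfrac{N}{T}\EE^\QQ\bigl[A_\tau+\Phi_0(\tau,v_\tau)\bigr]$. Lemma~\ref{lem:mart} with optional sampling at the bounded time $\tau\le T$ then makes this equal to $e^{-rT}\tfrac{N}{T}\Phi_0(0,v_0)=\Pi^T_0$ for every $\tau$, using $A_0=0$; your extra integrability checks are sound but add nothing beyond what bounded optional sampling for a continuous martingale already gives.
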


\begin{proof}
Combine Proposition~\ref{prop:cancel}, Lemma~\ref{lem:mart} and optional
sampling.
\end{proof}

The moral is worth stating plainly. A discount factor placed on the stopping
payoff is exactly offset by the discounting already inside the mark-to-market,
so no genuine impatience is introduced by discounting, and once it is removed
the reward is a martingale. Any non-trivial timing problem for a variance swap
must break the martingale property, and there are only a few ways to do so: a
drift wedge between the pricing and physical measures, transaction costs, a
non-linear objective, or model uncertainty. We take the first two, which are
the ones a desk actually faces, and which have the further advantage of
preserving the tractable structure of the problem.

\section{Formulation under the physical measure}
\label{sec:setup}

\subsection{Measures and the variance risk premium}

Alongside the $\QQ$-dynamics of \eqref{eq:heston}, let the variance process
satisfy
\begin{equation}\label{eq:PQ}
dv_t=\kappa_\PP(\theta_\PP-v_t)\,dt+\gamma\sqrt{v_t}\,dW^\PP_t
\end{equation}
under the trader's measure $\PP$. The two are related by the usual affine market
price of variance risk, which preserves the CIR form and the long-run product;
it is maintained throughout the body, and what depends on it is exactly
this:
\begin{equation}\label{eq:link}
\kappa_\PP\theta_\PP=\kappa_\QQ\theta_\QQ,
\qquad
\lambda:=\kappa_\QQ-\kappa_\PP .
\end{equation}
We assume \eqref{eq:feller} and its $\PP$-analogue $2\kappa_\PP\theta_\PP>\gamma^2$.
Note that $\lambda$ is a difference of mean-reversion \emph{speeds}, not of
long-run levels; it is the link that ties the two together, since with
$\kappa_\PP\theta_\PP=\kappa_\QQ\theta_\QQ$ fixed a slower pricing reversion is
the same thing as a higher pricing level. Under the link the single parameter
$\lambda$ therefore carries the premium, and $\lambda<0$ is equivalent to
$\theta_\QQ>\theta_\PP$: the empirically usual configuration, in which variance
swap strikes sit above forecast realized variance
\citep{carrwu2009,bollerslev2009}. Without the link the two are independent, and
$\lambda=0$ no longer forces $\theta_\QQ=\theta_\PP$ --- a case
Remark~\ref{rem:nolink} takes up. We do not insist
on that sign. In applications $\PP$ may equally be read as the trader's own
forecasting model rather than the objective measure, in which case $\lambda$
measures the wedge between her view and the market's, and either sign is
sensible.

\subsection{Position, costs and objective}

Let $\varepsilon=+1$ for a long variance swap and $\varepsilon=-1$ for a short
one. Both sides are carried through Section~\ref{sec:closed}, which is why the
results below are stated in $\varepsilon$ and why Table~\ref{tab:cases} has four
regimes rather than two; Section~\ref{sec:pair} is confined to the short, for
the reason Remark~\ref{rem:long} gives. Unwinding is not free: the trader crosses a spread on the strike, applied
to the vega that remains, so the cost at time $t$ is $s\,(T-t)$ with $s\ge0$
quoted in variance units per unit of remaining maturity. This is the cost
structure of the OTC market and, at a narrower level of $s$, of the listed
future. We also allow a running carrying charge $c_m$ per unit time. A positive $c_m$
is the funding cost of posted margin or an internal capital charge, and is the
case of interest for a short position. We do not restrict its sign: a negative
$c_m$ is a convenience yield, the benefit of holding a claim that pays in
exactly the states where the rest of a book does not. Nothing below uses the
sign of $c_m$ --- it enters the reward as
a constant, which is all that Lemma~\ref{lem:mono} requires --- so the results
hold for either. That it is a \emph{constant} does matter, and we keep it so
until Appendix~\ref{app:extras}, which allows $c(v)=c_m+c_1v$ and shows that the
slope $c_1$ is a second source of optionality independent of the premium. Two costs attach to the trader rather than to the position: an
entry spread $s_e\ge0$, paid when a position is put on, and an opportunity cost
$c_0\ge0$ of idle capital, charged per unit time while she holds none. Neither
appears in the unwind problem, which begins with a position already held, and we
therefore set $c_0=0$ until Section~\ref{sec:idle}. That is a modelling choice
and not an innocuous one: Section~\ref{sec:idle} shows that the entry rule is
acutely sensitive to it while the unwind rule does not involve it at all, and
this asymmetry is the main conclusion of Section~\ref{sec:pair}. The full cost
vector is therefore $(s,s_e,c_m,c_0)$, of which the unwind problem uses $s$ and
$c_m$ alone. Both spreads here are rates, because the dated swap is closed by
trading against it and the spread is crossed on the strike over the vega that
remains. The perpetual of Section~\ref{sec:closed} is closed differently, by
tear-up, and carries flat concessions $\sbar$ and $\sebar$ in their place;
Remark~\ref{rem:spreadconv} sets the two conventions side by side.

Discarding the constant $e^{-rT}N/T$ of Proposition~\ref{prop:cancel}, the
trader's unwind problem --- she already holds the position, and $\tau$ is the
time at which she closes it --- is
\begin{equation}\label{eq:obj}
\sup_{\tau\in[0,T]}\ \EE^\PP\Bigl[\varepsilon\bigl(A_\tau+\Phi_0(\tau,v_\tau)\bigr)
-s\,(T-\tau)-c_m\tau\Bigr],
\end{equation}
with $\Phi_0$ as in \eqref{eq:h}.
Note the asymmetry, which is the whole point: the reward $\Phi_0$ is a
mark-to-market and is therefore built from $\QQ$-parameters, while the
expectation in \eqref{eq:obj} is taken under $\PP$.

\subsection{The accrued variance separates}

The state of \eqref{eq:obj} is nominally $(t,A_t,v_t)$. Because $dA_t=v_t\,dt$,
$A$ cannot be treated as a constant. It can, however, be removed exactly.

Equation \eqref{eq:obj} is a single number, the value at inception. To speak of
a value function we restart the problem at an arbitrary state: for
$(t,A,v)\in[0,T]\times\mathbb{R}_+\times(0,\infty)$, let
\begin{equation}\label{eq:W}
\mathcal{V}(t,A,v)=\sup_{\tau\in[t,T]}\EE^\PP_{t,v}
\left[\varepsilon\left(A+\int_t^\tau v_u\,du+\Phi_0(\tau,v_\tau)\right)
-s(T-\tau)-c_m(\tau-t)\right],
\end{equation}
the value to a trader who arrives at time $t$ holding the position with $A$ of
variance already accrued and current variance $v$, and who chooses when to
unwind. By construction $\mathcal{V}(0,0,v_0)$ is the value of \eqref{eq:obj}.

\begin{proposition}\label{prop:reduce}
For every $(t,A,v)$, the value function \eqref{eq:W} satisfies
$\mathcal{V}(t,A,v)=\varepsilon A+w(t,v)$, where
\begin{equation}\label{eq:w}
w(t,v)=\sup_{\tau\in[t,T]}\EE^\PP_{t,v}
\left[\int_t^\tau(\varepsilon v_u-c_m)\,du+\Phi(\tau,v_\tau)\right],
\qquad
\Phi(t,v)=\varepsilon\Phi_0(t,v)-s(T-t).
\end{equation}
\end{proposition}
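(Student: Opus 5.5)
The plan is to compare the two suprema term by term for each admissible stopping time. Fix $(t,A,v)$ and a stopping time $\tau\in[t,T]$. Inside the expectation in \eqref{eq:W}, I will first split $\varepsilon(A+\int_t^\tau v_u\,du)$ into $\varepsilon A$ plus $\int_t^\tau\varepsilon v_u\,du$. Next I will write $-c_m(\tau-t)=-\int_t^\tau c_m\,du$. Finally I will group $\varepsilon\Phi_0(\tau,v_\tau)-s(T-\tau)$ as $\Phi(\tau,v_\tau)$, exactly as in \eqref{eq:w}. The random variable inside the expectation is then identically
\[
\varepsilon A+\int_t^\tau(\varepsilon v_u-c_m)\,du+\Phi(\tau,v_\tau).
\]

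The one point needing care is the dependence on $A$. Under $\PP_{t,v}$ the quantity $A$ is a deterministic constant; it is the restart state, and the future of $A$ is carried by $\int_t^\tau v_u\,du$. The process $v$ is autonomous: its law given $v_t=v$ does not depend on $A$, and neither does the class of stopping times of its filtration on $[t,T]$. So $\varepsilon A$ comes out of the expectation, and out of the supremum, as an additive constant. This gives $\mathcal V(t,A,v)=\varepsilon A+w(t,v)$, and taking the supremum over the same class of $\tau$ on both sides makes the identity exact, not merely an inequality.

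The main obstacle, such as it is, is integrability: I need every expectation to be well defined so that linearity of expectation can be applied. I will check that $\EE^\PP_{t,v}\int_t^T v_u\,du<\infty$, which holds for CIR since $\EE^\PP v_u$ is bounded on $[t,T]$. I will also use that $\Phi$ is affine in $v$ with coefficients bounded on $[0,T]$, with $\gQ\le T$. Together these give $\sup_{\tau}|{\cdot}|\le C\bigl(1+\int_t^T v_u\,du+\sup_{u\in[t,T]}v_u\bigr)$. The last term is integrable by Doob's inequality applied to the martingale part of $v$, or by standard CIR moment bounds. Hence each expectation is finite, linearity applies, and the separation follows.
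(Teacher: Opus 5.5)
Your proposal is correct and follows essentially the paper's proof. For each fixed $\tau$ the payoff in \eqref{eq:W} is $\varepsilon A$ plus a quantity free of $A$, with no discount factor multiplying $A$ (the point the paper attributes to Proposition~\ref{prop:cancel}), so $\varepsilon A$ passes through the expectation and the supremum; your integrability check is correct, adds care the paper leaves implicit, and is not strictly needed, since adding a finite constant commutes with any well-defined expectation.
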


\begin{proof}
By Proposition~\ref{prop:cancel} no discount factor multiplies $A$ in
\eqref{eq:W}, so $\varepsilon A$ is a constant and passes outside the supremum,
leaving precisely \eqref{eq:w}.
\end{proof}

The accrued variance therefore drops out of the \emph{decision} and not merely
out of the value: the optimal $\tau$ depends on $(t,v)$ alone, so the free
boundary is a curve in the $(t,v)$ plane rather than a surface in $(t,v,A)$. What remains,
\eqref{eq:w}, is a one-dimensional optimal stopping problem in the single state
$v$, with a \emph{running reward} $\varepsilon v-c_m$ and a time-dependent
obstacle $\Phi$.

\begin{remark}\label{rem:nodiscount}
The separation depends on the absence of a discount factor, which is not an
assumption but a consequence of Proposition~\ref{prop:cancel}. It is worth
seeing why it cannot be dispensed with. Suppose a factor $e^{-\delta(\tau-t)}$
were inserted in \eqref{eq:W} and one sought a solution of the same additive
form, $\mathcal{V}(t,A,v)=a(t)A+w(t,v)$. Because $dA_t=v_t\,dt$, in the
continuation region $\mathcal{V}$ solves
\[
\partial_t\mathcal{V}+v\,\partial_A\mathcal{V}+\LP\mathcal{V}-\delta\mathcal{V}=0
\]
(we take $c_m=0$ here, which changes nothing below), and substituting the ansatz
gives
\[
\bigl[a'(t)-\delta a(t)\bigr]A
\;+\;\bigl[\partial_tw+a(t)\,v+\LP w-\delta w\bigr]\;=\;0 .
\]
Since $A$ is a free coordinate, the two brackets must vanish separately. The
first forces $a(t)=a(0)e^{\delta t}$. On the exercise boundary, however, value
matching against the payoff $\varepsilon A+\Phi(t,v)$ forces $a(t)=\varepsilon$
at every $t$ at which the boundary is non-empty. An exponential and a constant
agree only if $\delta=0$, so the ansatz fails.

The economics is more transparent than the algebra. With no discounting, a unit
of variance already accrued is worth the same whenever the position is closed,
so it cannot influence the timing. With discounting its present value is
$e^{-\delta(\tau-t)}$ per unit and so depends on \emph{when} the trader stops: a
holder sitting on a large accrued balance has a reason to realise it sooner, the
exercise boundary acquires a genuine dependence on $A$, and the state no longer
reduces. Funding should therefore be modelled as the running charge $c_m$ rather
than as a discount rate --- which is in any case the more faithful description
of a listed variance future, where the cost of the position is the financing of
variation margin and not the impatience of its holder.
\end{remark}

\section{The carry identity}
\label{sec:carry}

Write $\LP=\tfrac12\gamma^2v\,\partial_{vv}+\kappa_\PP(\theta_\PP-v)\partial_v$
for the generator of $v$ under $\PP$. In \eqref{eq:w} the value function $w$
must everywhere dominate the payoff $\Phi$ from unwinding at once; in the
variational formulation $w$ rests on $\Phi$ as on an \emph{obstacle}, which is
the American-option intrinsic value by another name. What governs where $w$
separates from that obstacle is the drift of the total worth of a trader who has
not yet unwound. Define
\begin{equation}\label{eq:Y}
Y_t:=\underbrace{\Phi(t,v_t)}_{\text{value if closed now}}
\;+\;\underbrace{\int_0^t(\varepsilon v_u-c_m)\,du}_{\text{banked while holding}} .
\end{equation}
Because $\Phi$ is $C^{1,2}$ and the integral is of finite variation, It\^o's
formula applied to the first term and ordinary differentiation to the second
give
\begin{equation}\label{eq:dY}
dY_t=\bigl[\partial_t\Phi+\LP\Phi+(\varepsilon v_t-c_m)\bigr](t,v_t)\,dt
\;+\;\Phi_v(t,v_t)\,\gamma\sqrt{v_t}\,dW^\PP_t .
\end{equation}
The bracket is the object of interest, and we name it
\begin{equation}\label{eq:APhi}
\mathcal{A}\Phi:=\partial_t\Phi+\LP\Phi+(\varepsilon v-c_m),
\end{equation}
so that $dY_t=\mathcal{A}\Phi(t,v_t)\,dt+dM_t$ with $M$ a local martingale. In
operator terms, $\partial_t+\LP$ is the generator of the space--time process
$(t,v_t)$, and the running reward is added on top of it.

Two readings follow, and both are used below. Probabilistically, taking
expectations in \eqref{eq:dY} over $[t,t+h]$,
\[
\EE^\PP_t\bigl[Y_{t+h}\bigr]-Y_t
=\EE^\PP_t\int_t^{t+h}\mathcal{A}\Phi(u,v_u)\,du
\;\approx\;\mathcal{A}\Phi(t,v_t)\,h .
\]
Now $Y_t$ is what the trader collects by unwinding at once, while
$\EE^\PP_t[Y_{t+h}]$ is what she collects by committing to unwind at $t+h$. So
$\mathcal{A}\Phi\,h$ is the expected gain from deferring by $h$: the expected
move in the mark-to-market, plus the variance accruing to her while she waits,
less the carrying cost. The running reward is part of it, and must be --- the
accrual is what distinguishes a variance swap from a plain forward, and omitting
it would misplace the boundary.

Analytically, $\mathcal{A}\Phi$ is the \emph{only} channel through which the
contract enters the free-boundary problem. The value $w$ of \eqref{eq:w}
satisfies $\min\{-\partial_tw-\LP w-(\varepsilon v-c_m),\ w-\Phi\}=0$. Writing
$u:=w-\Phi\ge0$ for the excess value and substituting $w=u+\Phi$, the first
branch becomes
\[
-\partial_tu-\LP u
-\bigl[\partial_t\Phi+\LP\Phi+\varepsilon v-c_m\bigr]
=-\partial_tu-\LP u-\mathcal{A}\Phi ,
\]
so the problem collapses to
\begin{equation}\label{eq:viu}
\min\bigl\{-\partial_tu-\LP u-\mathcal{A}\Phi,\ u\bigr\}=0
\quad\text{on }(0,T)\times(0,\infty),
\qquad u(T,\cdot)=0,
\end{equation}
whose explicit form, once $\mathcal{A}\Phi$ is computed in
Proposition~\ref{prop:carry} below, is the equation solved numerically in
Section~\ref{sec:finite}. Everything else in \eqref{eq:viu} is generic: the
operator is the $\PP$-generator of the variance alone, the obstacle is $u\ge0$,
and the terminal condition is zero. The contract therefore reaches the
free-boundary problem only through $\mathcal{A}\Phi$, and
Proposition~\ref{prop:carry} shows that the strike does not reach it at all ---
$\Sigma$ enters $\Phi$ only as the additive constant $-\varepsilon T\Sigma^2$,
which $\partial_t$ and $\LP$ both annihilate. For $\LP$ this is the work of
Proposition~\ref{prop:cancel}: with the discounting cancelled there is no term
in $u$ itself, and every term of $\LP$ differentiates, so constants are killed.
Under genuine discounting the operator would carry $+ru$, which returns
$-r\varepsilon T\Sigma^2$ and puts the strike back into the equation. A single
solve of \eqref{eq:viu} therefore serves every strike and every level of
accrued variance, which is what Section~\ref{sec:pair} needs when it poses the
entry problem for the dated contract.

\begin{proposition}\label{prop:carry}
Under \eqref{eq:PQ}--\eqref{eq:link},
\begin{equation}\label{eq:carry}
\mathcal{A}\Phi(t,v)\;=\;\varepsilon\,\lambda\,\gQ(t)\,v\;+\;(s-c_m).
\end{equation}
\end{proposition}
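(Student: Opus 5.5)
Since $\Phi(t,v)=\varepsilon\Phi_0(t,v)-s(T-t)$ is affine in $v$ with smooth time coefficients, the plan is to compute $\mathcal{A}\Phi$ directly from \eqref{eq:APhi}, and to organise the computation so that it reuses the cancellation in Lemma~\ref{lem:mart} rather than redoing it. The central idea is to split the $\PP$-generator as $\LP=\mathcal{L}_\QQ+(\LP-\mathcal{L}_\QQ)$, where $\mathcal{L}_\QQ=\tfrac12\gamma^2v\,\partial_{vv}+\kappa_\QQ(\theta_\QQ-v)\partial_v$. The two generators share the diffusion coefficient, so their difference is a pure first-order drift term:
\[
(\LP-\mathcal{L}_\QQ)f=\bigl[\kappa_\PP\theta_\PP-\kappa_\QQ\theta_\QQ-(\kappa_\PP-\kappa_\QQ)v\bigr]\partial_vf=\lambda\,v\,\partial_vf,
\]
where the equality uses the link \eqref{eq:link}.

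\emph{Step 1.} Collect the $\Phi_0$ part. The drift computed in the proof of Lemma~\ref{lem:mart} shows
\[
\partial_t\Phi_0+\mathcal{L}_\QQ\Phi_0+v=\theta_\QQ,
\]
since the bracket there, $(v-\theta_\QQ)[1+\gQ'-\kappa_\QQ\gQ]$, vanishes. Note that $\partial_t\Phi_0$ contains the $-\theta_\QQ$ from $\theta_\QQ(T-t)$, and the lemma's drift of $A+\Phi_0$ was written out precisely to absorb it. Multiplying by $\varepsilon$ gives
\[
\varepsilon\bigl(\partial_t\Phi_0+\mathcal{L}_\QQ\Phi_0\bigr)+\varepsilon v=\varepsilon\theta_\QQ.
\]

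\emph{Step 2.} The spread term $-s(T-t)$ has time derivative $+s$ and is annihilated by both generators. The running reward contributes $\varepsilon v-c_m$, of which $\varepsilon v$ has already been used in Step~1.

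\emph{Step 3.} The measure change acts only on $\partial_v\Phi=\varepsilon\gQ(t)$, giving the correction $\lambda v\,\varepsilon\gQ(t)$.

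Summing the three steps,
\[
\mathcal{A}\Phi=\varepsilon\theta_\QQ+s-c_m+\varepsilon\lambda\gQ(t)v.
\]
This contains a stray $\varepsilon\theta_\QQ$ that does not appear in \eqref{eq:carry}, so the main obstacle is a careful bookkeeping check of the constant terms. I would recheck the lemma's drift: $d\Phi_0$ contains $-\theta_\QQ\,dt$ from $\partial_t[\theta_\QQ(T-t)]$, and the lemma writes the total drift as $(v-\theta_\QQ)[\cdots]$, so the $-\theta_\QQ$ is already combined with $+v$. Redoing this cleanly, the expression $\partial_t\Phi_0+v+\mathcal{L}_\QQ\Phi_0$ equals exactly the lemma's drift, which is $0$, not $\theta_\QQ$. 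Hence $\varepsilon(\partial_t\Phi_0+\mathcal{L}_\QQ\Phi_0+v)=0$ and no constant survives.

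I would therefore present the argument as: ($i$) quote the zero drift from Lemma~\ref{lem:mart} in the form $(\partial_t+\mathcal{L}_\QQ)\Phi_0+v=0$; ($ii$) add the $s$ and $-c_m$ contributions; ($iii$) add the measure-change correction $\lambda v\,\partial_v(\varepsilon\Phi_0)=\varepsilon\lambda\gQ(t)v$. This yields \eqref{eq:carry}. Finally, I would remark that $\Sigma$ has disappeared from the result because it enters $\Phi$ only as a constant, which every term of the operator annihilates.
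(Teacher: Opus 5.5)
Your final argument is correct, and it organises the computation differently from the paper. The paper computes $\partial_t\Phi$ and $\LP\Phi$ directly. Using $1+\gQ'=\kappa_\QQ\gQ$, it collects the terms into $\varepsilon\gQ\bigl[(\kappa_\QQ-\kappa_\PP)(v-\theta_\QQ)+\kappa_\PP\theta_\PP-\kappa_\PP\theta_\QQ\bigr]+s-c_m$. It invokes the link only at the end, to collapse the bracket to $\lambda v$, and never routes anything through Lemma~\ref{lem:mart}.

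You instead split $\LP=\mathcal{L}_\QQ+\lambda v\,\partial_v$ up front. You then reuse the identically vanishing drift $(\partial_t+\mathcal{L}_\QQ)\Phi_0+v=0$ from Lemma~\ref{lem:mart}, which gives $\mathcal{A}\Phi=\varepsilon(\LP-\mathcal{L}_\QQ)\Phi_0+s-c_m$. What your route buys is structure:
\begin{itemize}
\item The carry is visibly the drift wedge between the two measures applied to the position's delta $\varepsilon\gQ(t)$. This reading survives in any diffusion model where the forward value $A+\Phi_0$ is a $\QQ$-martingale, which the paper notes it must be.
\item It isolates exactly what the link does. It removes the constant part of the wedge; without the link you would carry the extra term $\varepsilon\gQ(t)(\kappa_\PP\theta_\PP-\kappa_\QQ\theta_\QQ)$.
\item The disappearance of $\Sigma$ is automatic, since the strike sits inside the term that vanishes identically.
\end{itemize}
The paper's route is self-contained and does not depend on the lemma.

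One editorial point. Delete the intermediate claim that $\partial_t\Phi_0+\mathcal{L}_\QQ\Phi_0+v=\theta_\QQ$ and the ``stray constant'' discussion that follows it. That claim is false, as you yourself found, and the write-up should state the identity with right-hand side $0$ from the start.
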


\begin{proof}
$\Phi$ is affine in $v$, so $\Phi_{vv}=0$ and $\Phi_v=\varepsilon \gQ(t)$, and
$\partial_t\Phi=\varepsilon\bigl[-\theta_\QQ+\gQ'(t)(v-\theta_\QQ)\bigr]+s$.
Using $\gQ'=-e^{-\kappa_\QQ(T-t)}$ and $\kappa_\QQ \gQ=1-e^{-\kappa_\QQ(T-t)}$,
\[
\mathcal{A}\Phi
=\varepsilon\Bigl[(v-\theta_\QQ)\bigl(1+\gQ'\bigr)+\kappa_\PP \gQ(\theta_\PP-v)\Bigr]+s-c_m
=\varepsilon \gQ\Bigl[(\kappa_\QQ-\kappa_\PP)(v-\theta_\QQ)
+\kappa_\PP\theta_\PP-\kappa_\PP\theta_\QQ\Bigr]+s-c_m .
\]
By \eqref{eq:link}, $\kappa_\PP\theta_\PP-\kappa_\PP\theta_\QQ
=\kappa_\QQ\theta_\QQ-\kappa_\PP\theta_\QQ=\lambda\theta_\QQ$, so the bracket
equals $\lambda(v-\theta_\QQ)+\lambda\theta_\QQ=\lambda v$.
\end{proof}

Identity \eqref{eq:carry} is the organising result of the paper. It says that
the instantaneous reward to holding the position, net of costs, is the
risk-premium carry $\lambda \gQ(t)v$ --- proportional to the \emph{current
variance level} and to the \emph{remaining variance duration} --- signed by the
side of the trade and offset by $c_m-s$. Since $\gQ(t)>0$ on $[0,T)$, the sign
of $\mathcal{A}\Phi$ in $v$ is the sign of $\varepsilon\lambda$, and the geometry
of the exercise region follows at once. Write $b_T(t)$ for the free boundary of
\eqref{eq:viu} at time $t$, the level at which the exercise region meets the
continuation region; the subscript marks it as a curve in calendar time, as
against the constant threshold the perpetual contract will produce in
Section~\ref{sec:closed}.

\begin{table}[t]\centering
\begin{tabular}{cclll}
\toprule
$\varepsilon$ & $\mathrm{sign}(\lambda)$ & continuation region & requires & interpretation \\
\midrule
$+1$ & $-$ & $\{v<b_T(t)\}$ & $s>c_m$ & long against the premium: exit into a spike\\
$-1$ & $+$ & $\{v<b_T(t)\}$ & $s>c_m$ & short against a view: exit into a spike\\
$+1$ & $+$ & $\{v>b_T(t)\}$ & $c_m>s$ & long with a view: exit into a collapse\\
$-1$ & $-$ & $\{v>b_T(t)\}$ & $c_m>s$ & short the premium: exit into a collapse\\
\bottomrule
\end{tabular}
\caption{Geometry of the exercise region implied by \eqref{eq:carry}. When
$\varepsilon\lambda<0$ the carry is negative and the only motive for delay is
the decay of the unwind spread $s(T-t)$; if in addition $s\le c_m$ then
$\mathcal{A}\Phi<0$ everywhere and $\tau^*=0$. When $\varepsilon\lambda>0$ the
carry is positive and increasing in $v$, so the position is held while variance
is elevated and closed when it falls.}
\label{tab:cases}
\end{table}

\begin{definition}\label{def:myopic}
The \emph{myopic}, or zero-carry, level is the solution
$v_c(t)$ of $\mathcal{A}\Phi(t,v)=0$,
\begin{equation}\label{eq:vc}
v_c(t)=\frac{c_m-s}{\varepsilon\lambda \gQ(t)} .
\end{equation}
\end{definition}

\begin{proposition}\label{prop:bound}
The exercise region is contained in
$\{v\le v_c(t)\}$ when $\varepsilon\lambda>0$, and in $\{v\ge v_c(t)\}$ when
$\varepsilon\lambda<0$.
\end{proposition}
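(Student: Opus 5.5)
The argument is the standard one: it is never optimal to stop where waiting a little has positive expected gain. I will work with the excess value $u=w-\Phi$ of \eqref{eq:viu} and show that $u>0$ at every point $(t,v)$ with $t<T$ and $\mathcal{A}\Phi(t,v)>0$. By \eqref{eq:carry}, $\mathcal{A}\Phi(t,v)>0$ is equivalent to $\varepsilon\lambda\,\gQ(t)\,v>c_m-s$. Since $\gQ(t)>0$ on $[0,T)$, this reads $v>v_c(t)$ when $\varepsilon\lambda>0$ and $v<v_c(t)$ when $\varepsilon\lambda<0$. The exercise region is $\{u=0\}$, so it must lie in the complementary sets $\{v\le v_c(t)\}$ and $\{v\ge v_c(t)\}$ respectively, which is exactly the statement.

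To prove the key claim I will use the process $Y$ of \eqref{eq:Y} and the decomposition \eqref{eq:dY}. Fix $(t,v)$ with $\mathcal{A}\Phi(t,v)=:\eta>0$. By continuity of $\mathcal{A}\Phi$, which is affine in $v$ with continuous coefficients in $t$, there is a neighbourhood $[t,t+h_0]\times[v-\rho,v+\rho]\subset[0,T)\times(0,\infty)$ on which $\mathcal{A}\Phi\ge\eta/2$. Let $\sigma$ be the first exit time of $(u',v_{u'})$ from this box, started at $(t,v)$. I then compare stopping immediately, which is worth $\Phi(t,v)$, with stopping at $\sigma$. In the form of \eqref{eq:w}, the expected value of stopping at $\sigma$ is
\[
\EE^\PP_{t,v}\Bigl[\int_t^\sigma(\varepsilon v_{u'}-c_m)\,du'+\Phi(\sigma,v_\sigma)\Bigr]
=\Phi(t,v)+\EE^\PP_{t,v}\int_t^\sigma\mathcal{A}\Phi(u',v_{u'})\,du'
\ge\Phi(t,v)+\tfrac{\eta}{2}\,\EE^\PP_{t,v}[\sigma-t].
\]
The equality uses \eqref{eq:dY}. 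The stochastic integral $\int\Phi_v\gamma\sqrt{v}\,dW^\PP$ stopped at $\sigma$ is a true martingale, because its integrand $\varepsilon\gQ\gamma\sqrt{v}$ is bounded on the box. Finally, $\EE[\sigma-t]>0$ because the continuous path started in the interior needs strictly positive time to leave the box. Hence $w(t,v)>\Phi(t,v)$, so $(t,v)$ lies in the continuation region.

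The main point needing care is the localisation that justifies optional sampling. This is why I stop at the exit time of a compact box bounded away from $v=0$ and from $t=T$, rather than at a fixed $t+h$. With the box in place, the integrand is bounded, the Feller conditions are not even needed, and $\sigma\le t+h_0<T$ is admissible in \eqref{eq:w}.

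The boundary case $\mathcal{A}\Phi=0$, that is $v=v_c(t)$, is deliberately left on the exercise side, which matches the non-strict inequalities in the statement. If $\lambda=0$ or $c_m=s$ makes $v_c$ degenerate, the same argument still gives the inclusion with the convention that $v_c=\pm\infty$ or $0$. Only the sign of $\gQ(t)>0$ and Proposition~\ref{prop:carry} are used.
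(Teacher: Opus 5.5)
Your proof is correct and takes the same approach as the paper, whose one-line proof says that wherever $\mathcal{A}\Phi>0$ the obstacle is a strict subsolution of the variational inequality, so stopping there is not optimal. Your localised Dynkin argument, which stops at the exit time of a box inside $[0,T)\times(0,\infty)$ on which $\mathcal{A}\Phi\ge\eta/2$, is the rigorous probabilistic version of that assertion. Two minor points: the closing remark about $\lambda=0$ is unnecessary because the statement assumes $\varepsilon\lambda\neq0$, and your $\eta$ and $\rho$ clash with the paper's later reachability tolerance and entry obstacle.
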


\begin{proof}
Where $\mathcal{A}\Phi>0$ the obstacle is a strict subsolution of the
variational inequality, so it is not optimal to stop there; by
\eqref{eq:carry} that region is $\{v>v_c(t)\}$ when $\varepsilon\lambda>0$ and
$\{v<v_c(t)\}$ when $\varepsilon\lambda<0$.
\end{proof}

The myopic level is the natural benchmark against which to read the free
boundary $b_T(t)$: it is the level at which a trader who ignores the option to
wait would act, and the gap $|b_T(t)-v_c(t)|$ is the value of that option,
expressed in the units of the state.

\section{Optimal stopping via excessive functions}
\label{sec:excessive}

We recall the tools used for the time-homogeneous problems of
Sections~\ref{sec:closed} and \ref{sec:pair}. They are not new: the general
one-dimensional theory is \citet{dayanik2003}, and the form used here, for a CIR
process entered and exited once against transaction costs, is the one
\citet{leung2014} develop. What the previous two sections have established is
that a variance swap gives rise to a problem this apparatus fits. Consider
\begin{equation}\label{eq:general}
V(x)=\sup_\tau\EE_x\bigl[e^{-\delta\tau}h(X_\tau)\bigr]
\end{equation}
for a one-dimensional regular diffusion $X$ on an interval with endpoints
$\underline{x}<\overline{x}$ (in our application $X=v$ and the endpoints are $0$
and $\infty$), and a constant $\delta>0$ whose meaning in our application is
fixed in Section~\ref{sec:closed}: it is not a discount rate but the rate at
which the contract terminates.

\begin{theorem}[\citealp{dynkin1963}]\label{thm:dynkin}
If $h$ is lower semi-continuous, the value function $V$ of \eqref{eq:general} is
the smallest $\delta$-excessive majorant of $h$, where $f\ge0$ is
$\delta$-excessive for $X$ when $f(x)\ge\EE_x\bigl[e^{-\delta\tau}f(X_\tau)\bigr]$
for all $x$ and all stopping times $\tau$.
\end{theorem}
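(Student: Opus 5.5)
The plan is the classical two-sided argument: show that $V$ is itself a $\delta$-excessive majorant of $h$, and then that it lies below every such majorant. We make the standing assumptions implicit in the statement: $h\ge0$, or at least bounded below so that a constant shift reduces to that case, and on $\{\tau=\infty\}$ the payoff $e^{-\delta\tau}h(X_\tau)$ is read as $0$.

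\textbf{Lower bound.} Let $f$ be any $\delta$-excessive majorant of $h$. For every stopping time $\tau$,
\[
\EE_x\bigl[e^{-\delta\tau}h(X_\tau)\bigr]\le\EE_x\bigl[e^{-\delta\tau}f(X_\tau)\bigr]\le f(x).
\]
The first inequality is $h\le f$ and the second is the definition of excessivity. Taking the supremum over $\tau$ gives $V\le f$. Since $\tau=0$ shows $V\ge h$, it remains only to prove that $V$ is $\delta$-excessive.

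\textbf{$V$ is excessive.} Fix $x$ and a stopping time $\sigma$. We must show $\EE_x[e^{-\delta\sigma}V(X_\sigma)]\le V(x)$. The natural route is to concatenate stopping times: follow $\sigma$, then from $X_\sigma$ follow an $\eta$-optimal rule $\tau_y$ for the restarted problem. The resulting time $\sigma+\tau_{X_\sigma}\circ\theta_\sigma$ is admissible from $x$. By the strong Markov property and $e^{-\delta(\sigma+\tau)}=e^{-\delta\sigma}e^{-\delta\tau\circ\theta_\sigma}$, its value is at least $\EE_x[e^{-\delta\sigma}(V(X_\sigma)-\eta)]$. Since $V(x)$ dominates this value, letting $\eta\downarrow0$ gives the claim.

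\textbf{Main obstacle: measurable selection.} The hard step is choosing $y\mapsto\tau_y$ measurably so that the concatenated time is a genuine stopping time. This is where lower semicontinuity of $h$ is used. I would avoid selection altogether by a monotone construction. Set $f_0=h$ and $f_{n+1}=\sup_{t\in\mathbb{Q}_+}P^\delta_t f_n$, where $P^\delta_t f(x)=\EE_x[e^{-\delta t}f(X_t)]$. Each $f_n$ is lower semicontinuous, because $X$ is Feller and $P^\delta_t$ maps l.s.c.\ functions to l.s.c.\ functions by Fatou. The sequence $f_n$ increases to some $f^\ast$.

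We then check three properties of $f^\ast$:
\begin{enumerate}
\item $f^\ast$ is $\delta$-supermedian, i.e.\ $P^\delta_t f^\ast\le f^\ast$ for rational $t$, by monotone convergence, and then for all $t$ by lower semicontinuity.
\item $f^\ast$ is $\delta$-excessive, since it is l.s.c.\ and supermedian for a continuous-path process, so $P^\delta_t f^\ast\to f^\ast$ as $t\downarrow0$. Supermedian plus this continuity upgrades to the stopping-time inequality through the supermartingale $e^{-\delta t}f^\ast(X_t)$ and optional sampling for nonnegative right-continuous supermartingales.
\item $f^\ast\le V$, by induction using deterministic, hence admissible, times.
\end{enumerate}
Since $f^\ast\ge h$, the lower bound gives $V\le f^\ast$, so $V=f^\ast$ is the smallest $\delta$-excessive majorant, as Theorem~\ref{thm:dynkin} asserts.

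I expect item 2 to be the delicate point. Right-continuity of $t\mapsto f^\ast(X_t)$ along paths, needed for optional sampling, is the crux. I would obtain it from the standard fact that excessive functions of a regular one-dimensional diffusion are continuous in the fine topology, which coincides with the Euclidean topology here. Alternatively, I would simply cite the general theory in \citet{dynkin1963}.
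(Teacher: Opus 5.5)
The paper gives no proof of this theorem. It imports it from \citet{dynkin1963} and uses only its conclusion, so your argument has to stand on its own. Your architecture is the classical one (essentially Shiryaev's construction) and is sound: iterate the semigroup from $h$ to a candidate, show the limit is a $\delta$-excessive majorant, show it is attained by stopping times, and close with the two-sided comparison. However, two steps fail as written, and the step you flag as the crux is easier than you make it.

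\textbf{The reduction to $h\ge0$.} A constant shift does not reduce the problem when $\delta>0$. We have $\EE_x[e^{-\delta\tau}(h+c)(X_\tau)]=\EE_x[e^{-\delta\tau}h(X_\tau)]+c\,\EE_x[e^{-\delta\tau}]$, and the last term depends on $\tau$. Lemma~\ref{lem:mono} and Proposition~\ref{prop:dmono} turn on exactly this: shifting the obstacle by a constant moves the exercise region. The point matters here, because the paper applies the theorem to $\tilde h=\alpha+\beta v$ and to $\rho\ge-(\sbar+\sebar)$, and both change sign. The correct reduction is to $h^+$, which is again lower semicontinuous. Take the convention that $\tau=\infty$ pays $0$. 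For any $\tau$, the time equal to $\tau$ on $\{\tau<\infty,\ h(X_\tau)>0\}\in\mathcal F_\tau$ and to $\infty$ elsewhere is a stopping time earning $\EE_x[e^{-\delta\tau}h^+(X_\tau)]$. Hence $V_h=V_{h^+}$, and a non-negative function majorises $h$ if and only if it majorises $h^+$.

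\textbf{Item 3.} The bound $f^*\le V$ does not follow by induction over deterministic times. Deterministic times give only $f_1\le V$. The step $f_n\le V\Rightarrow f_{n+1}\le V$ yields $f_{n+1}\le\sup_tP^\delta_tV$, and bounding that by $V$ is the supermedian property of $V$, which is what you are trying to prove. Already $f_2=\sup_sP^\delta_s\bigl(\sup_tP^\delta_th\bigr)$ needs the rule ``wait $s$, then wait a further $t(X_s)$''.

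The fix is a stronger induction hypothesis. For each $n$ and $\epsilon>0$ there should be a stopping time $\tau^{(n)}_x$, made of $n$ successive waiting times in $\mathbb{Q}_+$, each a Borel function of the state at which it is chosen, with $\EE_x[e^{-\delta\tau^{(n)}_x}h^+(X_{\tau^{(n)}_x})]\ge f_n(x)-n\epsilon$. Build it as $\tau^{(n)}_x=s+\tau^{(n-1)}_{X_s}\circ\theta_s$, with $\theta_s$ the shift operator and $s=s_n(x)$ the first rational, in a fixed enumeration, with $P^\delta_sf_{n-1}(x)\ge f_n(x)-\epsilon$. The selection is trivially Borel because it ranges over a countable set. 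For fixed $x$ the first wait is deterministic, so only the simple Markov property is used. Your construction therefore does not eliminate measurable selection; it makes it countable, and the write-up must say so.

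\textbf{Item 2.} You need neither right-continuity of $f^*(X_t)$ nor the fine topology. By item 1, $Z_t=e^{-\delta t}f^*(X_t)$ is a non-negative supermartingale. Its regularisation $\tilde Z_t=\lim_{q\downarrow t,\,q\in\mathbb{Q}}Z_q$ is a right-continuous non-negative supermartingale for $(\mathcal F_{t+})$. Lower semicontinuity of $f^*$ and continuity of paths give $Z_\tau\le\tilde Z_\tau$ on $\{\tau<\infty\}$. Optional sampling and Fatou then give $\EE_x[e^{-\delta\tau}f^*(X_\tau)]\le\EE_x\tilde Z_\tau\le\EE_x\tilde Z_0\le f^*(x)$. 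This is exactly where the lower semicontinuity hypothesis earns its place. Your fine-topology argument is correct for a regular diffusion on $(0,\infty)$, but it imports far more than the step needs.
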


For Brownian motion and $\delta=0$ the $0$-excessive functions are exactly the
non-negative concave ones \citep{dynkin1969}, so $V$ is the smallest non-negative
concave majorant of $h$. \citet{dayanik2003} extend this to a general
one-dimensional diffusion and $\delta>0$ by a change of scale, and this is the
form we use. Let $F$ and $G$ denote the increasing and decreasing positive
solutions of $\mathcal{L}u=\delta u$ and put
\begin{equation}\label{eq:psiphi}
\psi=\frac{F}{G},\qquad \varphi=-\frac1\psi=-\frac{G}{F},
\end{equation}
both strictly increasing.

\begin{proposition}[\citealp{dayanik2003}]\label{prop:dk}
Let $\widehat H$ be the smallest non-negative concave majorant of
$H:=(h/G)\circ\psi^{-1}$ on $[\psi(\underline{x}),\psi(\overline{x})]$. Then
$V(x)=G(x)\widehat H(\psi(x))$.
Setting $\mathcal{E}=\{x:V(x)=h(x)\}$, the time $\tau^*=\inf\{t\ge0:X_t\in\mathcal{E}\}$ is
optimal whenever $h$ is continuous.
\end{proposition}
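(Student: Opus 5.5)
The plan is to reduce Proposition~\ref{prop:dk} to Theorem~\ref{thm:dynkin} through a change of variables that turns $\delta$-excessive functions of $X$ into non-negative concave functions of the transformed coordinate $y=\psi(x)$.

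\textbf{Step 1: the correspondence.} Fix $f\ge0$ on the state interval and set $\hat f:=(f/G)\circ\psi^{-1}$. I would show that $f$ is $\delta$-excessive for $X$ if and only if $\hat f$ is concave on $[\psi(\underline{x}),\psi(\overline{x})]$. The tool is the exit-time formula. For $a<x<b$ let $\tau_{ab}$ be the exit time of $(a,b)$. The Laplace functionals $\EE_x[e^{-\delta\tau_{ab}}\one_{\{X_{\tau_{ab}}=a\}}]$ and $\EE_x[e^{-\delta\tau_{ab}}\one_{\{X_{\tau_{ab}}=b\}}]$ are the combinations of $F$ and $G$ that equal $1$ and $0$ at the appropriate endpoints. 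This holds because $e^{-\delta t}F(X_t)$ and $e^{-\delta t}G(X_t)$ are local martingales that are bounded up to $\tau_{ab}$. Dividing by $G(x)$ and writing everything in $y=\psi$, one gets
\[
\frac{\EE_x[e^{-\delta\tau_{ab}}f(X_{\tau_{ab}})]}{G(x)}
=\hat f(\psi(a))\frac{\psi(b)-\psi(x)}{\psi(b)-\psi(a)}+\hat f(\psi(b))\frac{\psi(x)-\psi(a)}{\psi(b)-\psi(a)}.
\]
So the excessivity inequality tested on exit times of intervals is exactly the chord inequality for $\hat f$.

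The converse direction, from concavity to excessivity for \emph{all} stopping times, is where I expect the main obstacle to lie. I would handle it as follows. Write a concave $\hat f$ as the infimum of affine functions $\ell(y)=p+qy$. Each such $\ell$ pulls back to $G(\ell\circ\psi)=pG+qF$, which is $\delta$-harmonic. Its discounted process is a non-negative local martingale, hence a supermartingale, after restricting to affine minorants that are non-negative on the range. Taking the infimum of these supermartingales gives the supermartingale property of $e^{-\delta t}f(X_t)$. Optional sampling, with Fatou at the boundaries, then yields the inequality for every $\tau$.

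\textbf{Step 2: the value function.} By Theorem~\ref{thm:dynkin}, $V$ is the smallest $\delta$-excessive majorant of $h$. Step~1 makes the map $f\mapsto\hat f$ an order-preserving bijection, since $G>0$ and $\psi$ is strictly increasing. That bijection carries non-negative majorants of $h$ to non-negative majorants of $H$, and $\delta$-excessive functions to concave ones. Smallest elements therefore correspond, so $\hat V=\widehat H$, that is, $V=G\,\widehat H\circ\psi$.

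\textbf{Step 3: optimality of $\tau^*$.} When $h$ is continuous, $H$ is continuous and $\widehat H$ is continuous on the open interval. I would use the following argument. On any component $(a,b)$ of $\{\widehat H>H\}$, the function $\widehat H$ is affine, since otherwise it could be lowered while remaining a concave majorant. Pulling back, $V$ is $\delta$-harmonic between consecutive points of $\mathcal{E}$, so $V(x)=\EE_x[e^{-\delta\tau^*}V(X_{\tau^*})]=\EE_x[e^{-\delta\tau^*}h(X_{\tau^*})]$ by the exit formula of Step~1. The care needed here is in the unbounded components adjacent to $\underline{x}$ or $\overline{x}$, where $\tau^*$ may be infinite. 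There I would control the limit using the boundary behaviour of $F$ and $G$, following \citet{dayanik2003}: the natural boundaries give $G\to\infty$ at $\underline{x}$ and $F\to\infty$ at $\overline{x}$. On $\{\tau^*=\infty\}$ I would take the payoff to be zero, which is consistent with the limit $\limsup e^{-\delta t}h(X_t)=0$.
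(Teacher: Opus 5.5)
Steps~1 and~2 are the change of scale the paper relies on. The paper gives no proof here; it quotes \citet{dayanik2003}. You carry the change of scale out correctly. The exit-time identity turns excessivity along interval exits into the chord inequality for $\hat f$. The converse holds because every affine $\ell\ge\hat f$ pulls back to $pG+qF\ge0$, which is non-negative and $\delta$-harmonic. (Where you write \emph{minorants} you mean majorants; their non-negativity on the range follows automatically from $\hat f\ge0$.) Step~3 is also correct on the bounded components of $\{\widehat H>H\}$.

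The gap is on the components that abut an endpoint, and it has nothing to do with the convention on $\{\tau^*=\infty\}$. Take a component $(\psi(\underline{x}),\psi(b))$ with $b\in\mathcal{E}$; here $\psi(\underline{x})=0$ because $G(\underline{x}+)=\infty$. On it $\widehat H(y)=p+qy$, so $V=pG+qF$ on $(\underline{x},b)$. But $\tau^*$ is the hitting time of $b$ and returns $h(b)F(x)/F(b)$. That is $G(x)$ times the chord from the origin to $(\psi(b),H(\psi(b)))$. The two agree only if $p=0$, and minimality pins $p$ at exactly $\limsup_{x\downarrow\underline{x}}h^+(x)/G(x)$. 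Symmetrically, on a component abutting $\overline{x}$ the time $\tau^*$ returns $G(x)H(\psi(a))$, which is flat in $y$. There you need $q=\limsup_{x\uparrow\overline{x}}h^+(x)/F(x)$ to vanish.

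None of the tools you invoke supplies these limits: not continuity of $h$, not $G\to\infty$ and $F\to\infty$ at the ends, not $e^{-\delta t}h(X_t)\to0$. Without the limits the claim is false. Take $h=F\psi/(1+\psi)$, so that $H(y)=y^2/(1+y)$. Any non-negative concave majorant $g$ satisfies $g(y)\ge (y/Y)g(Y)\ge yY/(1+Y)$ for every $Y>y$, so $\widehat H(y)=y$ and $V=F$. Then $\mathcal{E}=\emptyset$ and $\tau^*=\infty$ returns $0$; in fact no stopping time is optimal. This happens even for the paper's CIR process, where $0\le e^{-\delta t}h(X_t)\le e^{-\delta t}F(X_t)\to0$ almost surely by recurrence.

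The two vanishing limits are the hypotheses (their $\ell_a=\ell_b=0$) under which \citet{dayanik2003} prove optimality of $\tau^*$; the paper's transcription of the result drops them. With them added, your Step~3 closes: they force $p=0$ or $q=0$ on the abutting components, and $\widehat H\equiv0$ when $\mathcal{E}=\emptyset$. Nothing downstream is affected. The obstacles the paper feeds in, $\tilde h$ and $\rho$, are bounded near $0$ and grow at most linearly. Meanwhile $G(0+)=\infty$ under the Feller condition and $F$ grows like $e^{\varsigma v}$, so both limits vanish there.
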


\begin{proposition}\label{prop:dk2}
Equivalently, with $H:=(h/F)\circ\varphi^{-1}$ and $\widehat H$ the smallest
\emph{decreasing} concave majorant of $H$, $V(x)=F(x)\widehat H(\varphi(x))$.
\end{proposition}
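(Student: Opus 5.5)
The plan is to derive Proposition~\ref{prop:dk2} from Proposition~\ref{prop:dk} by a change of variables. The two representations should agree because $\varphi=-1/\psi$ and $F=G\psi$. The first step is to express $V/F$ in terms of $V/G$. Since $V=G\widehat H_G\circ\psi$, where $\widehat H_G$ is the smallest non-negative concave majorant of $H_G=(h/G)\circ\psi^{-1}$, we have
\[
\frac{V}{F}=\frac{\widehat H_G(\psi)}{\psi}.
\]
Put $y=\psi(x)>0$ and $z=\varphi(x)=-1/y<0$. We also have $(h/F)\circ\varphi^{-1}(z)=H_G(y)/y$. So everything reduces to one statement about functions on the half-line. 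For $K$ defined on $(0,\infty)$, let $K^\ast(z)=|z|\,K(-1/z)$ on $(-\infty,0)$. Then $K\mapsto K^\ast$ is the perspective (projective) transformation $K(y)\mapsto K(y)/y$ written in the new variable. The goal is to show that it carries non-negative concave functions of $y$ exactly onto non-negative, decreasing concave functions of $z$, and that it preserves the order $K_1\ge K_2$.

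The second step is the core lemma. Take $K$ concave and non-negative. The function $y\mapsto y\,K(1/y)$ is the perspective of a concave function, and perspectives of concave functions are concave. Writing $K^\ast(z)=(-z)\,K(1/(-z))$, this is that perspective composed with the affine map $z\mapsto -z$, so $K^\ast$ is concave. For monotonicity I would use the fact that a concave $K\ge0$ on $(0,\infty)$ is non-decreasing, since a decreasing concave function eventually turns negative. Then the derivative of $K^\ast$ in $z$ is $-(K(y)-yK'(y))\cdot(\text{positive factor})$ with $y=-1/z$. Because $K(y)-yK'(y)\ge K(0^+)\ge0$ for concave $K\ge0$, $K^\ast$ is non-increasing. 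The converse direction follows by applying the inverse transformation $K(y)=y\,K^\ast(-1/y)$ and running the same argument. Order preservation is immediate because the factor $|z|$ is positive.

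The third step is to conclude. Let $\widehat H_F$ be the smallest decreasing concave majorant of $H_F=(h/F)\circ\varphi^{-1}=H_G^\ast$. By the bijection, $(\widehat H_G)^\ast$ is a decreasing concave majorant of $H_G^\ast$, so $\widehat H_F\le(\widehat H_G)^\ast$. Conversely, $(\widehat H_F)$ pulled back by the inverse transformation is a concave majorant of $H_G$. Non-negativity is where care is needed: a majorant of $H_F$ that is decreasing and concave on $(-\infty,0)$ is bounded below by its limit as $z\to0^-$. I will handle this by taking majorants within the non-negative class, as in Proposition~\ref{prop:dk}, which is harmless because $V\ge0$. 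Minimality then gives equality, and hence
\[
V=F\cdot\frac{\widehat H_G(\psi)}{\psi}=F\,\widehat H_F(\varphi).
\]

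I expect the main obstacle to be the boundary and sign bookkeeping. This means checking that the domains $[\psi(\underline x),\psi(\overline x)]=[0,\infty]$ correspond correctly to $[-\infty,0]$. It also means checking that the non-negativity constraint in Proposition~\ref{prop:dk} corresponds exactly to the decreasing constraint here, and not to decreasing plus non-negative. I would first verify this on affine $K(y)=a+by$, for which $K^\ast(z)=-az+b$. There $a\ge0$ gives a decreasing $K^\ast$, and $b\ge0$ gives $K^\ast(0^-)\ge0$. This suggests the statement implicitly keeps non-negativity, and I would add that condition if it is needed.
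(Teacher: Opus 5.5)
Your change of variables is the right one, and it is what the paper means when it calls Propositions~\ref{prop:dk} and~\ref{prop:dk2} ``the same statement written in the two natural transformations''. The paper gives no further proof and defers to Dayanik--Karatzas. Your identity $H_F=H_G^\ast$, the perspective argument for concavity, the sign computation $(K^\ast)'(z)=yK'(y)-K(y)$ and the order preservation are all correct. The monotonicity half of Step~2 is in fact automatic. Mirror your argument on $(0,\infty)$: a non-negative concave function on $(-\infty,0)$ cannot increase anywhere without tending to $-\infty$ as $z\to-\infty$. So the image of the class in Proposition~\ref{prop:dk} is simply the non-negative concave functions on $(-\infty,0)$.

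The step that does not stand as written is non-negativity in Step~3. That $V\ge0$ does not make the restriction to non-negative majorants harmless. It shows the target $(\widehat H_G)^\ast$ is non-negative. Your minimality argument instead needs the smallest decreasing concave majorant $M$ of $H_F$ itself to be non-negative, so that its pull-back lies in the class of Proposition~\ref{prop:dk}.

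The correct reason is that any non-increasing majorant satisfies $M(z)\ge\sup_{z'>z}H_F(z')\ge\limsup_{x\to\overline x}h(x)/F(x)$. Hence $M\ge0$ automatically whenever that limit superior is non-negative. That condition holds for every obstacle the paper passes to Proposition~\ref{prop:dk2}. The affine $\tilde h$ of Theorem~\ref{thm:exit}(i) and the entry obstacle $\rho$ of Theorem~\ref{thm:entry} are both $o(F)$, since $F$ grows like $e^{\varsigma v}$. With that one line added, your argument proves the statement as printed in the setting where it is used.

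Without such a condition the printed statement is false. Take $h=-F$: then $H_F\equiv-1$ is its own smallest decreasing concave majorant and would give $V=-F$. But $V\equiv0$, since $h<0$ and $\tau=\infty$ is admissible and pays $0$. So your closing suspicion is right: the extra hypothesis (non-negativity of the majorant, or $\limsup h/F\ge0$ at the upper endpoint) is genuinely needed, not merely implicit.
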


Propositions~\ref{prop:dk} and \ref{prop:dk2} are the same statement written in
the two natural transformations; which one is convenient depends on whether the
exercise region abuts the upper or the lower endpoint, and we shall use both.

For the CIR process under $\PP$ write
\begin{equation}\label{eq:abz}
\mathsf{a}:=\frac{\delta}{\kappa_\PP},\qquad
\nu:=\frac{2\kappa_\PP\theta_\PP}{\gamma^2},\qquad
\varsigma:=\frac{2\kappa_\PP}{\gamma^2},
\end{equation}
so that $\nu$ is the shape parameter of the stationary law and \eqref{eq:feller}
reads $\nu>1$ under $\QQ$. The relevant pair is then
\begin{equation}\label{eq:FG}
F(v)=M(\mathsf{a},\nu;\varsigma v),
\qquad
G(v)=\mathcal{U}(\mathsf{a},\nu;\varsigma v),
\end{equation}
where $M$ and $\mathcal{U}$ are the confluent hypergeometric functions of the
first and second kind,
\begin{equation}\label{eq:MU}
\begin{aligned}
M(p,q;z)&=\sum_{n\ge0}\frac{p_nz^n}{q_n\,n!},
\qquad p_0=1,\quad p_n=p(p+1)\cdots(p+n-1),\\[2pt]
\mathcal{U}(p,q;z)&=\frac{\Gamma(1-q)}{\Gamma(p-q+1)}\,M(p,q;z)
+\frac{\Gamma(q-1)}{\Gamma(p)}\,z^{1-q}M(p-q+1,2-q;z).
\end{aligned}
\end{equation}
$F$ is strictly increasing and $G$ strictly decreasing \citep{going2003}, and we
shall use the derivative identities
\begin{equation}\label{eq:derivs}
\frac{d}{dz}M(p,q;z)=\frac pq M(p+1,q+1;z),
\qquad
\frac{d}{dz}\mathcal{U}(p,q;z)=-p\,\mathcal{U}(p+1,q+1;z)
\end{equation}
\citep{buchholz1969}. Under \eqref{eq:feller} the endpoint $0$ is an entrance
boundary and $+\infty$ is natural, so the state space is $(0,\infty)$ and
Proposition~\ref{prop:dk} applies with $\underline{x}=0$ and $\overline{x}=\infty$.

\section{A perpetual variance swap}
\label{sec:closed}

The excessive-function method requires a time-homogeneous, one-dimensional
diffusion, and the fixed-maturity contract of Section~\ref{sec:setup} is
neither: $\gQ(t)$ and the residual maturity in the cost both depend on $t$. The
natural time-homogeneous variance swap is a \emph{perpetual} one --- perpetual
in the sense of carrying no deterministic maturity, so that nothing about it
shortens with calendar time --- and this section solves the unwind problem for
it in closed form.

\subsection{The contract}

Let the contract pay its holder $(v_t-K)\,dt$, per unit of variance notional,
until an independent exponential time $T_\delta$ with rate $\delta>0$: a
floating-for-fixed exchange of instantaneous variance against a fixed rate,
settled continuously, and terminating at a date that is not known in advance.
It is a variance swap in the only sense that matters here: the holder accrues
realized variance while the position is open. Randomised maturities
of this kind are a standard device \citep{carr1998}; what matters here is that
$T_\delta$ is memoryless, so the contract looks the same at every date at which
it is still alive, which is what makes the problem time-homogeneous.

The parameter $\delta$ is therefore a feature of the position rather than a
description of the trader's patience: $1/\delta$ is its expected life, and the
state sensitivity computed below puts that on the same axis as the maturity of a
dated swap. What is fixed exogenously is the intensity $\delta$ and not the date
$T_\delta$, in the way that a barrier level is contractual while the time at
which it is struck is not. What ends the position admits two readings, and the
analysis uses only that $T_\delta$ is exponential and independent of $v$. Either
termination is a term of the contract, a swap running until the first arrival of
an independent Poisson process of intensity $\delta$, which is the randomisation
device of \citet{carr1998}; or the position is ended by events outside the trade
altogether --- a mandate that lapses, a desk reallocated, a counterparty
withdrawn, a roll that fails. The second needs no unusual term sheet and is the
better description of why a variance position is so often closed for reasons
having nothing to do with variance; it also supplies, in
Section~\ref{sec:pair}, the one thing the entry problem needs and a term sheet
would not give, namely an event that ends the opportunity to trade and not
merely the contract. Since the trader cannot hold the contract past $T_\delta$, and $T_\delta$ is
independent of $v$, every expectation below carries the survival probability
$\PP(T_\delta>u)=e^{-\delta u}$ as a weight. Under $\QQ$, using
$\EE\int_0^{T_\delta}\!\!X_u\,du=\int_0^\infty\! e^{-\delta u}\EE[X_u]\,du$,
\begin{equation}\label{eq:pvsvalue}
\Pi(v)=\EE^\QQ\!\left[\int_0^{T_\delta}\!(v_u-K)\,du\,\Big|\,v_0=v\right]
=\int_0^\infty\! e^{-\delta u}\,\EE^\QQ_v[v_u-K]\,du
=\frac{\theta_\QQ-K}{\delta}+\frac{v-\theta_\QQ}{\delta+\kappa_\QQ}.
\end{equation}
The first term of that integral recurs throughout, under both measures, so we
name it: for $\MM\in\{\PP,\QQ\}$,
\begin{equation}\label{eq:perp}
P_\MM(v):=\EE^\MM_v\!\int_0^\infty\! e^{-\delta u}v_u\,du
=\frac{\theta_\MM}{\delta}+\frac{v-\theta_\MM}{\delta+\kappa_\MM},
\end{equation}
the survival-weighted perpetuity of variance, affine in $v$ with slope
$1/(\delta+\kappa_\MM)$, so that $\Pi=P_\QQ-K/\delta$. The rate at which the
contract is fair is
\begin{equation}\label{eq:pvsrate}
K(v)=\theta_\QQ+\frac{\delta\,(v-\theta_\QQ)}{\delta+\kappa_\QQ}
=\delta\int_0^\infty e^{-\delta u}\,\EE^\QQ_t\bigl[v_{t+u}\bigr]\,du\Big|_{v_t=v},
\end{equation}
an exponentially weighted average of expected future variance, where a
fixed-tenor swap rate is a uniformly weighted one. The weighting is the whole of
the difference, and it is governed by the expected life: $\delta\to0$ is a
contract that never terminates, and gives $K\to\theta_\QQ$, the long-run level,
with no dependence on the current state at all, while $\delta\to\infty$ is one
that terminates at once, and gives $K\to v$. The sensitivity
$\partial K/\partial v=\delta/(\delta+\kappa_\QQ)$ equals $0.20$ at the
parameters of Section~\ref{sec:numerics}, while by \eqref{eq:cond} a dated swap
of tenor $\mathsf{T}$ has sensitivity
$(1-e^{-\kappa_\QQ\mathsf{T}})/(\kappa_\QQ\mathsf{T})$; the
two agree at $\mathsf{T}=2.5$ years. The $\delta=0.5$ contract therefore responds to
the state like a dated swap of about two and a half years, against an expected
life of two, and it is that rough agreement between two independent ways of
measuring its horizon which puts $1/\delta$ on the same axis as a
maturity.\footnote{The two need not agree exactly, since they average expected
future variance differently: the perpetual with the exponential kernel
$\delta e^{-\delta u}$, a dated swap uniformly over $[0,\tau]$. Matching the means
of the two kernels rather than their sensitivities would give $\tau=2/\delta$.
The matching tenor also depends on $\kappa_\QQ$, falling from $3.2$ to $2.2$
years as $\kappa_\QQ$ rises from $0.5$ to $5$.} That sensitivity is the same at every level, so
$K$ is unbounded in $v$: a variance shock is passed into the rate at a fixed
fraction for as long as the contract survives. This is a property of the affine
dynamics rather than of perpetual variance: under a model whose variance of
variance grows faster than affine, the same sensitivity decays with the level.

Unwinding at $\tau$ means terminating the contract with the counterparty for a
cash payment --- a tear-up, which is how an over-the-counter position without a
listed offset is in practice closed. The fair payment is the mark-to-market
\eqref{eq:pvsvalue} at the prevailing state, and by \eqref{eq:pvsrate}
\[
\varepsilon\,\Pi(v_\tau;K)
=\varepsilon\Bigl[\frac{\theta_\QQ-K}{\delta}+\frac{v_\tau-\theta_\QQ}{\delta+\kappa_\QQ}\Bigr]
=\varepsilon\,\frac{K(v_\tau)-K}{\delta},
\]
so the position closes at a number depending on the state and not on the date.
The same figure would be reached by entering an offsetting contract on the same
termination event and letting the two legs run off together, since the residual
annuity has mean length $1/\delta$ whatever $\tau$ is; but that construction
needs a counterparty willing to write a contract on this contract's own
termination clock, and the tear-up needs no such thing.

The cost of unwinding follows from the mechanism. A tear-up is a single
negotiated concession on the termination payment, so we charge a flat
$\sbar\ge0$ in variance units, and correspondingly a flat $\sebar\ge0$ on
entry in Section~\ref{sec:pair}. This is not the convention of
Section~\ref{sec:setup}, where the dated contract is closed by trading against
it and the spread is crossed on the strike, applied to the vega that remains, at
a cost $s(T-t)$ that shrinks as the swap runs off. The two are different
quantities --- $s$ a rate, $\sbar$ a total --- because they price different
transactions, and Remark~\ref{rem:spreadconv} records what is at stake in the
choice.

\begin{remark}[what $\delta$ is, and what it is not]\label{rem:whatdelta}
The factor $e^{-\delta u}$ that appears throughout this section and the next is
a survival probability, not a discount factor, and the distinction is the one
Remark~\ref{rem:nodiscount} turns on. There the obstruction to discounting was
that the accrued balance $A_\tau$ would acquire a present value depending on
\emph{when} the position is closed, so that the state would not reduce. Here
there is no accrued balance to carry: the contract settles continuously, so
variance is paid out as it is earned and nothing is held to a maturity date.
The two sections are therefore consistent, and deliberately so --- funding still
enters as the running charge $c_m$, exactly as Remark~\ref{rem:nodiscount}
requires, and $\delta$ is a property of the instrument rather than of its
holder. Writing $\EE\int_0^{T_\delta}X_u\,du=\int_0^\infty e^{-\delta u}\EE[X_u]\,du$
and $\EE[\one\{T_\delta>\tau\}Z_\tau]=\EE[e^{-\delta\tau}Z_\tau]$ turns every
expression below into one that \emph{looks} discounted; no impatience has been
introduced, and the value of a unit of variance is the same whenever it is
earned.
\end{remark}

\begin{remark}[the two spread conventions]\label{rem:spreadconv}
The dated contract of Section~\ref{sec:setup} is closed by trading against it,
so its spread is crossed on the strike and applies to the vega that remains: a
rate $s$, costing $s(T-t)$. The perpetual is closed by tear-up, a single
negotiated concession with no residual position to carry, so its cost is the
flat $\sbar$. The two are different quantities and we keep separate symbols for
them; taking $\sbar=s$, as Section~\ref{sec:numerics} does, says that closing a
perpetual costs what closing a one-year dated swap costs.

The alternative is to insist on one rate, charging the perpetual
$\sbar=s/\delta$ --- the spread run over the contract's expected remaining life.
That is the right convention if the position is unwound by holding an offsetting
contract to termination rather than torn up, and it is the more conservative
assumption, roughly doubling the round trip at $\delta=0.5$. It costs the
results a good deal and changes none of them qualitatively. The window of
Figure~\ref{fig:window} then opens at a premium of $2.15$ volatility points
rather than $1.09$ and is $35$ basis points wide at three points rather than
$110$; the entry floor of Table~\ref{tab:horizon} is unchanged at the $94.7$th
percentile for the one-year contract but leaves the support beyond four years,
so the monotone deterioration in the contract's life is sharper, not weaker; the
round trips roughly double, to $7.9$ years' wait
and $3.5$ years' hold for the short; and the state-dependent charge of
Appendix~\ref{app:extras} needs a slope of $c_1=-0.5$ rather than $-0.2$,
a swing of some $550$ basis points a year across the middle $98\%$ of the law
rather than $200$, before the two-threshold structure appears. Every conclusion of
Sections~\ref{sec:closed}--\ref{sec:pair} survives the substitution; what it
costs is the size of the premium and of the hedging demand the model needs
before a round trip exists.
\end{remark}

\begin{remark}[on the instrument]\label{rem:instrument}
No exchange lists a perpetual variance swap on equity indices, so the contract
is an idealisation. It is, however, an idealisation of the right kind. First, it
is a randomised-maturity version of the contract of Section~\ref{sec:setup}
rather than a different exposure: the holder accrues realized variance, and it is by accruing, not by
marking, that the premium of Section~\ref{sec:carry} is earned. A claim written
instead on a constant-maturity swap \emph{rate} would have no accrual at all,
and would therefore not be a variance swap in any useful sense --- selling a
one-year swap at $t=0$ and buying a one-year swap at $t=\tfrac12$ leaves a
calendar spread on $[\tfrac12,\tfrac32]$, not a flat book. Second, perpetual
contracts are by now a studied object \citep{angeris2022,he2022,ackerer2026},
and a power perpetual on $\mathrm{ETH}^2$ trades in decentralised markets whose
funding leg its issuer describes as the expected variance of the underlying
\citep{opyn2022}. One difference should be kept in view: in that literature the
funding rate is an endogenous mechanism tying the mark to an index, whereas
$\delta$ here is an exogenous intensity, fixed independently of the state and of
the trade.
\end{remark}

\subsection{Reduction to an affine reward}

The trader holds $\varepsilon$ units struck at $K$, accrues
$\varepsilon(v_t-K)-c_m$ per unit time for as long as the contract survives, and
if she unwinds before it terminates receives $\varepsilon(K(v_\tau)-K)/\delta$,
less the tear-up concession $\sbar$:
\begin{equation}\label{eq:pvsobjko}
V(v;K)=\sup_\tau\ \EE^\PP_v\!\left[\int_0^{\tau\wedge T_\delta}
\bigl(\varepsilon(v_u-K)-c_m\bigr)du
+\one\{T_\delta>\tau\}\Bigl(\varepsilon\frac{K(v_\tau)-K}{\delta}-\sbar\Bigr)\right],
\end{equation}
which, on integrating out the independent $T_\delta$, is
\begin{equation}\label{eq:pvsobj}
V(v;K)=\sup_{\tau}\ \EE^\PP_v\!\left[\int_0^\tau e^{-\delta u}
\bigl(\varepsilon(v_u-K)-c_m\bigr)du
+e^{-\delta\tau}\Bigl(\varepsilon\frac{K(v_\tau)-K}{\delta}-\sbar\Bigr)\right].
\end{equation}
We work with \eqref{eq:pvsobj} throughout, reading its exponentials as survival
probabilities in the sense of Remark~\ref{rem:whatdelta}.  The running reward is
removed in the usual way: one adds and subtracts the value of holding the
position to termination, which is a fixed function of the state carrying no
optimisation of its own and therefore does not compete with the choice of
$\tau$.

\begin{lemma}\label{lem:reduce}
Let $f$ be the running reward of \eqref{eq:pvsobj}, $h(v)=\varepsilon(K(v)-K)/\delta-\sbar$
the stopping reward, and $R(v)=\EE^\PP_v\bigl[\int_0^\infty e^{-\delta u}f(v_u)du\bigr]$
the value of holding the position until the contract terminates. Then
\[
V(v;K)=R(v;K)+U(v),\qquad
U(v):=\sup_\tau\EE^\PP_v\bigl[e^{-\delta\tau}\tilde h(v_\tau)\bigr],\qquad
\tilde h:=h-R .
\]
\end{lemma}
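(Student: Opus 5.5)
The plan is the standard strong-Markov decomposition. Fix a stopping time $\tau$ and split the infinite-horizon integral defining $R$ at $\tau$:
\[
R(v)=\EE^\PP_v\!\left[\int_0^\tau e^{-\delta u}f(v_u)\,du\right]
+\EE^\PP_v\!\left[\int_\tau^\infty e^{-\delta u}f(v_u)\,du\right].
\]
On $\{\tau<\infty\}$, the strong Markov property of the CIR process at $\tau$, together with the time-homogeneity of $f$ and of the survival weight, rewrites the second term as
\[
\EE^\PP_v\bigl[e^{-\delta\tau}R(v_\tau)\bigr].
\]
On $\{\tau=\infty\}$ the second term vanishes, and so does $e^{-\delta\tau}$. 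Rearranging gives
\[
\EE^\PP_v\!\left[\int_0^\tau e^{-\delta u}f(v_u)\,du\right]=R(v)-\EE^\PP_v\bigl[e^{-\delta\tau}R(v_\tau)\bigr].
\]
Substituting this into the objective of \eqref{eq:pvsobj}, the payoff of $\tau$ becomes
\[
R(v)+\EE^\PP_v\bigl[e^{-\delta\tau}(h-R)(v_\tau)\bigr].
\]
Since $R(v)$ does not depend on $\tau$, it passes outside the supremum, and this yields $V=R+U$ with $\tilde h=h-R$.

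The only real work is integrability, which justifies the splitting and the conditioning. First, $f(v)=\varepsilon(v-K)-c_m$ is affine. By \eqref{eq:perp}, $\EE^\PP_v\int_0^\infty e^{-\delta u}|f(v_u)|\,du\le P_\PP(v)+(|K|+|c_m|)/\delta<\infty$. This makes $R$ finite and explicit, namely $R=\varepsilon(P_\PP-K/\delta)-c_m/\delta$, which is affine in $v$. Second, applying Fubini to the tail integral and using the strong Markov property for the nonnegative integrand $v_u$ (and for the constants separately) legitimately gives $\EE[\int_\tau^\infty e^{-\delta u}v_u\,du\mid\mathcal F_\tau]=e^{-\delta\tau}P_\PP(v_\tau)$.

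I expect the main obstacle to be stopping times that may be infinite, or under which $e^{-\delta\tau}h(v_\tau)$ fails to be integrable. To handle these I would adopt the convention $e^{-\delta\tau}g(v_\tau)=0$ on $\{\tau=\infty\}$, consistent with $\one\{T_\delta>\tau\}$ in \eqref{eq:pvsobjko}. I would also note that $h$ and $R$ are affine, so each is dominated by $C(1+v)$. Then $\EE[e^{-\delta\tau}(1+v_\tau)]<\infty$ for every stopping time, because $e^{-\delta t}(1+v_t)$ plus the accrued $\int_0^te^{-\delta u}\kappa_\PP\theta_\PP\,du$ is a nonnegative supermartingale after adjusting constants: its drift is $-\delta(1+v)+\kappa_\PP(\theta_\PP-v)$, which is bounded above by a constant. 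With that in hand, both sides are well defined for every $\tau$, and the identity holds $\tau$ by $\tau$, hence also for the suprema.
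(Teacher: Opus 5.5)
Your proof is correct and is the paper's own argument: split $\int_0^\tau=\int_0^\infty-\int_\tau^\infty$, identify the tail by the strong Markov property as $\EE^\PP_v\bigl[e^{-\delta\tau}R(v_\tau)\bigr]$, pull the $\tau$-free term $R(v)$ out of the supremum, and set $e^{-\delta\tau}\tilde h(v_\tau)=0$ on $\{\tau=\infty\}$. Your integrability check is more explicit than the paper's one-line appeal to $f$ being affine with $\sup_u\EE^\PP_v[v_u]<\infty$, but it contains a sign slip: adding $\int_0^te^{-\delta u}\kappa_\PP\theta_\PP\,du$ raises the drift, so the nonnegative supermartingale you want is instead $e^{-\delta t}\bigl(1+v_t+\kappa_\PP\theta_\PP/\delta\bigr)$, whose drift $-e^{-\delta t}\bigl(\delta+(\delta+\kappa_\PP)v_t\bigr)$ is non-positive.
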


\begin{proof}
Fix a stopping time $\tau$ and split $\int_0^\tau=\int_0^\infty-\int_\tau^\infty$.
Substituting $u=\tau+r$ in the tail and using the strong Markov property,
\[
\EE^\PP_v\Bigl[\int_\tau^\infty e^{-\delta u}f(v_u)\,du\Bigr]
=\EE^\PP_v\Bigl[e^{-\delta\tau}\,\EE^\PP_{v_\tau}\!\int_0^\infty e^{-\delta r}f(v_r)\,dr\Bigr]
=\EE^\PP_v\bigl[e^{-\delta\tau}R(v_\tau)\bigr],
\]
all terms being finite because $f$ is affine in $v$ and $\sup_u\EE^\PP_v[v_u]<\infty$.
The value of unwinding at that $\tau$ is therefore
\[
R(v)-\EE^\PP_v\bigl[e^{-\delta\tau}R(v_\tau)\bigr]
+\EE^\PP_v\bigl[e^{-\delta\tau}h(v_\tau)\bigr]
=R(v)+\EE^\PP_v\bigl[e^{-\delta\tau}\tilde h(v_\tau)\bigr].
\]
The first term does not depend on $\tau$. It is an additive constant in the
optimisation --- exactly as $\varepsilon A$ is in Proposition~\ref{prop:reduce}
--- and passes outside the supremum, so a single supremum over $\tau$ remains and
$V=R+U$. On $\{\tau=\infty\}$ both $e^{-\delta\tau}$ and the residual integral
vanish, and the identity holds there with the convention
$e^{-\delta\tau}\tilde h(v_\tau)=0$.
\end{proof}

Both $R$ and $h$ are affine, and the entry strike cancels between them.

\begin{proposition}\label{prop:pcarry}
$\tilde h(v)=\alpha+\beta v$, where
\begin{equation}\label{eq:pbeta}
\beta=-\frac{\varepsilon\lambda}{(\delta+\kappa_\QQ)(\delta+\kappa_\PP)},
\qquad
\alpha=\varepsilon\left[\frac{\theta_\QQ-\theta_\PP}{\delta}
-\frac{\theta_\QQ}{\delta+\kappa_\QQ}+\frac{\theta_\PP}{\delta+\kappa_\PP}\right]
-\sbar+\frac{c_m}{\delta}.
\end{equation}
In particular $\tilde h$ does not depend on $K$.
\end{proposition}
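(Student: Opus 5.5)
Both ingredients are explicit, so the plan is to compute $R$ and $h$ in closed form, subtract, and read off the coefficients. The running reward is $f(v)=\varepsilon(v-K)-c_m$. By linearity and the definition \eqref{eq:perp} of the perpetuity under $\MM=\PP$,
\[
R(v)=\varepsilon P_\PP(v)-\frac{\varepsilon K+c_m}{\delta}
=\varepsilon\Bigl[\frac{\theta_\PP}{\delta}+\frac{v-\theta_\PP}{\delta+\kappa_\PP}\Bigr]-\frac{\varepsilon K}{\delta}-\frac{c_m}{\delta}.
\]
Finiteness and the interchange of expectation and integral are already covered in the proof of Lemma~\ref{lem:reduce}. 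The only input is $\EE^\PP_v[v_u]=\theta_\PP+(v-\theta_\PP)e^{-\kappa_\PP u}$, which follows from the $\PP$-dynamics \eqref{eq:PQ}.

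For the stopping reward, I would use the identity displayed just before the remark on spread conventions, namely $\varepsilon(K(v)-K)/\delta=\varepsilon\Pi(v;K)$. Combined with \eqref{eq:pvsvalue}, this gives
\[
h(v)=\varepsilon\Bigl[\frac{\theta_\QQ}{\delta}+\frac{v-\theta_\QQ}{\delta+\kappa_\QQ}\Bigr]-\frac{\varepsilon K}{\delta}-\sbar .
\]
In $\tilde h=h-R$ the two terms $-\varepsilon K/\delta$ cancel, so $\tilde h$ is independent of $K$. Collecting the constants yields $\alpha$ exactly as in \eqref{eq:pbeta}: the $\varepsilon$-bracket, then $-\sbar$, then $+c_m/\delta$ coming from $-R$.

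The remaining step, and the only one that needs any care, is the slope:
\[
\beta=\varepsilon\Bigl[\frac{1}{\delta+\kappa_\QQ}-\frac{1}{\delta+\kappa_\PP}\Bigr]
=\varepsilon\,\frac{\kappa_\PP-\kappa_\QQ}{(\delta+\kappa_\QQ)(\delta+\kappa_\PP)}
=-\frac{\varepsilon\lambda}{(\delta+\kappa_\QQ)(\delta+\kappa_\PP)},
\]
using $\lambda=\kappa_\QQ-\kappa_\PP$ from \eqref{eq:link}. This is where the sign convention must be tracked correctly.

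None of this uses the constraint $\kappa_\PP\theta_\PP=\kappa_\QQ\theta_\QQ$, so the statement holds without the link, except for the interpretation of $\lambda$. I expect no real obstacle. The point to double-check is that $K$ enters $h$ and $R$ with identical coefficient $-\varepsilon/\delta$: in $h$ it comes from the tear-up payment, and in $R$ from the fixed leg paid over an exponential lifetime of mean $1/\delta$.
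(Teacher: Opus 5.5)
Your proposal is correct and follows the paper's proof essentially line for line. You compute $R$ under $\PP$ as the survival-weighted perpetuity and write $h$ from the $\QQ$ mark-to-market \eqref{eq:pvsvalue}. You then cancel the identical $-\varepsilon K/\delta$ terms and read off $\alpha$ and the slope $\varepsilon\bigl[(\delta+\kappa_\QQ)^{-1}-(\delta+\kappa_\PP)^{-1}\bigr]=-\varepsilon\lambda/\bigl((\delta+\kappa_\QQ)(\delta+\kappa_\PP)\bigr)$. Your side remark that the affine link $\kappa_\PP\theta_\PP=\kappa_\QQ\theta_\QQ$ is never used is also right, and consistent with Remark~\ref{rem:nolink}.
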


\begin{proof}
Computing $R$ under $\PP$ exactly as \eqref{eq:pvsvalue} was computed under
$\QQ$,
$R(v;K)=\varepsilon\bigl[(\theta_\PP-K)/\delta+(v-\theta_\PP)/(\delta+\kappa_\PP)\bigr]-c_m/\delta$,
while $h(v)=\varepsilon\bigl[(\theta_\QQ-K)/\delta+(v-\theta_\QQ)/(\delta+\kappa_\QQ)\bigr]-\sbar$.
The terms in $K$ are identical and cancel in $h-R$; collecting the rest gives
$\alpha$, and the coefficient of $v$ is
$\varepsilon\bigl[(\delta+\kappa_\PP)^{-1}\cdot(-1)+(\delta+\kappa_\QQ)^{-1}\bigr]
=-\varepsilon\lambda/\bigl((\delta+\kappa_\QQ)(\delta+\kappa_\PP)\bigr)$.
\end{proof}

The difference of the two perpetuities in that proof is the object the rest of
the paper works with. Write
\begin{equation}\label{eq:Ddef}
D(v):=P_\QQ(v)-P_\PP(v)
     =\int_0^\infty e^{-\delta u}
      \bigl(\EE^\QQ_v[v_u]-\EE^\PP_v[v_u]\bigr)\,du
     =\frac{K(v)}{\delta}-P_\PP(v)
\end{equation}
for the premium embedded in the perpetual rate: the $\QQ$-value at which the
position can be unwound, less the $\PP$-variance it accrues if it is held. In
this notation Proposition~\ref{prop:pcarry} reads
\begin{equation}\label{eq:htD}
\tilde h=\varepsilon D-\sbar+\frac{c_m}{\delta},
\end{equation}
which is the form used from here on: the premium enters through $D$, and the
spread and the carrying charge only as constants. Under Heston $D$ is affine
with $D'=-\lambda/\bigl((\delta+\kappa_\QQ)(\delta+\kappa_\PP)\bigr)$, so
$\beta=\varepsilon D'$. The representation \eqref{eq:Ddef} uses no property of
the variance dynamics beyond integrability; only the shape of $D$ is
model-dependent.

Three consequences, and they are the reason the section is here.

\begin{corollary}\label{cor:pgeom}
$\mathrm{sign}(\beta)=\mathrm{sign}(-\varepsilon\lambda)$, so the perpetual
contract reproduces the exercise geometry of Table~\ref{tab:cases}: the position
is unwound into a volatility spike when $\varepsilon\lambda<0$ and into a
collapse when $\varepsilon\lambda>0$.
\end{corollary}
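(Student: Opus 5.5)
The sign claim is immediate from \eqref{eq:pbeta}. Both denominators $\delta+\kappa_\QQ$ and $\delta+\kappa_\PP$ are strictly positive, because $\delta>0$ and both mean-reversion speeds are positive. Hence $\mathrm{sign}(\beta)=\mathrm{sign}(-\varepsilon\lambda)$, with $\beta=0$ exactly when $\lambda=0$. The real content lies in the geometric reading: I have to show that the sign of $\beta$ decides whether the exercise region of $U$ in Lemma~\ref{lem:reduce} sits above or below the continuation region. By that lemma, $V$ and $U$ share the same stopping rule, since $R$ is a fixed function of the state.

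I would argue through the sign of $(\LP-\delta)\tilde h$, which plays the role that $\mathcal{A}\Phi$ plays for the dated contract. Because $\tilde h=\alpha+\beta v$,
\[
(\LP-\delta)\tilde h(v)=\beta\kappa_\PP\theta_\PP-\delta\alpha-\beta(\kappa_\PP+\delta)v .
\]
This is affine in $v$ with slope $-\beta(\kappa_\PP+\delta)$, so its sign in $v$ is that of $\varepsilon\lambda$. Mirroring Proposition~\ref{prop:bound}, stopping is never optimal wherever $(\LP-\delta)\tilde h>0$: stopping at the first exit from a small interval around such a point beats immediate stopping, by Dynkin's formula. So the exercise region lies inside $\{(\LP-\delta)\tilde h\le0\}$. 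When $\varepsilon\lambda<0$ this is an upper half-line $\{v\ge v_0\}$, and when $\varepsilon\lambda>0$ it is a lower half-line.

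Containment in a half-line does not yet make the exercise region a half-line. To close the gap I would use Proposition~\ref{prop:dk} in the case $\beta>0$ and Proposition~\ref{prop:dk2} in the case $\beta<0$. Take $\beta>0$ and set $H=(\tilde h/G)\circ\psi^{-1}$. Since $\tilde h$ is affine and $F,G$ solve $(\LP-\delta)u=0$, the second derivative of $H$ has the sign of $(\LP-\delta)\tilde h$ evaluated at $\psi^{-1}(y)$, up to a positive factor; this is the standard Dayanik--Karatzas identity. So $H$ is convex on a lower interval and concave on an upper interval. Its smallest concave majorant therefore touches $H$ on a single upper interval $[\psi(b^*),\infty)$, and the exercise region is $\{v\ge b^*\}$, which means unwinding into a spike. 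The case $\beta<0$ is symmetric in $\varphi$ and gives a lower exercise interval, which means unwinding into a collapse. Finally, I would match these two cases against the rows of Table~\ref{tab:cases}.

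The main obstacle is the boundary behaviour. The touching set could be empty, for example if $H$ is nonpositive or its majorant is attained only at infinity, as in the hold-forever case. It could also be degenerate. So I would state the corollary as a claim about the \emph{shape} of the exercise region whenever it is nonempty, and check the limits of $H$ at $0$ and $\infty$ using the asymptotics of $M$ and $\mathcal{U}$. Existence of a threshold is a separate matter, which belongs to Theorem~\ref{thm:exit} and Proposition~\ref{prop:holdforever}.
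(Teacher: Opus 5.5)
Your proposal is correct and follows the paper's own route. The sign comes from \eqref{eq:pbeta} with its positive denominators. The geometry is the convex--concave structure of the transformed obstacle, driven by the sign of $(\LP-\delta)\tilde h$, which Lemma~\ref{lem:H} establishes and Theorem~\ref{thm:exit} turns into a one-sided exercise region; your subsolution containment also appears in the paper, in the proof of Lemma~\ref{lem:rho}. The only difference is cosmetic: for $\beta>0$ you use $(G,\psi)$ with Proposition~\ref{prop:dk} where the paper uses $(F,\varphi)$ with Proposition~\ref{prop:dk2}, and both reduce to maximising $\tilde h/F$, i.e.\ to \eqref{eq:spF}.
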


\begin{corollary}\label{cor:plam}
If $\lambda=0$ then $\beta=0$, so $\tilde h$ is constant in $v$ and the timing
problem disappears: a constant obstacle is taken at once when it is positive and
never when it is negative. Under \eqref{eq:link} that constant is
$-\sbar+c_m/\delta$.
\end{corollary}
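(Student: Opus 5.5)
The statement is Corollary~\ref{cor:plam}, and it follows in two moves from Proposition~\ref{prop:pcarry} and Lemma~\ref{lem:reduce}; the stopping theory is needed only in its most trivial form.

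First, I substitute $\lambda=0$ into \eqref{eq:pbeta}. The slope $\beta=-\varepsilon\lambda/\bigl((\delta+\kappa_\QQ)(\delta+\kappa_\PP)\bigr)$ vanishes, so $\tilde h\equiv\alpha$ is constant in $v$.

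Under \eqref{eq:link}, $\lambda=0$ gives $\kappa_\QQ=\kappa_\PP$. Then $\kappa_\PP\theta_\PP=\kappa_\QQ\theta_\QQ$ with $\kappa_\PP>0$ forces $\theta_\QQ=\theta_\PP$. The bracket in $\alpha$ then reads
\[
0-\frac{\theta_\QQ}{\delta+\kappa_\QQ}+\frac{\theta_\QQ}{\delta+\kappa_\QQ}=0,
\]
so $\alpha=-\sbar+c_m/\delta$. Equivalently, via \eqref{eq:htD}: $P_\QQ=P_\PP$ identically, so $D\equiv0$.

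Second, I solve $U=\sup_\tau\EE^\PP_v[e^{-\delta\tau}\alpha]$ for a constant obstacle $\alpha$. The convention of Lemma~\ref{lem:reduce} sets the payoff to $0$ on $\{\tau=\infty\}$, so $U=\sup_\tau\alpha\,\EE^\PP_v[e^{-\delta\tau}]$. Since $\delta>0$, the quantity $\EE^\PP_v[e^{-\delta\tau}]$ ranges over values in $[0,1]$, with $1$ attained only at $\tau=0$ and $0$ attained at $\tau=\infty$. This splits into three cases:
\begin{itemize}
\item If $\alpha>0$, the supremum is $\alpha$, attained uniquely (a.s.) at $\tau=0$.
\item If $\alpha<0$, the supremum is $0$, attained only by $\tau=\infty$, i.e.\ never unwinding.
\item If $\alpha=0$, every $\tau$ is optimal.
\end{itemize}

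The only point needing care is the uniqueness claim that $\tau=0$ is the sole optimiser when $\alpha>0$. This follows because $\EE[e^{-\delta\tau}]<1$ whenever $\PP(\tau>0)>0$. Nothing here depends on $v$, which is the precise sense in which the timing problem disappears: the decision is made at once and does not depend on the state.
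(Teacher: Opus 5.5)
Your proposal is correct and follows the paper's own route. The paper gives no separate proof and reads the corollary straight off \eqref{eq:pbeta}: $\beta=0$ from the slope formula, then $\theta_\QQ=\theta_\PP$ from \eqref{eq:link}, so the bracket in $\alpha$ vanishes (equivalently $D\equiv0$ in \eqref{eq:htD}), and finally the trivial constant-obstacle problem, with $\tau=0$ optimal for a positive constant and $\tau=\infty$ for a negative one. Your uniqueness argument via $\EE[e^{-\delta\tau}]<1$ whenever $\PP(\tau>0)>0$, and your treatment of the boundary case $\alpha=0$, are harmless additions that the paper leaves implicit.
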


\begin{remark}[without the affine link]\label{rem:nolink}
Corollary~\ref{cor:plam} does not need \eqref{eq:link}. Freed of the link,
$\lambda=0$ constrains only the speeds and still permits
$\theta_\QQ\neq\theta_\PP$; $D$ then collapses to the constant
$\kappa(\theta_\QQ-\theta_\PP)/\bigl(\delta(\delta+\kappa)\bigr)$ with
$\kappa=\kappa_\PP=\kappa_\QQ$, which moves $\alpha$ but not $\beta$ and so
leaves the conclusion intact.
\end{remark}

What Corollary~\ref{cor:plam} isolates is that optionality comes from the \emph{slope} of
$D$ and not from its level: a premium that does not vary with the state enters
$\tilde h$ exactly as $c_m$ does, and neither gives any reason to prefer one
moment to another. As in Section~\ref{sec:carry}, the premium is in this sense
the sole source of optionality --- a conclusion that depends on the carrying
charge being constant in $v$, and that Appendix~\ref{app:extras} withdraws when
it is not. Here it enters through the gap between the $\QQ$-rate at which the
position is unwound and the $\PP$-variance accrued while it is held, and
deleting the accrual would delete $\lambda$ from \eqref{eq:pbeta} altogether.

\begin{corollary}\label{cor:pnoK}
The exercise region does not depend on the rate $K$ at which the position was
entered, and neither, therefore, does its boundary.
\end{corollary}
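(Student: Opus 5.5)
The plan is to read the exercise region off the decomposition of Lemma~\ref{lem:reduce} and show that every ingredient it depends on is free of $K$. By Lemma~\ref{lem:reduce}, $V(v;K)=R(v;K)+U(v)$, where $U$ is the value of the stopping problem with reward $\tilde h$. The exercise region of \eqref{eq:pvsobj} is the set where stopping at once attains the value, $\{v:V(v;K)=h(v;K)\}$. Subtracting $R(v;K)$ from both sides, this becomes $\{v:U(v)=h(v;K)-R(v;K)\}=\{v:U(v)=\tilde h(v)\}$. So the exercise set of the original problem coincides with the stopping set of the reduced problem $U(v)=\sup_\tau\EE^\PP_v[e^{-\delta\tau}\tilde h(v_\tau)]$.

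Next I invoke Proposition~\ref{prop:pcarry}: $\tilde h=\alpha+\beta v$, with $\alpha$ and $\beta$ given by \eqref{eq:pbeta}. Neither depends on $K$, equivalently $\tilde h=\varepsilon D-\sbar+c_m/\delta$ by \eqref{eq:htD}. The reduced problem is posed for the $\PP$-dynamics \eqref{eq:PQ} with survival rate $\delta$, which also do not involve $K$. Hence $U$ is a function of $v$ alone. Its representation via Proposition~\ref{prop:dk} as $G\,\widehat H\circ\psi$, with $H=(\tilde h/G)\circ\psi^{-1}$, is built entirely from $K$-free objects, and therefore so is $\{U=\tilde h\}$. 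The boundary of this set, being a property of the set itself, inherits the independence.

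One point needs care: the decomposition $V=R+U$ must hold with the same optimal stopping times, not merely the same values. This is immediate from the proof of Lemma~\ref{lem:reduce}. For every fixed $\tau$, the payoff of \eqref{eq:pvsobj} equals $R(v;K)+\EE^\PP_v[e^{-\delta\tau}\tilde h(v_\tau)]$, so the two objectives differ by a $\tau$-independent constant. They therefore have identical (near-)optimisers, including the first entry time $\tau^*$ into $\{U=\tilde h\}$ from Proposition~\ref{prop:dk}, which is optimal because $\tilde h$ is continuous.

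The only step I expect to need a remark is the convention on $\{\tau=\infty\}$, where the payoff is taken as $0$. Here too the identity holds pathwise in $K$-free form, as noted at the end of the proof of Lemma~\ref{lem:reduce}. With that in place, the corollary follows without further computation.
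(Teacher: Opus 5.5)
Your proof is correct and follows the paper's argument: $\tilde h$ is free of $K$ by Proposition~\ref{prop:pcarry}, and the reduced stopping problem depends only on $\tilde h$, $\delta$ and the $\PP$-law of $v$. The one addition is that you spell out, via Lemma~\ref{lem:reduce}, that the original exercise set $\{V=h\}$ coincides with $\{U=\tilde h\}$ once $R$ is subtracted, a step the paper's one-line proof leaves implicit.
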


\begin{proof}
By Theorem~\ref{thm:dynkin} and Proposition~\ref{prop:dk} the region is
determined by $\tilde h$ and the law of $v$ alone, and $\tilde h$ is free of $K$
by Proposition~\ref{prop:pcarry}.
\end{proof}

The entry rate therefore fixes the \emph{value} of the position, through
$R(v;K)$, and not the decision --- the same separation of value from timing that
Proposition~\ref{prop:reduce} produced for the accrued variance.

\subsection{The exit threshold}

With $\tilde h$ affine, Propositions~\ref{prop:dk} and \ref{prop:dk2} apply
directly. The following lemma records the properties of the transformed reward;
note the factor $\tilde h'=\beta$, which is absent when $h(x)=x-c$ as in
\citet{leung2014} but not here.

\begin{lemma}\label{lem:H}
Let $H=(\tilde h/F)\circ\varphi^{-1}$ with $\beta>0$. Then $H$ is continuous on
$[\varphi(0),0]$, twice differentiable on $(\varphi(0),0)$, and
\begin{equation}\label{eq:Hprime}
H'(z)=\frac{1}{\varphi'(y)}\cdot\frac{\tilde h'(y)F(y)-\tilde h(y)F'(y)}{F^2(y)},
\qquad
H''(z)=\frac{2\,(\mathcal{L}-\delta)\tilde h(y)}{\gamma^2y\,F(y)\,\varphi'(y)^2},
\qquad z=\varphi(y).
\end{equation}
Moreover $H(0)=0$; $H<0$ on $[\varphi(0),\varphi(v_0^*))$ and $H>0$ on
$(\varphi(v_0^*),0)$, where $\tilde h(v_0^*)=0$; and $H$ is convex on
$(\varphi(0),\varphi(v_s^*)]$ and concave on $[\varphi(v_s^*),0)$, where
$(\mathcal{L}-\delta)\tilde h(v_s^*)=0$.
\end{lemma}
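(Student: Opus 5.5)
The plan is to work directly from the definitions $H=(\tilde h/F)\circ\varphi^{-1}$ and $\varphi=-G/F$, using three ingredients: the ODE $\mathcal{L}F=\delta F$, the Wronskian of $F$ and $G$, and the fact from Proposition~\ref{prop:pcarry} that $\tilde h(v)=\alpha+\beta v$ is affine with $\beta>0$. The argument has four steps: the endpoints, the derivative formulas, and then the two sign statements.

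\emph{Endpoints.} Under \eqref{eq:feller} and its $\PP$-analogue, $\nu>1$. Hence $F(0)=M(\mathsf a,\nu;0)=1$, and $G(0)=\mathcal U(\mathsf a,\nu;0)=\Gamma(\nu-1)/\Gamma(\mathsf a+\nu-1)$ is finite, because the $z^{1-\nu}$ term in \eqref{eq:MU} has vanishing coefficient in the limit. So $\varphi(0)=-G(0)$ is a finite negative number. As $y\to\infty$ we have $M(\mathsf a,\nu;\varsigma y)\sim\frac{\Gamma(\nu)}{\Gamma(\mathsf a)}e^{\varsigma y}(\varsigma y)^{\mathsf a-\nu}$ and $\mathcal U(\mathsf a,\nu;\varsigma y)\sim(\varsigma y)^{-\mathsf a}$. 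Therefore $\varphi(y)\to0^-$, and $\tilde h(y)/F(y)\to0$ because an affine function is dominated by the exponential growth of $F$. This gives $H(0)=0$ and continuity at $z=0$. Continuity at $\varphi(0)$ holds because $\tilde h/F$ is continuous at $y=0$ and $\varphi$ is a continuous strictly increasing bijection of $[0,\infty]$ onto $[\varphi(0),0]$. I expect this endpoint bookkeeping to be the most delicate part, since it is the only place where the confluent hypergeometric asymptotics and the Feller condition are genuinely used.

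\emph{Derivative formulas.} The formula for $H'$ is just the chain rule, $H'(z)=(\tilde h/F)'(y)/\varphi'(y)$. For $H''$, I will write $\varphi'=W/F^2$ with $W:=G F'-G'F>0$. Then
\[
H'(z)=\frac{\tilde h'F-\tilde hF'}{W}=:\frac{N}{W}.
\]
Writing $\mu(y)=\kappa_\PP(\theta_\PP-y)$ and $\sigma^2(y)=\gamma^2y$, Abel's identity gives $W'=-(2\mu/\sigma^2)W$. The ODE gives $F''=(2/\sigma^2)(\delta F-\mu F')$, so $N'=\tilde h''F-\tilde hF''$. Combining these,
\[
\frac{d}{dy}\frac{N}{W}=\frac{N'+(2\mu/\sigma^2)N}{W}=\frac{2F\,(\mathcal L-\delta)\tilde h}{\sigma^2W},
\]
and the $F'$ terms cancel in this step. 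Dividing by $\varphi'=W/F^2$ yields $H''=2(\mathcal L-\delta)\tilde h/(\gamma^2yF\varphi'^2)$, which is the stated formula.

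\emph{Sign of $H$.} Since $F>0$, $\mathrm{sign}\,H(\varphi(y))=\mathrm{sign}\,\tilde h(y)$. Because $\tilde h$ is affine with $\beta>0$, it changes sign exactly once, at $v_0^*=-\alpha/\beta$, and $\varphi$ is increasing, so the claimed sign pattern follows.

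\emph{Convexity and concavity.} We have
\[
(\mathcal L-\delta)\tilde h(y)=\beta\kappa_\PP(\theta_\PP-y)-\delta(\alpha+\beta y),
\]
which is affine with slope $-\beta(\kappa_\PP+\delta)<0$. It is therefore positive below its root $v_s^*$ and negative above it. Since the prefactor $2/(\gamma^2yF\varphi'^2)$ is positive, $H$ is convex on the image of $(0,v_s^*]$ and concave on the image of $[v_s^*,\infty)$. If $v_0^*$ or $v_s^*$ falls outside $(0,\infty)$, the corresponding interval is read as empty or as the whole domain.
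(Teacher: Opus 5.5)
\paragraph{Verdict.} Your argument follows the paper's proof, and everything except the left-endpoint claim is correct. The steps match: the chain rule gives $H'$; $H''$ comes from $\mathcal{L}F=\delta F$ and $\varphi'=W/F^2$, and your Abel-identity bookkeeping does return the stated $H''$; $H(0)=0$ follows from the growth of $F$, where you use large-argument asymptotics and the paper uses L'H\^opital; and the sign and convexity claims come from counting signs of the affine functions $\tilde h$ and $(\mathcal{L}-\delta)\tilde h$. The one wrong step is the one you flagged as delicate, the endpoint at $y=0$.

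\paragraph{$G(0)$ is not finite.} With $\nu>1$ the second term of $\mathcal{U}$ in \eqref{eq:MU} does not drop out. Its coefficient $\Gamma(\nu-1)/\Gamma(\mathsf{a})$ is finite and positive, while $z^{1-\nu}\to\infty$. Hence $G(v)\sim\frac{\Gamma(\nu-1)}{\Gamma(\mathsf{a})}(\varsigma v)^{1-\nu}\to+\infty$ as $v\downarrow0$.

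The finite value $\mathcal{U}(\mathsf{a},\nu;0)=\Gamma(1-\nu)/\Gamma(\mathsf{a}-\nu+1)$ belongs to the case $\nu<1$. The Feller condition does the opposite of what you claim. It makes $0$ an entrance boundary, and at an entrance boundary the decreasing fundamental solution blows up.

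Consequently $\varphi(0+)=-G(0+)/F(0)=-\infty$, not a finite negative number. So $\varphi$ maps $(0,\infty)$ onto $(-\infty,0)$, and the interval $[\varphi(0),0]$ in the statement is really $[-\infty,0]$. The paper's proof passes over this point too: it says only that continuity is ``inherited'' from $F$ and $\varphi$.

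\paragraph{The repair is cheap.} Continuity at the left end should be read as existence of the limit
$\lim_{z\to-\infty}H(z)=\lim_{y\downarrow0}\tilde h(y)/F(y)=\alpha$.
This holds because $\varphi^{-1}(z)\downarrow0$ as $z\to-\infty$ and $F(0)=1$. The sign statement on $[\varphi(0),\varphi(v_0^*))$ is read the same way. Nothing else in your proof uses finiteness of $\varphi(0)$. With that sentence replaced, your derivative formulas, $H(0)=0$, the sign pattern and the convexity split all stand, and the proof is complete.
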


\begin{proof}
See Appendix~\ref{app:proofs}.
\end{proof}

\begin{theorem}\label{thm:exit}
Suppose the exercise region is non-empty.
\begin{enumerate}
\item[(i)] If $\beta>0$ (equivalently $\varepsilon\lambda<0$) the continuation
region is $(0,b^*)$, $U(v)=\tilde h(b^*)F(v)/F(b^*)$ on it, and $b^*$ is the
unique root of
\begin{equation}\label{eq:spF}
\beta\,F(b)-\tilde h(b)\,F'(b)=0 .
\end{equation}
\item[(ii)] If $\beta<0$ the continuation region is $(b^*,\infty)$,
$U(v)=\tilde h(b^*)G(v)/G(b^*)$ on it, and $b^*$ is the unique root of
\begin{equation}\label{eq:spG}
\beta\,G(b)-\tilde h(b)\,G'(b)=0 .
\end{equation}
\end{enumerate}
In either case the first entry time into the exercise region is optimal.
In either case $V(v;K)=R(v;K)+U(v)$.
\end{theorem}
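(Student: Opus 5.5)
Case (i) goes through Proposition~\ref{prop:dk2} with the transformed reward $H=(\tilde h/F)\circ\varphi^{-1}$ of Lemma~\ref{lem:H}, and case (ii) is the mirror image through Proposition~\ref{prop:dk}. By Lemma~\ref{lem:reduce} it suffices to solve $U(v)=\sup_\tau\EE^\PP_v[e^{-\delta\tau}\tilde h(v_\tau)]$, after which $V=R+U$ is immediate. Since $\tilde h=\alpha+\beta v$ by Proposition~\ref{prop:pcarry}, the whole problem is the concave-majorant construction of \citet{dayanik2003} applied to an affine obstacle. The argument follows \citet{leung2014}, keeping track of the slope $\beta$.

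For (i), with $\beta>0$, Lemma~\ref{lem:H} supplies the shape of $H$ on $[\varphi(0),0]$. It is negative up to $\varphi(v_0^*)$, positive afterwards, and $H(0)=0$. It is convex up to $\varphi(v_s^*)$ and concave beyond. I also need $v_s^*$ to be the right root. Here $(\mathcal L-\delta)\tilde h=\beta\kappa_\PP(\theta_\PP-v)-\delta(\alpha+\beta v)$ is affine with slope $-\beta(\kappa_\PP+\delta)<0$, so it changes sign once, from positive to negative, which gives convex-then-concave. The smallest decreasing concave majorant $\widehat H$ is then obtained by drawing a line from the left and touching $H$ tangentially at a single point $z^*=\varphi(b^*)$ in the concave part. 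To the left of $z^*$, $\widehat H$ is affine; to the right, $\widehat H=H$. This uses that $H$ is non-increasing near $0$, or at least that the majorant becomes $H$ itself beyond the tangency.

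I will work with the transformation of Proposition~\ref{prop:dk}, $(\tilde h/G)\circ\psi^{-1}$, which is what matches a continuation region at the lower end. There the majorant is a straight line through the origin, since $0$ is an entrance boundary and $G(0+)=\infty$ forces $\widehat H(\psi(0))=0$. So on the continuation region $U=cF$, because $G\cdot(\text{linear in }F/G)=cF$. Tangency of this line, $\widehat H(\psi)=c\,\psi$, with $\tilde h/G$ in the $\psi$-variable amounts to smooth fit, $\tilde h(b)=cF(b)$ and $\tilde h'(b)=cF'(b)$. Eliminating $c$ gives $\beta F(b)-\tilde h(b)F'(b)=0$ and $U(v)=\tilde h(b^*)F(v)/F(b^*)$, which is \eqref{eq:spF}.

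Uniqueness of the root is the main step. Set $\Lambda(b)=\beta F(b)-\tilde h(b)F'(b)$. Then $\Lambda'(b)=-\tilde h(b)F''(b)$. From the ODE, $\tfrac12\gamma^2bF''=\delta F-\kappa_\PP(\theta_\PP-b)F'$. I would rather use the route via $H''$. Roots of $\Lambda$ are the critical points of $\tilde h/F$, and in the transformed variable they are the points where $H'$ equals the slope of the chord from the left endpoint. Concavity of $H$ on $[\varphi(v_s^*),0)$ makes this tangency unique once existence is known. Existence comes from the hypothesis that the exercise region is non-empty: the majorant must touch $H$ somewhere, and positivity shows the touching point lies where $\tilde h>0$, that is, beyond $v_0^*$. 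A root in the convex part is ruled out because a line from the origin cannot be tangent from above in the convex region while still dominating $H$.

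Optimality of the first entry time follows from the last clause of Proposition~\ref{prop:dk}, since $\tilde h$ is continuous. Case (ii), with $\beta<0$, is symmetric: swap the roles of $F$ and $G$. The exercise region now abuts $0$, the continuation region is $(b^*,\infty)$, and $\infty$ being natural forces the majorant to be linear through the corresponding endpoint. This gives $U=cG$ and smooth pasting $\beta G-\tilde hG'=0$. The concavity analysis of $H$ is reflected accordingly, because $(\mathcal L-\delta)\tilde h$ now has positive slope.

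I expect the obstacle to be rigour at the boundary endpoints: showing $\widehat H$ vanishes at the transformed images of $0$ and $\infty$, and confirming that the affine growth of $\tilde h$ is dominated by $F$ at infinity. The latter holds because $F$ grows like $e^{\varsigma v}v^{\mathsf a-\nu}$, so $\limsup \tilde h^+/F=0$, which is the condition under which \citet{dayanik2003} guarantee finiteness.
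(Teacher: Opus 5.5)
Your outline is the paper's argument in all essentials: the Dayanik--Karatzas majorant of the transformed affine obstacle, the convex-then-concave shape from Lemma~\ref{lem:H}, smooth pasting read off as tangency, and $V=R+U$ from Lemma~\ref{lem:reduce}. You announce Proposition~\ref{prop:dk2} for (i), but you then work in the $\psi$ chart of Proposition~\ref{prop:dk}, where the majorant on the continuation region is a ray through the origin. The paper stays in the $\varphi$ chart, where the majorant is the horizontal line at the maximum of $H$. There \eqref{eq:spF} is simply $H'=0$, and existence in (i) is Rolle's theorem between $\varphi(v_0^*)$ and $0$, with no appeal to non-emptiness. Either chart works. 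The problem is the step you yourself call the main one, uniqueness of the root.

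Your exclusion of the convex part says that a line from the origin cannot touch $H$ from above there while dominating it. That rules out only a \emph{majorizing} tangency. But \eqref{eq:spF} is just the first-order condition $(\tilde h/F)'=0$, and nothing you say prevents a second root in the convex part, at a local minimum of $\tilde h/F$ (in your chart, a tangent through the origin lying locally below $H$). So ``unique root'' is not yet proved.

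The missing fact is the sign of $\Lambda$ where the reward is non-positive. There $\Lambda=\beta F-\tilde hF'\ge\beta F>0$, so $H$ is strictly increasing on $(\varphi(0),\varphi(v_0^*)]$. Since $H'$ increases across the convex part, it stays positive up to $\varphi(v_0^*)\vee\varphi(v_s^*)$. Every root therefore lies in the concave part, where $H'$ is strictly decreasing. This is exactly the paper's remark that $H$ is increasing to the left of that point.

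Quicker still is the computation you started and then dropped. You have $\Lambda'=-\tilde hF''$ with $F''>0$, because the Taylor coefficients of $M(\mathsf a,\nu;\cdot)$ are positive. So $\Lambda$ is positive on $\{\tilde h\le0\}$ and strictly decreasing on the upper interval $\{\tilde h>0\}$, and hence has at most one root.

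For (ii), the same identity with $G''>0$ (complete monotonicity of $\mathcal{U}$) shows that $\beta G-\tilde hG'$ is negative on $\{\tilde h\le0\}$ and strictly decreasing on the lower interval $\{\tilde h>0\}$. Again there is at most one root, and a root exists only if $\alpha=\tilde h(0+)>0$. That is where the non-emptiness hypothesis is genuinely used.
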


\begin{proof}
See Appendix~\ref{app:proofs}.
\end{proof}

Two features differ from the linear-reward case usually quoted. The
smooth-pasting conditions carry the factor $\beta=\tilde h'$; without it the
equation is not even dimensionally consistent, since $F$ and $\tilde hF'$ then
differ by a unit of variance. And which of $F$ or $G$ enters is not a modelling
choice but is fixed by $\mathrm{sign}(\varepsilon\lambda)$.

\begin{figure}[t]\centering
\includegraphics[width=\textwidth]{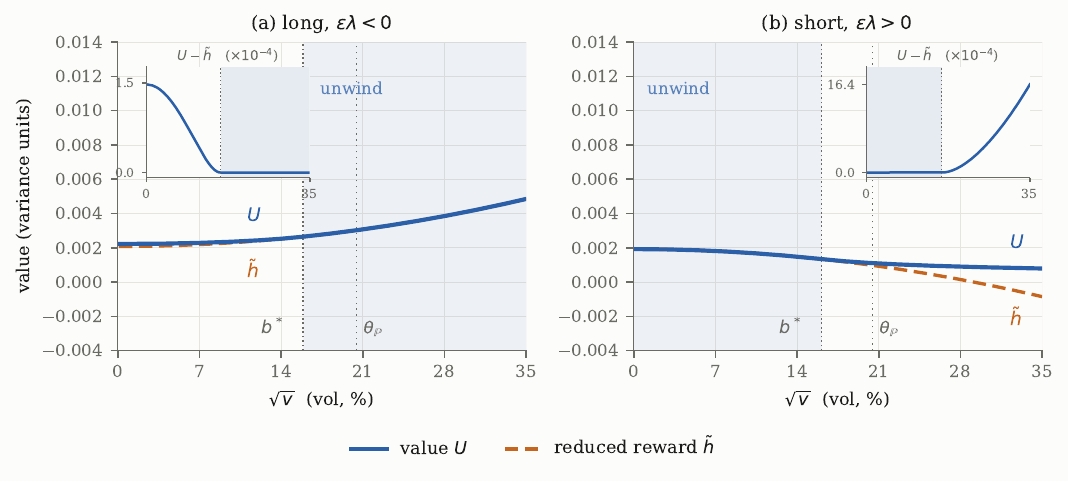}
\caption{The reduced exit problem of Lemma~\ref{lem:reduce} for the perpetual
variance swap at an expected life of two years ($\delta=0.5$) and the base
parameters of
Section~\ref{sec:numerics}. $U$ is the smallest $\delta$-excessive majorant of
the reduced reward $\tilde h$ and leaves it smoothly at $b^*$. Left: long,
$\varepsilon\lambda<0$, $c_m=0$, unwound into a spike. Right: short,
$\varepsilon\lambda>0$, $c_m=0.004$, unwound into a collapse. Both thresholds
correspond to a perpetual fair rate of about $20.3\%$; the dotted line is
$\theta_\PP$. On the main axes the two curves are hard to separate in the left
panel, because there the option value peaks at $1.5\times10^{-4}$ against a
vertical span of $1.8\times10^{-2}$; the insets therefore plot $U-\tilde h$ on
its own scale: strictly positive on the continuation region, identically zero on
the exercise region.}
\label{fig:pair}
\end{figure}

Figure~\ref{fig:pair} shows the two cases side by side. What the insets make
visible is the continuation region, the set on which $U-\tilde h>0$; on the main
axes the option value is too small next to the reward to be seen at all.

\begin{remark}\label{rem:vAfix}
For the fixed-maturity contract of Section~\ref{sec:setup} the analogous
break-even level --- the variance at which the swap marks at zero,
$A_t+\Phi_0(t,v)=0$ in the notation of \eqref{eq:h} --- is
\begin{equation}\label{eq:vAstar}
\begin{aligned}
v^*_{A_t}&=\theta_\QQ-\frac{\kappa_\QQ}{1-e^{-\kappa_\QQ(T-t)}}
\Bigl\{A_t+T(\theta_\QQ-\Sigma^2)-t\theta_\QQ\Bigr\}\\[2pt]
&=\theta_\QQ-\frac{\kappa_\QQ}{1-e^{-\kappa_\QQ(T-t)}}
\Bigl\{(A_t-t\Sigma^2)+(T-t)(\theta_\QQ-\Sigma^2)\Bigr\},
\end{aligned}
\end{equation}
which depends on the path only through the accrued variance $A_t$. It is not the
exercise boundary --- by Proposition~\ref{prop:bound} that is governed by
$v_c(t)$ --- but it is the natural reference point for a position that is under
water.
\end{remark}

\section{When the position is entered}
\label{sec:pair}

A trader who is not yet in the market faces two decisions, and the value of the
exit rule is the reward of the entry problem. From
Remark~\ref{rem:long} onwards the position is the short: entering means selling
variance, going short vega at a fixed $1/\delta$ of duration by
\eqref{eq:pvsvalue}, and unwinding means flattening that exposure --- here by
tear-up rather than by an offsetting trade, as Section~\ref{sec:closed}
describes. Entering at the prevailing fair
rate makes the swap worth zero, so by Lemma~\ref{lem:reduce} and
Proposition~\ref{prop:pcarry} the value of entering at level $v$ is
$R(v;K(v))+U(v)=U(v)-\tilde h(v)-\sbar$, and with an entry concession $\sebar$
the entry problem is
\begin{equation}\label{eq:Jentry}
J(v)=\sup_{\zeta}\ \EE^\PP_v\Bigl[e^{-\delta\zeta}\rho(v_\zeta)\Bigr],
\qquad
\rho:=U-\tilde h-\sbar-\sebar ,
\end{equation}
with $\rho\ge-(\sbar+\sebar)$, equal to the option value of the position net of the
round trip. The bound holds because both $\tau\equiv0$ and $\tau\equiv\infty$ are
admissible in the definition of $U$, the first returning $\tilde h(v)$ and the
second $0$, so that $U\ge\max\{\tilde h,0\}$ pointwise --- equivalently, $U$ is by
Theorem~\ref{thm:dynkin} a non-negative \emph{majorant} of $\tilde h$. Its floor
is attained exactly on the exercise region $\{U=\tilde h\}$, where the option to
wait is worthless and a trader entering there would pay both spreads for
nothing; Proposition~\ref{prop:interval}(ii) turns that observation into the
disjointness of the entry and exercise regions. The factor $e^{-\delta\zeta}$ is again a survival probability, and
what it says is that the contract series itself is offered only until
$T_\delta$: a trader who has not entered by then never does. Section~\ref{sec:closed}
needed the termination event only to be shared between a contract and its
offset; here it must also end the trading opportunity. That is the one
substantive extension the entry problem requires, and it is the reading of
Section~\ref{sec:closed} under which the clock is a lapsing mandate rather than
a contractual term: when the mandate goes, so does the chance to use it. It is
not idle --- without it a flat trader would face no pressure whatever and would
wait indefinitely for a better level. Note what
\eqref{eq:Jentry} does not contain: a running term. That
is the convention $c_0=0$ of Section~\ref{sec:setup} --- while the trader is
flat she accrues no variance and pays no carry, so waiting costs her nothing ---
and we keep it until Section~\ref{sec:idle}, which shows that the entry
threshold, unlike the unwind threshold, is acutely sensitive to it. The same problem can be posed for the fixed-maturity contract of
Section~\ref{sec:setup}. A trader who enters at time $t$ takes a contract struck
fair at that moment, so its mark-to-market vanishes there: by \eqref{eq:h} with
$A_t=0$ this is $\Phi_0(t,v)=0$, and hence $\Phi(t,v)=-s(T-t)$ whichever side she
takes. Her value of holding is therefore
$w(t,v)=u(t,v)+\Phi(t,v)=u(t,v)-s(T-t)$, with $u=w-\Phi$ the excess value of
\eqref{eq:viu}, and the entry spread --- crossed on the same strike, against the
same remaining vega --- costs a further $s_e(T-t)$. The obstacle is thus
\[
\rho(t,v)=u(t,v)-(s+s_e)(T-t),
\]
and $J$ solves $\min\{-\partial_tJ-\LP J,\ J-\rho\}=0$ with $J(T,\cdot)=0$.

One step there deserves spelling out. The strike $\Sigma$ is fixed at inception
for the contract of Section~\ref{sec:setup}, but a trader entering at $t$ takes
a contract struck fair at that moment, so the strike she faces is itself a
function of the state, $\Sigma^\ast(t,v)$. Were $u$ to depend on the strike, the
object in her obstacle would be the composite
$u\bigl(t,v;\Sigma^\ast(t,v)\bigr)$, in which $(t,v)$ enters twice --- directly,
and again through $\Sigma^\ast$ --- and which solves no equation in $(t,v)$; a
separate $u$ would be needed for every entry state. But $u$ carries no such
dependence: the strike enters $\Phi$ only through the additive constant
$-\varepsilon T\Sigma^2$, and both $\partial_t$ and $\LP$ annihilate constants,
so by Proposition~\ref{prop:carry} the forcing term $\mathcal{A}\Phi$ is free of
$\Sigma$. The data of \eqref{eq:viu} are therefore strike-free, its solution $u$
is one and the same function whatever the contract was struck at, and the single
$u$ computed in Section~\ref{sec:finite} serves every entry state.

Both entry problems --- the perpetual one of \eqref{eq:Jentry} and the
fixed-maturity one just posed --- are well posed. Whether either delivers a rule
a trader would use is
decided by the holding cost $c_m$, and \eqref{eq:htD} makes that dependence
explicit: $c_m$ shifts the reduced reward $\tilde h=\varepsilon D-\sbar+c_m/\delta$
by the constant $c_m/\delta$ and leaves its slope alone. Everything in this
section turns on that one fact.

The numerical illustrations in this section use a calibration of their own,
chosen so that the physical law can be held fixed while the premium is varied
--- which the parameters of Section~\ref{sec:numerics}, whose premium is pinned
at $0.60$ volatility points, cannot do. Fixing the shape $\nu$ of \eqref{eq:abz} at $2.84$,
$\kappa_\PP=2.15$ and the median of $v$ at $19\%$ in volatility gives
\begin{equation}\label{eq:cal7}
\theta_\PP=0.0408,\qquad \gamma=0.2485,\qquad \delta=0.5,\qquad \sbar=\sebar=20\ \text{bp},
\end{equation}
and a $99$th percentile of volatility of $34.2\%$.
Throughout, the premium is measured in \emph{volatility points}: the perpetual
fair rate \eqref{eq:pvsrate} and its physical counterpart, both quoted as
volatilities, differ by
$\sqrt{K_\QQ(v)}-\sqrt{K_\PP(v)}$ with
$K_\MM(v)=\theta_\MM+\delta(v-\theta_\MM)/(\delta+\kappa_\MM)$, and this gap is
read at the median of $v$.
Here $\kappa_\QQ$ is reset premium by premium to deliver the stated number of
points, which at three gives
$\kappa_\QQ=1.503$, $\theta_\QQ=0.0583$ and $\lambda=-0.647$ --- a median
strike of $22.97\%$ against a physical $19.98\%$.

\subsection{Three regimes in the holding cost}
\label{sec:regimes}

The first regime is the one in which the exit problem has no content at all.

\begin{proposition}\label{prop:holdforever}
If $\varepsilon D(v)<\sbar-c_m/\delta$ for every $v$, equivalently if
\begin{equation}\label{eq:cmstar}
c_m\ <\ c_m^\ast:=\delta\Bigl(\sbar-\sup_v\varepsilon D(v)\Bigr),
\end{equation}
then $U\equiv0$ and the exercise region is empty: the position, once entered, is
never unwound. The entry reward is then explicit,
\begin{equation}\label{eq:rho0}
\rho(v)=-\varepsilon D(v)-\frac{c_m}{\delta}-\sebar .
\end{equation}
\end{proposition}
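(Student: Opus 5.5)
The condition $\varepsilon D(v)<\sbar-c_m/\delta$ for all $v$ is, by \eqref{eq:htD}, the statement that $\tilde h(v)=\varepsilon D(v)-\sbar+c_m/\delta<0$ for every $v\in(0,\infty)$. The equivalence with \eqref{eq:cmstar} is immediate. Rearranging, the condition reads $c_m<\delta\bigl(\sbar-\varepsilon D(v)\bigr)$ for all $v$, that is, $c_m<\delta(\sbar-\sup_v\varepsilon D(v))=c_m^\ast$. Read with the convention that $c_m^\ast=-\infty$ when $\varepsilon D$ is unbounded above, this matches Remark~\ref{rem:long}.

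A caveat is needed on the strict inequality. Since $\varepsilon D$ is affine under Heston, its supremum over $(0,\infty)$ is either $+\infty$ or is approached only as $v\downarrow0$ (when $\beta<0$), or $\beta=0$. In the case $\beta<0$ the value $c_m=c_m^\ast$ would still give $\tilde h<0$ pointwise on the open interval. So the two forms are equivalent up to this boundary case, and I would state the direction that is needed: $c_m<c_m^\ast$ implies $\tilde h<0$ everywhere.

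With $\tilde h<0$ pointwise, I will show $U\equiv0$ directly from the definition in Lemma~\ref{lem:reduce}. For any stopping time $\tau$,
\[
\EE^\PP_v\bigl[e^{-\delta\tau}\tilde h(v_\tau)\bigr]\le0,
\]
with the convention that the integrand is zero on $\{\tau=\infty\}$. The choice $\tau\equiv\infty$ attains $0$, so $U(v)=0$. No stopping time with $\PP_v(\tau<\infty)>0$ attains it, since on that event the integrand is strictly negative. Equivalently, by Theorem~\ref{thm:dynkin}, $0$ is a $\delta$-excessive majorant of $\tilde h$ and is clearly the smallest non-negative one. The exercise region $\{U=\tilde h\}=\{\tilde h=0\}$ is therefore empty, and the optimal policy is never to unwind: the position runs to $T_\delta$. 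The only technical point is integrability of $e^{-\delta\tau}\tilde h(v_\tau)$, which follows from $\tilde h$ being affine and $\sup_u\EE^\PP_v[v_u]<\infty$, as in the proof of Lemma~\ref{lem:reduce}.

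The entry reward then follows by substitution into \eqref{eq:Jentry}:
\[
\rho=U-\tilde h-\sbar-\sebar=0-\Bigl(\varepsilon D-\sbar+\frac{c_m}{\delta}\Bigr)-\sbar-\sebar=-\varepsilon D-\frac{c_m}{\delta}-\sebar,
\]
which is \eqref{eq:rho0}. The exit concession $\sbar$ cancels because it is never paid.

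The main obstacle is not the optimal-stopping step, which is trivial once $\tilde h<0$, but making the boundary case of the supremum precise. I would handle it by stating the proposition for the strict pointwise inequality and noting that \eqref{eq:cmstar} is sufficient for it.
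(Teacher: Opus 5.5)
Your proof is correct and is essentially the paper's: $\tau\equiv\infty$ gives $U\ge0$, $\tilde h<0$ gives $U\le0$, so the contact set $\{U=\tilde h\}$ is empty and $\rho=-\tilde h-\sbar-\sebar$. Your caveat on the ``equivalently'' is also right. When $\sup_v\varepsilon D$ is not attained (the short with $\beta<0$), the pointwise condition still holds at $c_m=c_m^\ast$, so the two forms differ only at that endpoint. Only the implication you use is needed.
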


\begin{proof}
$\tau\equiv\infty$ is admissible and returns $0$, so $U\ge0$; and
$\tilde h<0$ gives $U\le0$. Hence $U\equiv0$, the contact set
$\{U=\tilde h\}$ is empty, and $\rho=U-\tilde h-\sbar-\sebar=-\tilde h-\sbar-\sebar$, which is
\eqref{eq:rho0}.
\end{proof}

For a short position against a positive premium, $\varepsilon=-1$ and $D>0$, so
\eqref{eq:cmstar} reads $c_m<\delta\,(\sbar+\inf_v D)$: the holding cost is too small
to make the trader give up an accruing premium. Equation \eqref{eq:rho0} then
says the position is worth \emph{owning} wherever that premium exceeds the entry
concession and the expected lifetime carry $c_m/\delta$ --- which is a strictly
larger set than the one on which she opens it, as the next remark records. This is the degeneracy reported in earlier work on the
perpetual contract, and Proposition~\ref{prop:holdforever} localises it: it is a
statement about $c_m$, not about the entry problem. At the parameters of
Section~\ref{sec:numerics} the bound \eqref{eq:cmstar} is reproduced by the
numerical solver to within $0.1$ basis point.\footnote{The supremum in
\eqref{eq:cmstar} is over the whole state space, and for the short it is
$-\inf_vD$, so the bound turns on how small $D$ becomes. Under Heston $D$ is
affine with $D(0)>0$ and it binds where it matters; under a model with
$D(0+)=0$ that infimum is zero, $c_m^\ast$ formally collapses
to $\delta\sbar$, and the exercise region opens at
variance levels the process visits with negligible probability. This is why
Definition~\ref{def:nondeg} carries a reachability tolerance and why the lower
edge plotted in Figure~\ref{fig:window} is the reachable opening rather than the
formal one.}

\begin{remark}[worth holding is not worth opening]\label{rem:ownbuy}
By \eqref{eq:rho0} the position is worth holding exactly on
$\{v:\ -\varepsilon D(v)>c_m/\delta+\sebar\}$, which is the set $\{\rho>0\}$.
That is not the entry region, and the direction of the discrepancy matters.
Since $J\ge0$ always, the contact set $\{J=\rho\}$ on which the trader actually
enters is contained in $\{\rho\ge0\}$; and the containment is strict, because
wherever $\rho$ is positive but small the option of waiting for a richer level is
worth more than taking the position now. The entry region is therefore
\emph{smaller} than the set on which the position is worth holding, not larger.
The gap can be the whole state space: at a carrying charge well below
$c_m^\ast$, $\rho>0$ at every level --- the position is worth holding wherever
the trader happens to find herself --- and she still declines to open it below a
threshold, purely because waiting costs her nothing
(Definition~\ref{def:nondeg} and Section~\ref{sec:idle}).

The exit problem has the same structure, and it is worth saying so once. There
$U\ge\max\{\tilde h,0\}$, so the exercise region $\{U=\tilde h\}$ is likewise a
strict subset of $\{\tilde h\ge0\}$: at the calibration \eqref{eq:cal7} with
$c_m=135$ basis points the reduced reward is positive up to a variance of
$0.030$, while the position is actually unwound only below $0.012$. In both
problems the region is the contact set on which the value meets the obstacle,
and never the set on which the obstacle is merely positive; the gap between them
is the value of waiting.
\end{remark}

\begin{remark}[the long position is excluded]\label{rem:long}
The hypothesis of Proposition~\ref{prop:holdforever} can be met only by a short.
Under the empirical sign of the premium, $\lambda<0$ and $D>0$, and $D$ is
unbounded above: by \eqref{eq:perp} it is affine in $v$ with slope
$-\lambda/\bigl((\delta+\kappa_\QQ)(\delta+\kappa_\PP)\bigr)>0$. For
$\varepsilon=+1$ we then have
$\sup_v\varepsilon D=+\infty$, so $c_m^\ast=-\infty$ and no holding cost
qualifies. The long is degenerate in the opposite way: its exercise region is
non-empty for every $c_m\ge0$, and $\rho<0$ everywhere, falling to its floor
$-(\sbar+\sebar)$ once $c_m$ is a few tens of basis points, so the option to wait is
worthless and the position is never worth opening. That is as it should be, since a long pays
the $\QQ$-rate and accrues the $\PP$-variance and so bleeds the premium: the
only sensible thing to do with one is to close it. Definition~\ref{def:nondeg}
onwards is therefore stated for the short, $\varepsilon=-1$. Two modifications
make the long worth opening --- a negative premium, which merely relabels the
two sides, and a subsidy $c_m<0$, which genuinely reverses the geometry --- and
neither is pursued here.
\end{remark}

The second and third regimes are separated by comparative statics in $c_m$,
and these need nothing beyond the fact that $c_m$ enters $\tilde h$ as a
constant.

\begin{lemma}\label{lem:mono}
Write $\tilde h_m=g+m$ with $g=\varepsilon D-\sbar$ and $m=c_m/\delta$, and let
$U_m$ be the corresponding value, $\mathcal{E}_m=\{U_m=\tilde h_m\}$ its exercise
region --- the contact set on which the position is unwound --- and
$\rho_m=U_m-\tilde h_m-\sbar-\sebar$ the entry reward. Then $m_1\le m_2$ implies
\[
\mathcal{E}_{m_1}\subseteq\mathcal{E}_{m_2},
\qquad
\rho_{m_2}\le\rho_{m_1}\ \text{ pointwise}.
\]
Raising the holding cost therefore enlarges the exercise region and lowers the
entry reward at every level. It does not by itself order the entry
\emph{regions}: see the discussion following
Proposition~\ref{prop:interval}.
\end{lemma}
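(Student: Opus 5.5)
The proof rests on one key fact: $U_m-\tilde h_m$ is nonincreasing in $m$. Both claims follow from it. Write $U_m(v)=\sup_\tau\EE^\PP_v[e^{-\delta\tau}(g(v_\tau)+m)]$, so that
\[
U_m(v)-\tilde h_m(v)=\sup_\tau\EE^\PP_v\Bigl[e^{-\delta\tau}g(v_\tau)-g(v)-m\bigl(1-e^{-\delta\tau}\bigr)\Bigr].
\]
Here we use the convention of Lemma~\ref{lem:reduce} that the payoff vanishes on $\{\tau=\infty\}$, in which case the term $e^{-\delta\tau}m$ is zero and $1-e^{-\delta\tau}=1$. For each fixed $\tau$ the expression inside the supremum is affine in $m$. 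Its slope is $-\EE^\PP_v[1-e^{-\delta\tau}]\le0$, because $\delta>0$. A supremum of nonincreasing functions of $m$ is nonincreasing, so $m_1\le m_2$ gives $U_{m_2}-\tilde h_{m_2}\le U_{m_1}-\tilde h_{m_1}$ pointwise. Integrability for every $\tau$ holds because $g$ is affine and $\sup_u\EE^\PP_v[v_u]<\infty$, as already used in the proof of Lemma~\ref{lem:reduce}.

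Both conclusions are now immediate. By Theorem~\ref{thm:dynkin}, $U_m$ is a majorant of $\tilde h_m$, so $U_m-\tilde h_m\ge0$. If $v\in\mathcal{E}_{m_1}$, then $0\le U_{m_2}(v)-\tilde h_{m_2}(v)\le U_{m_1}(v)-\tilde h_{m_1}(v)=0$, hence $v\in\mathcal{E}_{m_2}$. For the entry reward, $\rho_m=(U_m-\tilde h_m)-\sbar-\sebar$, and $\sbar,\sebar$ do not depend on $m$, so the pointwise ordering carries over to give $\rho_{m_2}\le\rho_{m_1}$.

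The only delicate step is the treatment of $\{\tau=\infty\}$ together with the majorant property $U_m\ge\tilde h_m$ used above, since the supremum is taken over possibly infinite stopping times. Here I will rely on the convention already fixed in Lemma~\ref{lem:reduce} and on Theorem~\ref{thm:dynkin} (taking $\tau\equiv0$), and I will check that the affine-in-$m$ expression is correct on $\{\tau=\infty\}$, where it reads $-g(v)-m$ and still has nonpositive slope. Finally, I will not claim any ordering of the entry regions $\{J_m=\rho_m\}$. Lowering the obstacle $\rho$ lowers $J$ as well, so the contact sets need not be nested, in line with the caveat in the statement.
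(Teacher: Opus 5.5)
Your proof is correct and takes the paper's approach. Writing $U_m-\tilde h_m$ as a supremum of functions affine in $m$ with slope $-\EE^\PP_v[1-e^{-\delta\tau}]\le0$ is the paper's bound $U_{m_2}\le U_{m_1}+\Delta$ from $\EE[e^{-\delta\tau}\mathbf 1_{\tau<\infty}]\le1$, and both arguments then get the contact-set inclusion from the majorant property at $\tau\equiv0$ and the ordering of $\rho$ by subtracting the constant $\sbar+\sebar$.
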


\begin{proof}
Put $\Delta=m_2-m_1\ge0$. For any stopping time $\tau$,
\[
\EE\bigl[e^{-\delta\tau}(g+m_2)\mathbf 1_{\tau<\infty}\bigr]
=\EE\bigl[e^{-\delta\tau}(g+m_1)\mathbf 1_{\tau<\infty}\bigr]
+\Delta\,\EE\bigl[e^{-\delta\tau}\mathbf 1_{\tau<\infty}\bigr]
\le\EE\bigl[e^{-\delta\tau}(g+m_1)\mathbf 1_{\tau<\infty}\bigr]+\Delta,
\]
because $\delta>0$, which with $\tau\ge0$ gives $0\le e^{-\delta\tau}\le1$ on
$\{\tau<\infty\}$. Taking suprema over $\tau$ gives
$U_{m_2}\le U_{m_1}+\Delta$, and subtracting $g+m_2$ from both sides,
\begin{equation}\label{eq:monokey}
U_{m_2}-(g+m_2)\ \le\ U_{m_1}-(g+m_1)\qquad\text{pointwise}.
\end{equation}
Subtracting the further constant $\sbar+\sebar$ turns \eqref{eq:monokey} into
$\rho_{m_2}\le\rho_{m_1}$, the second claim.

For the first, let $v\in\mathcal{E}_{m_1}$, so that $U_{m_1}(v)=g(v)+m_1$. Then
\eqref{eq:monokey} at $v$ reads $U_{m_2}(v)-(g(v)+m_2)\le0$. The reverse
inequality holds at every $v$ and for every $m$: the stopping time
$\tau\equiv0$ is admissible in the definition of $U_m$ and returns
$\tilde h_m(v)=g(v)+m$, so $U_m\ge\tilde h_m$ --- the majorant property of
Theorem~\ref{thm:dynkin}, already used above in bounding $\rho$ below. Hence $U_{m_2}(v)=g(v)+m_2$, that is $v\in\mathcal{E}_{m_2}$.
\end{proof}

The mechanism is that the carry is paid only while the contract is alive: a
larger $c_m$ adds its full constant $\Delta$ to the reward for stopping now, but
adds to the reward for stopping at $\tau$ only the survival-weighted
$\Delta\,\EE[e^{-\delta\tau}]\le\Delta$, so it tilts every comparison towards
stopping. Neither the diffusion nor the shape of $D$ enters,
so the lemma holds whatever the variance dynamics.

\begin{definition}\label{def:nondeg}
Let $b$ be the unwind threshold, let $d$ be the entry threshold --- the boundary
of the entry region $\{J=\rho\}$ of \eqref{eq:Jentry} --- and let
$F_\infty(x)=\PP(v_\infty\le x)$ be the distribution function of the stationary
law of $v$ under $\PP$. Fix a tolerance $\eta\in(0,\tfrac12)$. The entry--exit
pair is \emph{non-degenerate} if both thresholds exist, $d>b$, and
\begin{equation}\label{eq:reach}
F_\infty(b)>\eta,\qquad F_\infty(d)<1-\eta.
\end{equation}
\end{definition}

Three points about the statement. The entry threshold bounds the contact set
$\{J=\rho\}$ and not the set $\{\rho>0\}$ on which the position is merely worth
holding; Remark~\ref{rem:ownbuy} separates the two. We write $b$ and $d$ without
stars because neither this definition nor Proposition~\ref{prop:interval} uses
the closed form; under Heston they are the optimal $b^*$ and $d^*$ of
Theorems~\ref{thm:exit} and \ref{thm:entry}. And $F_\infty$ is the stationary law, which under Heston is
Gamma with shape $\nu$ and rate $\varsigma$ of \eqref{eq:abz}.

The two inequalities in \eqref{eq:reach} ask that the thresholds be reachable.
Existence and $d>b$ are not enough on their own: a threshold placed far into
either tail describes a rule the physical law triggers so rarely that the pair
is of no practical use, and the footnote to Proposition~\ref{prop:holdforever}
gives a model under which the exercise region opens exactly there. We take
$\eta=0.5\%$ throughout; Appendix~\ref{app:extras} reports what a coarser
tolerance does to the intervals this section produces.

\begin{proposition}\label{prop:interval}
Suppose $D$ is strictly monotone, so that $\beta=\varepsilon D'$ has one sign
throughout, and suppose that sign is negative; suppose further that each region
is one-sided, the exercise region being $\{v\le b\}$ and the entry region
$\{v\ge d\}$ --- the configuration of the short against a positive premium, to
which this section is confined by Remark~\ref{rem:long}. Write $b(c_m)$ and
$d(c_m)$ for the thresholds, and adopt the
conventions $F_\infty(b)=0$ when the exercise region is empty and
$F_\infty(d)=1$ when the entry region is empty. Then
\begin{enumerate}
\item[(i)] $b$ is non-decreasing in $c_m$;
\item[(ii)] the entry and exercise regions are disjoint, so $d>b$ whenever both
are non-empty;
\item[(iii)] if $d$ is also non-decreasing, the set of holding costs for which
the pair is non-degenerate in the sense of Definition~\ref{def:nondeg} is an
interval, possibly empty.
\end{enumerate}
\end{proposition}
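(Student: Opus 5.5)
Part (i) follows from Lemma~\ref{lem:mono}. Raising $c_m$ raises $m=c_m/\delta$, so $\mathcal{E}_{m_1}\subseteq\mathcal{E}_{m_2}$. Under the one-sidedness hypothesis, $\mathcal{E}_m=\{v\le b(c_m)\}$, and inclusion of such sets is the same as $b(c_m)$ being non-decreasing. The conventions in the statement handle the passage from an empty exercise region to a non-empty one. The convention $F_\infty(b)=0$ for an empty region makes the composite map $c_m\mapsto F_\infty(b(c_m))$ non-decreasing on the whole line, because $F_\infty$ is a distribution function and hence non-decreasing.

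For (ii), I would use the bound $U\ge\max\{\tilde h,0\}$ recorded after \eqref{eq:Jentry}. On the exercise region $U=\tilde h$, so $\rho=-(\sbar+\sebar)$ there. If $\sbar+\sebar>0$, then $\rho<0\le J$ on $\mathcal{E}$, because $J\ge0$ ($\zeta\equiv\infty$ is admissible). Hence $J\ne\rho$ there, and no point of $\mathcal{E}$ lies in the entry region.

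The case $\sbar+\sebar=0$ is the main obstacle, since then $\rho=0$ on $\mathcal{E}$ and contact $J=\rho$ would require $J=0$ there. I would rule it out as follows. On the continuation region $(b,\infty)$ we have $U>\tilde h$, so $\rho>0$ somewhere, and the process started at any point reaches it with positive probability. Stopping at the hitting time of such a point then gives $J>0$ everywhere, so contact fails on $\mathcal{E}$ in this case too.

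With disjointness in hand, one-sidedness $\mathcal{E}=\{v\le b\}$ and entry region $\{v\ge d\}$ forces $d>b$ whenever both regions are non-empty.

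For (iii), write $N$ for the set of holding costs making the pair non-degenerate. By Definition~\ref{def:nondeg}, $N$ is the intersection of three sets. The first is $I_1=\{c_m: F_\infty(b(c_m))>\eta\}$. Since $F_\infty\circ b$ is non-decreasing by (i), $I_1$ is an up-set of the real line and hence an interval (a ray or empty). The second is $I_2=\{c_m: F_\infty(d(c_m))<1-\eta\}$. Under the hypothesis that $d$ is non-decreasing, with the convention $F_\infty(d)=1$ for an empty entry region, $F_\infty\circ d$ is non-decreasing, so $I_2$ is a down-set and again an interval. The third consists of existence of both thresholds and $d>b$. Existence is already implied by $I_1\cap I_2$ through the two conventions, since an empty region gives $F_\infty(b)=0\le\eta$ or $F_\infty(d)=1\ge1-\eta$. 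The condition $d>b$ is then automatic by (ii).

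Hence $N=I_1\cap I_2$ is an intersection of intervals, and so an interval, possibly empty. Some care is needed at points where a region changes from empty to non-empty, to confirm that the conventions are compatible with monotonicity there. This is exactly why the conventions are chosen as they are: $b=$"$-\infty$" when $\mathcal{E}$ is empty, and $d=$"$+\infty$" when the entry region is empty.
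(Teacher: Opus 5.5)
Your proof is correct and follows the paper's argument essentially step for step: Lemma~\ref{lem:mono} plus one-sidedness for (i), $\rho=-(\sbar+\sebar)$ on the exercise region against $J\ge0$ for (ii), and the intersection of a half-line unbounded above with one unbounded below for (iii). Two of your additions go beyond the paper. Your separate treatment of $\sbar+\sebar=0$ covers a case the paper passes over by simply writing $-(\sbar+\sebar)<0$: hitting a continuation point, where $\rho>0$, gives $J>0$ on the exercise region. And your observation that existence of both thresholds is already built into the reachability inequalities through the two conventions makes explicit what the paper leaves implicit.
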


\begin{proof}
(i) By Lemma~\ref{lem:mono}, $m_1\le m_2$ gives
$\mathcal{E}_{m_1}\subseteq\mathcal{E}_{m_2}$. Since $\beta<0$ each $\mathcal{E}_m$ is the lower
set $\{v\le b\}$, and one lower set contains another exactly when its threshold
is the larger, so $b$ is non-decreasing in $c_m$.

(ii) On the exercise region $U=\tilde h$, so there
$\rho=U-\tilde h-\sbar-\sebar=-(\sbar+\sebar)<0$. But $J\ge0$ everywhere, because
$\zeta\equiv\infty$ is admissible in \eqref{eq:Jentry} and returns $0$. Hence
$J(v)=\rho(v)$ forces $\rho(v)\ge0$, so no point of the exercise region lies in
the entry region $\{J=\rho\}$: the two are disjoint. As the first is $\{v\le b\}$
and the second $\{v\ge d\}$, disjointness says exactly that $d>b$.

For (iii), $F_\infty\circ b$ is non-decreasing by (i), so
$\{c_m:F_\infty(b)>\eta\}$ is a half-line unbounded above; under the stated
hypothesis $F_\infty\circ d$ is non-decreasing likewise, so
$\{c_m:F_\infty(d)<1-\eta\}$ is a half-line unbounded below; the ordering $d>b$ is automatic by (ii); and the intersection of two such
half-lines is an interval.
\end{proof}

The orientation is what confines the statement. For $\beta>0$ the exercise
region is an upper set, and Lemma~\ref{lem:mono} then lowers its threshold
rather than raising it; in that configuration the roles of the two thresholds
are exchanged and (i)--(iii) hold with the inequalities reversed.

The hypothesis in (iii) is not supplied by Lemma~\ref{lem:mono}. That lemma
lowers the entry obstacle $\rho$ pointwise and hence lowers $J$, but a uniformly
lower obstacle does not by itself order the contact sets $\{J=\rho\}$; the
argument that works for $\mathcal{E}_m$ relies on the perturbation being a constant,
and the decrease in $\rho$ is not. It is supplied instead by the closed form,
once we have it: Proposition~\ref{prop:dmono} below proves $d^*$ strictly
increasing under Heston. We leave (iii) stated as a hypothesis because it is
what the argument uses, and because the argument itself needs nothing of the
dynamics.

\subsection{The entry threshold}
\label{sec:entrythr}

The threshold $d^*$ itself can be pinned down, and by the same argument that
produced $b^*$. Theorem~\ref{thm:exit} works because $\tilde h$ is affine, and
the entry obstacle $\rho=U-\tilde h-\sbar-\sebar$ is not; but by
Proposition~\ref{prop:interval}(ii)
the entry region is disjoint from the exercise region, so it lies where
$U$ is already known in closed form, and there $\rho$ is an explicit
combination of $G$ and an affine function.

\begin{lemma}[the entry obstacle]\label{lem:rho}
Suppose $\beta<0$ and the exercise region is non-empty, with threshold $b^*$
from Theorem~\ref{thm:exit}(ii), and write $k=\sbar+\sebar$ for the round-trip
cost, not to be confused with the discounted carry $m=c_m/\delta$ of
Lemma~\ref{lem:mono}. Then
\begin{equation}\label{eq:rhoclosed}
\rho(v)=
\begin{cases}
-k, & 0<v\le b^*,\\[4pt]
\dfrac{\tilde h(b^*)}{G(b^*)}\,G(v)-\tilde h(v)-k, & v>b^* ,
\end{cases}
\end{equation}
$\rho$ is continuous on $(0,\infty)$ and continuously differentiable at $b^*$,
and
\begin{equation}\label{eq:Lrho}
(\mathcal{L}-\delta)\rho(v)=
\begin{cases}
\delta k, & 0<v<b^* ,\\[4pt]
\delta(\alpha+k)-\beta\kappa_\PP\theta_\PP+\beta(\kappa_\PP+\delta)v,
& v>b^* .
\end{cases}
\end{equation}
The second branch is affine with slope $\beta(\kappa_\PP+\delta)<0$, so
$(\mathcal{L}-\delta)\rho$ is strictly positive on $(0,v_\rho)$ and strictly
negative on $(v_\rho,\infty)$, where
\begin{equation}\label{eq:vrho}
v_\rho=\frac{\beta\kappa_\PP\theta_\PP-\delta(\alpha+k)}
             {\beta(\kappa_\PP+\delta)}
      =v_s^*-\frac{\delta k}{\beta(\kappa_\PP+\delta)}>v_s^*>b^* ,
\end{equation}
with $v_s^*$ as in Lemma~\ref{lem:H}. Consequently
$H_\rho:=(\rho/F)\circ\varphi^{-1}$ is convex on
$(\varphi(0),\varphi(v_\rho)]$ and concave on $[\varphi(v_\rho),0)$.
\end{lemma}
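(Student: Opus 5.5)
The closed form \eqref{eq:rhoclosed} follows by substituting Theorem~\ref{thm:exit}(ii) into the definition $\rho=U-\tilde h-k$. On the exercise region $(0,b^*]$ we have $U=\tilde h$, so $\rho=-k$. On the continuation region $(b^*,\infty)$ we have $U=\tilde h(b^*)G/G(b^*)$. Continuity at $b^*$ is immediate, since both branches equal $-k$ there. For differentiability I will compare one-sided derivatives. The left derivative is $0$. The right derivative is $\tilde h(b^*)G'(b^*)/G(b^*)-\beta$, and this vanishes by the smooth-pasting equation \eqref{eq:spG}. Hence $\rho$ is $C^1$ at $b^*$, while $C^2$ away from $b^*$ is inherited from $G$ and $\tilde h$.

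Next comes \eqref{eq:Lrho}. On $(0,b^*)$, $\rho$ is the constant $-k$, and $\mathcal{L}$ annihilates constants, so $(\mathcal{L}-\delta)\rho=\delta k$. On $(b^*,\infty)$ the $G$ term drops because $(\mathcal{L}-\delta)G=0$, which leaves $-(\mathcal{L}-\delta)\tilde h+\delta k$. With $\tilde h=\alpha+\beta v$ we get $\mathcal{L}\tilde h=\beta\kappa_\PP(\theta_\PP-v)$. Expanding gives $\delta(\alpha+k)-\beta\kappa_\PP\theta_\PP+\beta(\kappa_\PP+\delta)v$, as claimed. The slope $\beta(\kappa_\PP+\delta)$ is negative because $\beta<0$.

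To locate the sign change I will solve the second branch for its root $v_\rho$. Comparing with the root $v_s^*$ of $(\mathcal{L}-\delta)\tilde h=0$, which by linearity is $v_s^*=(\beta\kappa_\PP\theta_\PP-\delta\alpha)/(\beta(\kappa_\PP+\delta))$, gives the second expression in \eqref{eq:vrho}. The correction term $-\delta k/(\beta(\kappa_\PP+\delta))$ is nonnegative, so $v_\rho\ge v_s^*$, with strict inequality when $k>0$.

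The main obstacle is proving $v_s^*>b^*$. I will get it from Lemma~\ref{lem:H}, applied in its $\beta<0$ form with $G$ and $\psi$ (Proposition~\ref{prop:dk}), or by reflection. The transformed reward is concave on the side containing the exercise region and convex on the other side, with the inflection at $\varphi(v_s^*)$. The concave majorant touches $H$ only where $H$ is concave, since at a point of contact the majorant lies above $H$ locally and is concave there. Therefore the contact point $b^*$ lies in the concave part, $b^*\le v_s^*$. Equality is excluded because at $b^*$ the tangent line from the boundary point meets $H$ strictly inside the concave part, which is the uniqueness argument of Theorem~\ref{thm:exit}. Equivalently, on the exercise region $U=\tilde h$ must be $\delta$-excessive, which forces $(\mathcal{L}-\delta)\tilde h\le0$ there, that is $v\le v_s^*$; I would use this as a backup route.

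With the signs in hand, $(\mathcal{L}-\delta)\rho>0$ on $(0,b^*)$ since $\delta k>0$, and also on $(b^*,v_\rho)$; it is negative beyond $v_\rho$. The value at $b^*$ itself does not matter for convexity, because $\rho$ is $C^1$ there and $\rho''$ has a jump of at most one sign. The convexity statement for $H_\rho$ then follows from the $H''$ formula of \eqref{eq:Hprime}, with $\tilde h$ replaced by $\rho$. That formula is valid piecewise, and $H_\rho'$ is continuous across $\varphi(b^*)$ because $\rho\in C^1$. Since $H_\rho'$ is monotone on each side with matching values, it is globally monotone on the stated intervals.
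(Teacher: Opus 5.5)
Your proposal is correct and is the paper's own argument. You substitute Theorem~\ref{thm:exit}(ii) into $\rho=U-\tilde h-k$, take the $C^1$ fit from smooth pasting, compute $(\mathcal{L}-\delta)\rho$ piecewise via $(\mathcal{L}-\delta)G=0$, and obtain $b^*\le v_s^*$ because stopping is never optimal where $(\mathcal{L}-\delta)\tilde h>0$; that is your backup route verbatim, and your contact-point argument is an equivalent form of it. You are more careful than the paper in gluing $H_\rho'$ across $\varphi(b^*)$, where $\rho$ is only $C^1$, and in flagging that strict positivity needs $k>0$. Your reason for the strict $b^*<v_s^*$ is vague; the clean one is that $b^*=v_s^*$ would give $H'=0$ there with $H$ strictly convex to the right, lifting $H$ above the flat majorant. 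Strictness is not needed for the conclusions, since $v_\rho>v_s^*\ge b^*$ already fixes the sign pattern.
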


\begin{proof}
See Appendix~\ref{app:proofs}.
\end{proof}

\begin{theorem}[the entry threshold]\label{thm:entry}
Under the hypotheses of Lemma~\ref{lem:rho} the entry region is the upper set
$[d^*,\infty)$, where $d^*>b^*$ is the unique root of
\begin{equation}\label{eq:spd}
\rho'(d^*)\,F(d^*)-\rho(d^*)\,F'(d^*)=0,
\qquad
\rho'(v)=\frac{\tilde h(b^*)}{G(b^*)}\,G'(v)-\beta ,
\end{equation}
and $J(v)=\rho(d^*)F(v)/F(d^*)$ on $(0,d^*)$, with $J=\rho$ on $[d^*,\infty)$ and
$\zeta^*=\inf\{t:v_t\ge d^*\}$ optimal. For
$\beta>0$ the same statements hold with $F$ and $G$ interchanged, the exercise
region the upper set and the entry region the lower one.
\end{theorem}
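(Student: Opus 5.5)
The plan is to apply Proposition~\ref{prop:dk2} to the entry problem \eqref{eq:Jentry}, whose obstacle is $\rho$. Under the $\varphi$-transformation $V(x)=F(x)\widehat H(\varphi(x))$, an upper exercise set is read off the smallest decreasing concave majorant of $H_\rho=(\rho/F)\circ\varphi^{-1}$ on $[\varphi(0),0]$. The concave majorant is taken to be a straight line from the left end, tangent to $H_\rho$ at some $z^*=\varphi(d^*)$, and coinciding with $H_\rho$ to the right of $z^*$. Everything rests on the shape information in Lemma~\ref{lem:rho}: $H_\rho$ is convex on $(\varphi(0),\varphi(v_\rho)]$ and concave on $[\varphi(v_\rho),0)$. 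The endpoint values are also needed. At the left end, $\rho=-k<0$ near $0$, so $H_\rho<0$ there. At the right end, $H_\rho(z)\to0$ as $z\to0$, since $\rho$ grows at most linearly (the $G$ term decays and $\tilde h$ is affine) while $F$ grows faster than any polynomial. Moreover $\rho>0$ somewhere, because $-\tilde h\to+\infty$ when $\beta<0$. This is the situation in Lemma~\ref{lem:H} and Theorem~\ref{thm:exit}, with the roles of convex and concave pieces the same as in \citet{leung2014}. I would therefore mimic the appendix proof of Theorem~\ref{thm:exit}(i).

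\textbf{Step one} is to check that $H_\rho$ is $C^1$. This follows from the $C^1$ pasting of $\rho$ at $b^*$ in Lemma~\ref{lem:rho}, via the first formula of \eqref{eq:Hprime} with $\rho$ in place of $\tilde h$. The tangency condition is then $H_\rho'(z)\cdot(0-z)$ matching the chord. More directly, following \citet{leung2014}, the decreasing-majorant version reduces to requiring the tangent line from the point $(\varphi(0),\cdot)$ to be horizontal. The cleanest route is the one that yields \eqref{eq:spd}: $\widehat H$ is constant equal to $\max H_\rho$ on $[\varphi(0),z^*]$, because the smallest \emph{decreasing} concave majorant is flat to the left of the maximiser. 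Hence $z^*$ solves $H_\rho'(z^*)=0$, which by \eqref{eq:Hprime} is exactly $\rho'F-\rho F'=0$. \textbf{Step two} is to show that this critical point is unique and is a maximum. Because $H_\rho$ is convex and then concave, and $H_\rho\to0^+$-side behaviour holds at the right end with $H_\rho>0$ somewhere, $H_\rho$ rises to a unique interior maximum on its concave part and then decreases to $0$. A critical point on the convex part would be a minimum. I would rule out a second sign change of $H_\rho'$ by noting that $H_\rho'$ is increasing on the convex part and decreasing on the concave part.

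\textbf{Step three} locates the root. Since $\rho=-k$ on $(0,b^*]$ and $\max H_\rho>0$, the maximiser has $\rho(d^*)>0$, so $d^*>b^*$; this is consistent with Proposition~\ref{prop:interval}(ii). On $(b^*,\infty)$ the closed form \eqref{eq:rhoclosed} gives $\rho'$ as in \eqref{eq:spd}. Translating back, $J=F\cdot H_\rho(z^*)=\rho(d^*)F/F(d^*)$ on $(0,d^*)$ and $J=\rho$ on $[d^*,\infty)$. Optimality of $\zeta^*$ follows from Proposition~\ref{prop:dk}, since $\rho$ is continuous. The case $\beta>0$ is the mirror image, using $\psi$, $G$ and Proposition~\ref{prop:dk}.

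The main obstacle is step two, specifically the boundary behaviour at $z\to0$ (that is, $v\to\infty$) together with the possibility that the maximiser lands on the convex piece. I would handle the first with the asymptotics $M(\mathsf a,\nu;\varsigma v)\sim C e^{\varsigma v}(\varsigma v)^{\mathsf a-\nu}$. For the second, I would argue that on the convex piece $H_\rho'$ is increasing. Any zero of $H_\rho'$ there, followed by a later maximum, would then force $H_\rho$ to be negative at that zero. So the global positive maximum must lie in the concave region, where $H_\rho'$ is strictly decreasing and hence has a single zero.
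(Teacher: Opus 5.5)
Your route is the paper's own: Proposition~\ref{prop:dk2} applied to $\rho$, the convex-then-concave shape of $H_\rho$ from Lemma~\ref{lem:rho}, the decreasing concave majorant flat to the left of the maximiser so that $H_\rho'(\varphi(d^*))=0$ becomes \eqref{eq:spd}, and the mirror argument for $\beta>0$. Your proof of $d^*>b^*$ is a harmless variant of the paper's appeal to $J\ge0$ and Proposition~\ref{prop:interval}(ii). You get it from $\rho(d^*)>0$, using that $\rho>0$ for large $v$ because $-\tilde h\to\infty$. It has the small merit of showing the entry region is non-empty, which the paper leaves implicit.

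There is one real, if small, gap in Step two, in the uniqueness of the root. Knowing that $H_\rho'$ increases on the convex piece and strictly decreases on the concave piece does not exclude the sign pattern $-,+,-$ for $H_\rho'$. A zero of $H_\rho'$ on the convex piece would be a local minimum, as you say, but it would still be a root of \eqref{eq:spd}. So showing that the maximiser lies in the concave piece does not yet give the ``unique root'' the theorem asserts.

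The missing input is the sign of $H_\rho'$ at the left end. On $(0,b^*]$ you have $\rho\equiv-k$ and $\rho'=0$, including at $b^*$ by the $C^1$ fit, so the numerator $\rho'F-\rho F'$ equals $kF'>0$. Hence $H_\rho'>0$ on $(\varphi(0),\varphi(b^*)]$. Since $H_\rho'$ is non-decreasing on the whole convex piece, $H_\rho'>0$ on all of $(\varphi(0),\varphi(v_\rho)]$.

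So \eqref{eq:spd} has no root in $(0,v_\rho]$ and exactly one in $(v_\rho,\infty)$, and that root is the strict global maximiser. Strictness is also what makes the contact set exactly $[d^*,\infty)$ rather than merely containing it. This is precisely the fact the paper imports when it transfers the argument of Theorem~\ref{thm:exit}(i), where $H$ is noted to be increasing to the left of its concave region.
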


\begin{proof}
See Appendix~\ref{app:proofs}.
\end{proof}

Two things follow. The one-sidedness that Proposition~\ref{prop:interval} takes
as a hypothesis is, for the perpetual contract under Heston, a conclusion:
\eqref{eq:spd} has a single root and the entry region is the upper set it
bounds. And the entry threshold is as explicit as the exit threshold, both being
unique roots of transcendental equations in the same two confluent
hypergeometric functions. Solving \eqref{eq:spd} by bisection reproduces the
thresholds obtained from the variational inequality to the resolution of its
grid: at the calibration \eqref{eq:cal7} with a three-point premium, $d^*$ is
$0.12161$ from \eqref{eq:spd} against $0.12172$ from the solver at $c_m=200$
basis points, and $0.16745$ against $0.16769$ at $c_m=300$.

Both thresholds are now characterised by smooth-pasting equations, and
differentiating those equations implicitly in $c_m$ gives the comparative
statics. This settles
the monotonicity that Proposition~\ref{prop:interval}(iii) had to assume, and
sharpens part~(i) of the same proposition from a weak inequality to a strict
one with a rate.

\begin{proposition}[the carrying charge moves both thresholds the same way]
\label{prop:dmono}
Under the hypotheses of Theorem~\ref{thm:entry}, $b^*$ and $d^*$ are continuously
differentiable in $c_m$ and both are strictly increasing:
\begin{equation}\label{eq:dbcm}
\frac{\partial b^*}{\partial c_m}
=\frac{-G'(b^*)}{\delta\,\tilde h(b^*)\,G''(b^*)}\;>\;0,
\end{equation}
\begin{equation}\label{eq:ddcm}
\frac{\partial d^*}{\partial c_m}
=\frac{\gamma^2d^*}{2\delta}\cdot
\frac{W(d^*)/G(b^*)-F'(d^*)}{F(d^*)\,(\mathcal{L}-\delta)\rho(d^*)}\;>\;0,
\qquad W:=F'G-FG'>0 .
\end{equation}
For $\beta>0$ both statements hold with $F$ and $G$ interchanged and both
thresholds strictly decreasing. Either way the two move together rather than
apart: the exercise region grows with the charge and the entry region contracts,
so the two reachability requirements of \eqref{eq:reach} bind at opposite ends.
In particular the hypothesis of Proposition~\ref{prop:interval}(iii) holds, and
the set of holding costs admitting a non-degenerate pair is an interval.
\end{proposition}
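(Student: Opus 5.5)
The plan is to differentiate the two smooth-pasting equations implicitly in $c_m$, using $m=c_m/\delta$ as in Lemma~\ref{lem:mono}. The charge enters $\tilde h$ only as the additive constant $m$, so $\partial_m\tilde h=1$ and $\partial_c=\delta^{-1}\partial_m$. Throughout I take the configuration $\beta<0$; the case $\beta>0$ follows by interchanging $F$ and $G$.

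For $b^*$, set $\Psi_b(b,m)=\beta G(b)-\tilde h_m(b)G'(b)$, which is \eqref{eq:spG}.
\begin{itemize}
\item Since $\tilde h'=\beta$, the derivative in $b$ collapses to $\partial_b\Psi_b=-\tilde h(b^*)G''(b^*)$.
\item The derivative in $m$ is $\partial_m\Psi_b=-G'(b^*)$.
\item The implicit function theorem then gives \eqref{eq:dbcm}, provided $\partial_b\Psi_b\ne0$.
\item For nonvanishing and for the sign: $G'<0$, and $G''>0$ because $G$ is a decreasing positive solution of $\mathcal L G=\delta G$ (convexity follows from \eqref{eq:derivs}). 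Finally $\tilde h(b^*)>0$, because $U\ge0$ and $U=\tilde h$ at $b^*$, with strict inequality since $b^*$ lies where $H>0$ by Lemma~\ref{lem:H}.
\end{itemize}
Continuous differentiability follows because all the functions involved are analytic in $(b,m)$.

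For $d^*$, set $\Psi_d(d,m)=\rho_m'(d)F(d)-\rho_m(d)F'(d)$.

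\emph{Derivative in $d$.} This is $\rho''F-\rho F''$. I will eliminate the second derivatives using $\tfrac12\gamma^2 dF''+\mu F'=\delta F$, where $\mu$ is the CIR drift, together with the definition of $(\mathcal L-\delta)\rho$. This gives
\[
\rho''F-\rho F''=\frac{2}{\gamma^2 d}\Bigl[F\,(\mathcal L-\delta)\rho-\mu\,(\rho'F-\rho F')\Bigr],
\]
and the last bracket vanishes at the root. Hence $\partial_d\Psi_d=\tfrac{2}{\gamma^2d^*}F(d^*)(\mathcal L-\delta)\rho(d^*)$.

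This quantity is strictly negative. The entry region is a contact set of an excessive majorant, so $(\mathcal L-\delta)\rho\le0$ there. By Lemma~\ref{lem:rho} this forces $d^*\ge v_\rho$. The point $d^*$ is the interior maximum of $H_\rho$ on its concave branch, and by the sign change in \eqref{eq:Lrho} the inequality is strict.

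\emph{Derivative in $m$.} Here $\rho_m=\tfrac{\tilde h_m(b^*)}{G(b^*)}G-\tilde h_m-k$ depends on $m$ both directly and through $b^*(m)$. The key observation is an envelope step: $\partial_b[\tilde h(b)/G(b)]=\Psi_b/G^2=0$ at $b^*$, so the movement of $b^*$ drops out. Therefore $\partial_m\rho=G/G(b^*)-1$ and $\partial_m\rho'=G'/G(b^*)$. Substituting gives $\partial_m\Psi_d=-W(d^*)/G(b^*)+F'(d^*)$, and dividing as before yields \eqref{eq:ddcm}.

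The main obstacle is the sign of the numerator $W(d^*)/G(b^*)-F'(d^*)$. Written out, it has two terms of opposite sign, so it cannot be signed termwise. I would instead move to the transformed coordinate $z=\varphi(v)$ of Proposition~\ref{prop:dk2}.
\begin{itemize}
\item In these coordinates $\partial_m\Psi_d$ is, up to the positive factor $F^2\varphi'$, the $z$-derivative of $q:=(\partial_m\rho/F)\circ\varphi^{-1}$ at $z_d=\varphi(d^*)$.
\item The function $\partial_m\rho=G/G(b^*)-1$ satisfies $(\mathcal L-\delta)\partial_m\rho=\delta>0$ on $(b^*,\infty)$. By the $H''$ formula of Lemma~\ref{lem:H}, $q$ is therefore strictly convex on $(\varphi(b^*),0)$.
\item $q$ vanishes at $\varphi(b^*)$ and tends to $0$ as $z\to0$.
\end{itemize}
So the claim reduces to showing that $z_d$ lies to the right of the minimiser of $q$.

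My first attempt at this step will use the tangency characterisation of $z_d$ as the maximiser of $H_\rho$. Write $H_\rho=(\tilde h(b^*)/G(b^*))(-z)-(\tilde h+k)/F$ in $z$. I would compare this with $q=-z/G(b^*)-1/F$: both have the same linear term, and they differ through $-(\tilde h+k)/F$ versus $-1/F$. The aim is to show that $H_\rho'(z_d)=0$ forces $q'(z_d)>0$, using $\beta<0$ (so $\tilde h$ is decreasing) and $\tilde h(b^*)>0$.

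If that comparison fails, the fallback is a monotone-comparison argument in $m$ directly on the obstacles, replacing the implicit derivative for this step.

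Once both derivatives are signed, the hypothesis of Proposition~\ref{prop:interval}(iii) holds, and the interval conclusion follows from that proposition.
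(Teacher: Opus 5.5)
Up to the sign of $\partial_m\Psi_d(d^*)$ your argument essentially follows the paper. The implicit-function computation for $b^*$, the identity $\partial_d\Psi_d(d^*)=\tfrac{2}{\gamma^2d^*}F(d^*)(\mathcal L-\delta)\rho(d^*)$ with $d^*>v_\rho$, and the envelope step giving $\partial_m\rho=G/G(b^*)-1$ all match. Your strictly convex $q$ also correctly formalises the paper's remark that the damage per unit of $F$ first rises and then falls.

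The gap is the step that carries the proposition's content, $F'(d^*)>W(d^*)/G(b^*)$. You leave it as an aim, and the comparison you sketch cannot close it with the facts you name. The linear terms of $H_\rho$ and $q$ are proportional, not equal. After scaling, $H_\rho-\tilde h(b^*)\,q=-(r/F)\circ\varphi^{-1}$ with $r(v):=k+\beta(v-b^*)$, so $H_\rho'(z_d)=0$ turns $q'(z_d)>0$ into
\[
\beta F(d^*)-r(d^*)F'(d^*)>0 .
\]
The first term is negative, so this needs $r(d^*)<0$, i.e.\ $d^*-b^*>k/|\beta|$, and more. That is a quantitative fact about how far $d^*$ sits from $b^*$, which $\beta<0$ and $\tilde h(b^*)>0$ alone do not supply. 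Your fallback is the route the paper explicitly rules out: Lemma~\ref{lem:mono} orders contact sets only under a constant perturbation, whereas $\partial_m\rho=G/G(b^*)-1$ tilts the obstacle, and a tilt can move a contact set either way.

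The missing idea is to use smooth pasting at $d^*$ a second time, now to eliminate $F$. Since $\rho\to+\infty$ as $v\to\infty$, the maximum of $\rho/F$ is positive. Hence $\rho(d^*)>0$ and $\rho'(d^*)=\rho(d^*)F'(d^*)/F(d^*)>0$. Then $F/F'=\rho/\rho'$ at $d^*$, and the display above has the sign of $\beta\rho(d^*)-r(d^*)\rho'(d^*)$. Now write $\rho=\Lambda^*G-\tilde h-k$ on $(b^*,\infty)$ with $\Lambda^*=\tilde h(b^*)/G(b^*)>0$, so that $\rho(d^*)=\Lambda^*\bigl(G(d^*)-G(b^*)\bigr)-r(d^*)$ and $\rho'(d^*)=\Lambda^*G'(d^*)-\beta$. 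The mean value theorem then gives
\[
\beta\rho(d^*)-r(d^*)\rho'(d^*)=\Lambda^*\Bigl[\beta\,(d^*-b^*)\bigl(G'(\hat v)-G'(d^*)\bigr)-k\,G'(d^*)\Bigr],\qquad \hat v\in(b^*,d^*).
\]
Both terms in the bracket are strictly positive, because $\beta<0$, $k>0$, $G'<0$ and $G''>0$. This is the bracket the paper's Step~3 reaches from $\rho'G-\rho G'<\rho'G(b^*)$ at $d^*$. The decisive ingredients are the $G$-representation of $\rho$ above $b^*$, the strict convexity of $G$, and the strictly positive round-trip cost $k$. Without them your proposal does not establish $\partial d^*/\partial c_m>0$, and so does not establish the hypothesis of Proposition~\ref{prop:interval}(iii) either.
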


\begin{proof}
See Appendix~\ref{app:proofs}.
\end{proof}

The exit threshold is the easy half, and worth seeing separately. Write the
smooth-pasting equation \eqref{eq:spG} as $\Psi_b(b,c_m)=\beta G-\tilde hG'$,
writing $\Psi_b$ and $\Psi_d$ for the left-hand sides of the two smooth-pasting
equations \eqref{eq:spG} and \eqref{eq:spd}, whose roots are the two
thresholds. The
charge enters $\tilde h$ as a constant, so $\partial_{c_m}\Psi_b=-G'/\delta>0$;
and in $\partial_b\Psi_b=\beta G'-\tilde h'G'-\tilde hG''$ the first two terms
cancel, because $\tilde h'=\beta$ is exactly the slope that
Theorem~\ref{thm:exit} carries and the linear-payoff case does not, leaving
$-\tilde hG''$. That is \eqref{eq:dbcm}. Nothing is needed beyond $G'<0<G''$
and $\tilde h(b^*)>0$, the latter because $\tilde h(b^*)=0$ would force
$\beta G(b^*)=0$ and hence $\beta=0$.

The entry threshold is the half that resisted, and the step that makes it work
is the derivative of the entry obstacle. By
the envelope theorem the movement of $b^*$ contributes nothing, because $b^*$
solves \eqref{eq:spG}, and what is left is a probability:
\begin{equation}\label{eq:dcmrho}
\frac{\partial\rho}{\partial c_m}(v)
=-\,\EE_v\!\int_0^{T_{b^*}}\!\!e^{-\delta u}\,du
=-\frac1\delta\Bigl(1-\frac{G(v)}{G(b^*)}\Bigr)
\quad\text{for }v>b^* ,
\end{equation}
since $G(v)/G(b^*)=\EE_v[e^{-\delta T_{b^*}}]$ is the survival-weighted
probability of reaching the unwind level from above. The first expression holds
throughout: on the exercise region $T_{b^*}=0$ and $\rho\equiv-\sbar-\sebar$ does
not move with the charge at all, whereas the $G$-ratio is the Laplace transform
only for $v>b^*$ and exceeds $1$ when $v<b^*$. A unit of carrying charge costs the
holder the expected discounted time she will spend carrying the position: this
is Lemma~\ref{lem:mono} with a rate attached, and it explains why that lemma
could not settle the question. The perturbation is not constant in $v$. It is
larger the further $v$ sits above $b^*$, so the obstacle is not merely lowered
but tilted, and a tilt can move a contact set either way.

What resolves it is that the tilt is bounded and the discounting is not. The
magnitude of \eqref{eq:dcmrho} is an expected discounted holding time and so
cannot exceed $1/\delta$,
whereas $F$, which discounts the wait for entry, grows without bound; so the
damage per unit of $F$ rises at first and then falls away, and above the level
at which it turns, raising $c_m$ pushes $d^*$ out rather than in. The content of
the proposition is that $d^*$ always lies above that level, and this follows from the
smooth-pasting relation \eqref{eq:spd} together with the convexity of $G$ --- a
free fact, since $G$ is completely monotone in its argument by Kummer's integral
representation.

What the two derivatives do not settle is the width of the band. Both edges
move out, but \eqref{eq:dbcm} and \eqref{eq:ddcm} are not comparable term by
term, and $d^*-b^*$ is not monotone: it widens at the great majority of the
parameter sets we have solved and narrows at a few, all of them at long contract
lives and large premia. So a heavier carrying charge makes the rule more
selective at both ends, but not reliably slower to complete a round trip.

The three regimes are therefore three pieces of the half-line in $c_m$: hold to
termination below $c_m^\ast$ by Proposition~\ref{prop:holdforever}, a round trip on the
interval of Proposition~\ref{prop:interval}, and never enter above it. Strictly
there is a fourth stretch, between $c_m^\ast$ and the lower edge of the
interval, on which an exercise region exists but lies below the
$\eta$-quantile. The rule there is well defined --- Theorem~\ref{thm:exit}
returns a finite threshold $b$ --- and it is inert. Just above $c_m^\ast$ at the
calibration \eqref{eq:cal7}, $b$ is a variance of $5\times10^{-4}$, an
instantaneous volatility of $2.3\%$, which the stationary law places at its
$0.002$nd percentile: ``unwind if volatility falls to $2.3\%$'' is a perfectly
valid instruction that no equity index will give occasion to follow. The process
is recurrent, so it reaches that level eventually and the rule is not literally
never triggered; the wait is simply of no interest. Where the stretch ends is
fixed by the tolerance $\eta$ rather than by the model, so it is an artefact of
the screen rather than a regime of its own. It is also narrow. At the
calibration \eqref{eq:cal7} under Heston, with $\kappa_\QQ$ set to deliver a
premium of three volatility points, the stretch is the band of carrying charges
running from $c_m^\ast=117$ basis points a year, below which no exercise region
exists at all, up to $122$, above which the threshold has risen past the
$\eta$-quantile and the rule begins to fire.

Figure~\ref{fig:window} locates the interval against the size of the premium.
Its two edges have different origins. The lower edge is Proposition~\ref{prop:holdforever}
together with reachability: below it the trader never unwinds, or unwinds only
at variance levels of no consequence. The upper edge is where $d^*$ leaves the
support: the cost of carry is then so large that the position is worth entering
only at a level the physical law does not produce. Of the two, only the upper
edge depends on the law of $v$ in any essential way: the lower is fixed by
$c_m^\ast$, which involves $D$ and the tear-up concession but not the stationary
distribution, and the law enters it only through the narrow reachability
correction of the fourth stretch just described. The split survives on a
process with a power tail in place of Heston's
exponential one --- the lower edge is essentially unmoved, the upper edge roughly
doubles, and the window widens accordingly, though the entry level itself is
reached no more often.

\begin{figure}[t]\centering
\includegraphics[width=\textwidth]{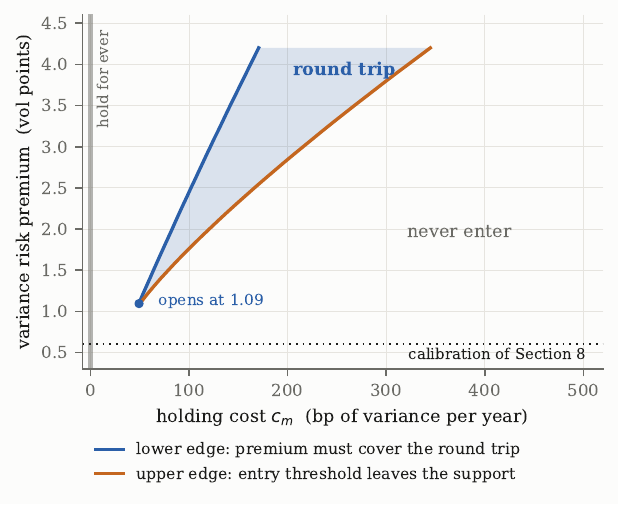}
\caption{Holding costs admitting a non-degenerate entry--exit pair
(Definition~\ref{def:nondeg}) for a short perpetual position, against the size
of the variance risk premium. The median of the stationary law of $v$ is fixed
at $19\%$ in volatility, with $\sbar=\sebar=20$ basis points and $\delta=0.5$. Below a
premium of $1.09$ volatility points, no holding cost admits a round trip.}
\label{fig:window}
\end{figure}

One parameter is left to check. Everything above fixes $\delta=0.5$, a contract
whose expected life is two years, and a longer-lived contract weights distant
states more heavily, which might be expected to pull the entry threshold down.
Table~\ref{tab:horizon} sweeps it.

\begin{table}[t]\centering
\begin{tabular}{rrrrr}
\toprule
expected life $1/\delta$ & $c_m$ (bp) & exit pctile & entry pctile & entry / exit rate\\
\midrule
$\phantom{0}1$y & $135$ & $\phantom{0}6.6$ & $94.7$ & $27.03\%\,/\,20.47\%$\\
$\phantom{0}2$y & $135$ & $\phantom{0}6.9$ & $95.4$ & $25.66\%\,/\,21.63\%$\\
$\phantom{0}4$y & $135$ & $\phantom{0}6.3$ & $96.6$ & $24.74\%\,/\,22.33\%$\\
$\phantom{0}8$y & $135$ & $\phantom{0}5.5$ & $97.8$ & $24.14\%\,/\,22.73\%$\\
$16$y & $145$ & $13.9$ & $98.9$ & $23.79\%\,/\,23.00\%$\\
$32$y & $145$ & $13.8$ & $99.4$ & $23.54\%\,/\,23.10\%$\\
\bottomrule
\end{tabular}
\caption{Lowest entry percentile attainable at each contract life, at the
calibration of \eqref{eq:cal7}, short position, over holding costs $c_m$ subject
to the exit threshold remaining above the fifth percentile --- a screen distinct
from, and weaker than, the tolerance $\eta$ of Definition~\ref{def:nondeg},
chosen so that a threshold exists to report at every life. Rates are the
perpetual fair rates at the two thresholds. Each row is recalibrated: holding
$\kappa_\PP$, the shape $\nu$ and the median fixed, $\kappa_\QQ$ is reset so
that the premium is three volatility points at that $\delta$.}
\label{tab:horizon}
\end{table}

Length does not help; it hurts, and monotonically. The entry percentile is
lowest at the shortest life, $94.7$ at one year, and rises steadily to $99.4$ at
thirty-two. The last column says why, and the reason is a property of the
instrument rather than of its holder. As $\delta$ falls,
$\partial K/\partial v=\delta/(\delta+\kappa_\QQ)$ falls with it, the contract's
rate loses its sensitivity to the state, and both thresholds converge on
$\sqrt{\theta_\QQ}$: at one year they are $27.03\%$ and $20.47\%$, a band of six
and a half volatility points, and at thirty-two years $23.54\%$ and $23.10\%$, a
band of less than half a point. A long-lived contract is quoted at a rate that
barely moves with the state, so there is nothing left to time. The shortest-lived
contract has the widest band and the most reachable entry, and even there the
short is not put on until the perpetual rate reaches the $95$th percentile. This is a statement about how to write the
contract, not about how patient to be.

\subsection{The cost of standing aside}
\label{sec:idle}

This subsection stays with the short of Section~\ref{sec:regimes}, for which
entering means selling the swap into an elevated rate.

Nothing so far makes waiting cost the flat trader anything in cash. While she
holds no position she accrues no variance and pays no carrying charge, and by
Proposition~\ref{prop:cancel} and Remark~\ref{rem:whatdelta} no discount factor
erodes her eventual profit either, so \eqref{eq:Jentry} carries no running term.
The only penalty for delay is the survival factor $e^{-\delta\zeta}$, which
withdraws the opportunity at a constant rate but takes nothing from her while it
lasts, and at $\delta=0.5$ it is far too weak to compete with the value of a
better level. That is why
the entry threshold stands so far above the level at which the position first
becomes worth holding: at the calibration of Section~\ref{sec:working} $\rho>0$
from the $76$th percentile upwards, but the optimal entry threshold sits at the
$94.8$th, and the whole of that gap is the value of holding out for a better
level at no cost.

Suppose instead that the opportunity cost $c_0$ admitted in
Section~\ref{sec:setup}, and set to zero there, is allowed to bite. Then
\[
J(v)=\sup_\zeta\EE^\PP_v\Bigl[\int_0^\zeta e^{-\delta u}(-c_0)\,du
+e^{-\delta\zeta}\rho(v_\zeta)\Bigr]
=-\frac{c_0}{\delta}
+\sup_\zeta\EE^\PP_v\Bigl[e^{-\delta\zeta}\Bigl(\rho+\frac{c_0}{\delta}\Bigr)(v_\zeta)\Bigr],
\]
which is the entry problem with its obstacle raised by the constant $c_0/\delta$.
Lemma~\ref{lem:mono} used nothing about the reward except that the perturbation
is a constant, so it applies verbatim, with the sign reversed: the entry region
grows with $c_0$ and the entry threshold falls.

\begin{table}[t]\centering
\begin{tabular}{rrr}
\toprule
$c_0$ (bp per year) & entry rate & entry percentile\\
\midrule
$\phantom{00}0$ & $22.62\%$ & $94.8$\\
$\phantom{0}25$ & $21.17\%$ & $83.1$\\
$\phantom{0}50$ & $19.81\%$ & $58.0$\\
$\ge100$ & $17.32\%$ & $\phantom{0}0.0$\\
\bottomrule
\end{tabular}
\caption{Entry threshold for the \emph{short} against the opportunity cost of
idle capital, at the calibration of Section~\ref{sec:working} with $c_m=120$
basis points, for which the unwind threshold is a perpetual rate of $18.20\%$ at
the $12.4$th percentile. Entering here means selling the swap: the threshold is
the rate at or above which the short is put on. Rates are perpetual fair rates; percentiles are of the stationary
law of $v$ under $\PP$. At $c_0\ge100$ basis points the trader enters at once
whatever the level.}
\label{tab:idle}
\end{table}

The sensitivity is severe. Fifty basis points a year --- two fifths of the
holding charge at this calibration, and small beside any hurdle rate a desk
would actually be set --- moves the entry threshold from the $95$th percentile of
the stationary law to the $58$th, and past about a hundred basis points the trader
enters immediately at any level. This is the sharper form of the asymmetry. The
exit rule of Section~\ref{sec:closed} needs no such quantity: it follows from the
premium and the trading spread, both of which are observable. The entry rule
needs one, and it is a property of the trader rather than of the market, which
is why the two decisions are not mirror images however the problem is posed.

\begin{remark}[the continuation region is unbounded above]
\label{rem:nostop}
For a short against a positive premium the continuation region is
$(b^*,\infty)$: the rule never closes the position into a rising market. This is
a consequence of the objective rather than an accident. $D$ is increasing, so
the premium accrues faster the higher variance goes, and \eqref{eq:pvsobj}
rewards the trader for staying short through an arbitrarily large spike. The
objective contains no capital constraint, no margin call and no possibility of
forced liquidation, and it lets the trader carry the position at
$c_m$ per unit time whatever the mark. Read as a statement about where the
premium is earned, the conclusion is correct. Read as a trading rule it is the
familiar short-volatility payoff with no stop on the losing side, and a desk
would impose a stop-loss: an upper barrier on $v$ at which the position is cut
whatever the premium still on offer. Imposing one is not innocuous. Cutting the
position at the barrier gives up the paths on which the premium accrues fastest,
so continuing is worth less at every level beneath it, and $b^*$ rises in
consequence: the stop-loss changes where she unwinds at the bottom, not only at
the top. We have not pursued this, but of the modifications
one might make it is the most likely to matter in practice.
\end{remark}
Whichever side is being traded, the asymmetry between the two decisions is real,
and more robust than a single calibration would suggest. An exit rule follows from the premium and
the cost of trading alone. An entry rule exists, and Proposition~\ref{prop:interval}
locates exactly the holding costs that admit one, but across the holding costs
that admit a round trip at all, and across expected contract lives from one year
to thirty-two, the entry threshold never falls below roughly the $95$th percentile
of the physical law, and not below the $87$th even when
the affine link is abandoned and the premium is generated wholly on the pricing
side. It is not an artefact of the horizon built into the contract, since a
longer-lived contract is strictly worse, and by Appendix~\ref{app:extras} not an
artefact of the premium at all, since a state-dependent carrying charge that
reproduces the whole structure with $\lambda=0$ leaves the entry threshold
exactly where it was. It is, however, an artefact of waiting that costs nothing
in cash, and Section~\ref{sec:idle} isolates the single parameter that removes
it. The
calibration of Section~\ref{sec:numerics} carries a premium of $0.60$
volatility points, below the $1.09$ at which the Heston window opens, so its
entry threshold sits above the $99$th percentile while its exercise threshold
sits at the $30$th; but the qualitative conclusion does not depend on that
choice. Timing the exit of a variance swap is a problem the premium solves.
Timing the entry is one it does not solve alone.

\section{Numerical illustration}
\label{sec:numerics}

\subsection{Parameters and units}
\label{sec:units}

We take $\kappa_\QQ=2.0$, $\theta_\QQ=0.045$ (a strike of $21.2\%$ in volatility
terms), $\gamma=0.26$ and $\lambda=-0.15$, so that $\kappa_\PP=2.15$ and
$\theta_\PP=0.0419$ ($20.5\%$). The shape of \eqref{eq:abz} is $\nu=2.66$, so the Feller condition holds under
both measures --- the affine link makes $\nu$ the same under each --- and the
stationary law of $v$ under $\PP$ is $\mathrm{Gamma}$ with shape $\nu$ and rate
$63.6$. The unwind spread is $s=0.002$ in variance units per year of remaining
maturity, about half a volatility point at these levels, and the entry spread is
the same. For the perpetual we take the tear-up concessions
$\sbar=\sebar=0.002$, so that closing a perpetual costs what closing a one-year
dated swap costs; Remark~\ref{rem:spreadconv} reports what happens if the two
are instead tied together at $\sbar=s/\delta$. Where a carrying charge is needed
we use $c_m=0.004$ per year. For the fixed-maturity contract we set $T=1$ and
$\Sigma^2$ equal to the fair strike at $v_0=\theta_\PP$, namely
$\Sigma^2=0.0436$ ($20.9\%$).

Thresholds are naturally computed in the state variable $v$, but no one quotes
instantaneous variance. What a trader observes is the fair strike for the
residual maturity,
\begin{equation}\label{eq:xi}
\xi(t,v)=\theta_\QQ+(v-\theta_\QQ)\,\frac{\gQ(t)}{T-t},
\end{equation}
which is a duration-weighted average of expected future variance and is
therefore far less dispersed than $v$ itself: at $t=0$ with $T=1$ and
$\kappa_\QQ=2$ the sensitivity $\gQ(0)/T$ is only $0.43$. Reporting a rule in
$\sqrt{v}$ rather than in $\sqrt{\xi}$ inflates every band by a factor of
roughly two and makes thresholds look implausible --- a level of $\sqrt v=40\%$
is a one-year strike of $31\%$, and $\sqrt v=7\%$ is a strike of $17\%$. We
therefore quote all thresholds as $\sqrt{\xi}$ for the dated contract and as the
perpetual fair rate $\sqrt K$ of \eqref{eq:pvsrate} for the perpetual, in
volatility per cent, each with the percentile of the threshold under the
stationary law of $v$ under $\PP$. That percentile answers the question a rule
must answer before any other: does it ever fire?

The diffusion coefficient is what makes those percentiles meaningful, since
the shape $\nu$ of \eqref{eq:abz} is the shape parameter of the
stationary law, and it must be comfortably above one. At $\gamma=0.40$ it is
$1.12$, the stationary density is nearly exponential, and the model assigns some $8\%$
probability to instantaneous volatility below $7\%$ --- a level no equity index
has visited. At $\gamma=0.26$ we have $\nu=2.66$ and the implied one-year strike
has percentiles
\[
1\%:16.7\qquad 25\%:18.8\qquad 50\%:20.4\qquad 75\%:22.2\qquad 99\%:28.1
\]
in volatility terms, which is a recognisable description of a large-cap index.

Finally, a word on the units of $\delta$, since it is easily misread as a
funding spread. By Remark~\ref{rem:whatdelta} it is neither a funding spread nor
a discount rate but the termination rate written into the contract, so
$1/\delta$ is a tenor and belongs on the same axis as the maturity of a dated
swap. A value of $\delta=0.05$ is therefore not a five per cent cost of carry
but a contract with an expected life of twenty years, which is why it produces
such selective rules: by \eqref{eq:pvsrate} its quoted rate carries almost no
information about the current state. Funding is carried by $c_m$ throughout.

\subsection{The perpetual contract}
\label{sec:working}

Two quantities organise the thresholds of Theorems~\ref{thm:exit}
and~\ref{thm:entry}. The first is the carrying charge. For a short position the
exercise region is empty unless $c_m$ exceeds the threshold $c_m^\ast$ of
Proposition~\ref{prop:holdforever}, which at these parameters is
$\delta(\sbar+\inf_vD)=0.00304$ per year, reproduced by the numerical solver to
within $0.1$ basis point: a perpetual with positive carry and no expiry is a
money machine, and there is no reason ever to close it. Once $c_m$ passes that
threshold the exit level rises quickly, from a perpetual rate of $19.7\%$ at
$c_m=0.0035$ to $23.6\%$ at $c_m=0.008$, at an expected life of two years. The
long position needs no carrying charge at all: it bleeds the premium, and at
$c_m=0$ and $\delta=0.5$ it is closed once the rate reaches $20.3\%$, the $29$th
percentile.

The second is $\lambda$, which enters the slope of the reduced reward directly
through \eqref{eq:pbeta}. As $|\lambda|$ rises from $0.05$ to $0.15$ the short's
exit level falls from $24.9\%$ to $20.3\%$ and its percentile from $99$ to $30$;
beyond $|\lambda|=0.30$ the exercise region empties, because a premium that
large outruns any plausible carrying charge. This is
Corollary~\ref{cor:plam} in numbers: the premium is what makes the timing
problem exist, and enough of it makes the position one to hold rather than to
time.

The base parameters carry a premium of only $0.60$ volatility points and
therefore lie below the opening of the window of Figure~\ref{fig:window}; they
exhibit the third regime of Section~\ref{sec:pair}, in which an unwind rule
exists and no entry rule does. One calibration inside the window is worth
recording. Keeping the perpetual strike near $21\%$ but slowing the pricing
reversion, $\kappa_\QQ=1.0$, $\theta_\QQ=0.045$, $\gamma=0.20$ and
$\lambda=-0.6$ give $\kappa_\PP=1.60$ and $\theta_\PP=0.02812$, a physical
long-run volatility of $16.8\%$, $\nu=2.25$, and a premium of
$3.07$ volatility points. The window is then $c_m\in[100,210]$ basis points, and
at $c_m=120$ the short is entered when the perpetual rate reaches $22.62\%$, the
$94.8$th percentile of the stationary law, and unwound when it falls to
$18.20\%$, the $12.4$th. Simulation under $\PP$ puts the mean time from
stationarity to entry at $4.0$ years and from entry to unwind at $2.3$.

Slowing $\kappa_\QQ$ is what makes this work, and it is the only lever that
does: holding $\kappa_\QQ=2$ and sweeping $\lambda$ and $\gamma$ over their
admissible ranges never brings the entry threshold below the $97$th percentile,
because the premium enters through $\kappa_\PP=\kappa_\QQ-\lambda$ and a fast
pricing reversion forces a fast physical one, which holds $v$ near
$\theta_\PP$. Even in the favourable calibration the entry level is at the $95$th
percentile, which is the point of Table~\ref{tab:horizon}: this is close to the
best the model does, not a lucky corner of it.

\subsection{The dated contract}
\label{sec:finite}

The perpetual is what admits a closed form; the dated contract of
Section~\ref{sec:setup} does not, and we solve it numerically to check that the
geometry the reduction predicts is the geometry the accruing contract actually
has. By Propositions~\ref{prop:reduce} and \ref{prop:carry} the excess value
$u:=w-\Phi\ge0$ satisfies
\begin{equation}\label{eq:vi}
\min\Bigl\{-\partial_tu-\LP u-\bigl(\varepsilon\lambda \gQ(t)v+s-c_m\bigr),\ u\Bigr\}=0
\quad\text{on }(0,T)\times(0,\infty),
\qquad u(T,\cdot)=0,
\end{equation}
whose free boundary $b_T(t)=\partial\{u(t,\cdot)=0\}$ is the curve introduced in
Section~\ref{sec:carry}, the dated counterpart of the perpetual threshold $b^*$
of Theorem~\ref{thm:exit}. We solve \eqref{eq:vi} by a
fully implicit finite-difference scheme with upwinded drift and Howard iteration
on the exercise set, treating $v=0$ and the truncation boundary as degenerate
transport rows. The scheme is monotone and the solution is independent of the
truncation level; $b_T(0)$ is stable to four significant figures under refinement
from $(N_v,N_t)=(500,400)$ to $(4000,3200)$. Figure~\ref{fig:bdy} reports the
boundary for a long position with $\varepsilon\lambda<0$ --- the first row of
Table~\ref{tab:cases}, in which the motive for delay is the decaying unwind
spread --- and for a short position with $\varepsilon\lambda>0$.

\begin{figure}[t]\centering
\includegraphics[width=\textwidth]{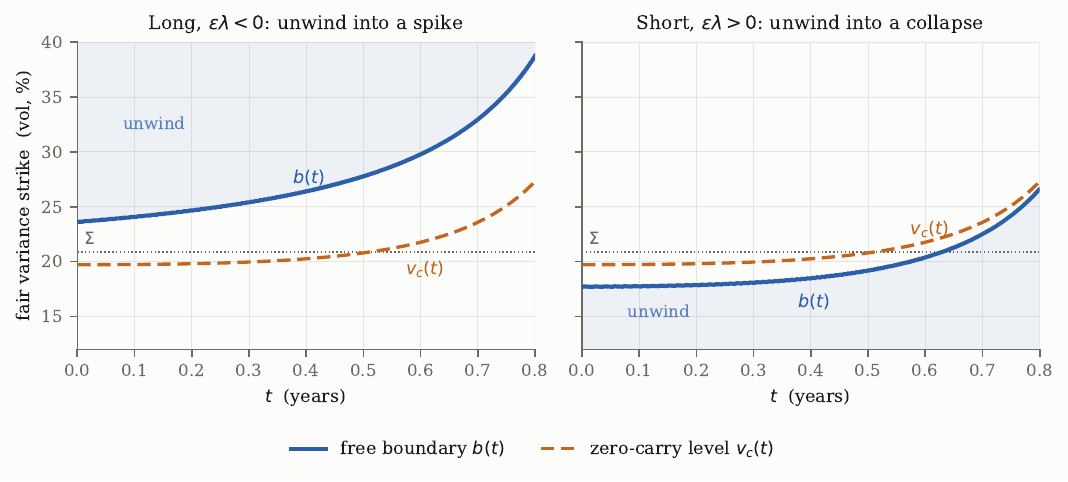}
\caption{Free boundary $b_T(t)$ of \eqref{eq:vi} (solid) and myopic level $v_c(t)$
of \eqref{eq:vc} (dashed), quoted as the fair variance strike \eqref{eq:xi}.
Shading marks the exercise region. Left: long, $\varepsilon\lambda<0$, $c_m=0$.
Right: short, $\varepsilon\lambda>0$, $c_m=0.004$. The dotted line is the
strike $\Sigma$ at which the swap was struck.}
\label{fig:bdy}
\end{figure}

Three features carry over, and one does not. The boundary is far from constant:
over three quarters of the life of a one-year swap it moves by about $1200$
basis points of volatility for the long position and $650$ for the short, so any
treatment that freezes the accrued variance and the residual maturity, and so
produces a single threshold, will misstate the rule by far more than the
transaction costs it is trying to economise on. Both rules fire with a plausible
frequency --- at inception the long unwinds at the $87$th percentile and the
short at the $9$th --- and both sit at the same order as the perpetual exit
thresholds above, so it is not the maturity that makes an exit rule usable. And
the premium sets the urgency in the same way: as $|\lambda|$ falls from $0.50$ to
$0.05$ the long trigger rises from $17.8\%$ to $36.5\%$ and the option value of
waiting grows from $0.6$ to $10.9$ volatility points, while for the short the
exercise region empties once $|\lambda|\ge0.30$.

What does not carry over is the behaviour near maturity, and the reason is
instructive. For the short the boundary rises to meet the myopic level,
$b_T(t)\uparrow v_c(t)$ as $t\uparrow T$: the option to wait is worth nothing when
there is no time left in which to wait. For the long both levels diverge like
$1/\gQ(t)$, because the negative carry $\lambda \gQ(t)v$ that motivates the exit
vanishes with the remaining duration while the spread saving $s(T-t)$ does not
vanish as fast; the trigger runs away and the position is held to expiry. The
perpetual has no maturity and so none of this, which is part of why it is the
tractable object. But maturity is not only a nuisance: it supplies the decaying
unwind spread, and for the long against the premium --- the first row of
Table~\ref{tab:cases}, whose carry is negative at every level --- that saving is
the only motive for delay. Without it $\mathcal{A}\Phi<0$ everywhere, and the
position is unwound at once from any state.

\section{Conclusion}
\label{sec:conclusion}

The entry and exit timing of a perpetual variance swap has a closed-form
solution, both thresholds being unique roots of smooth-pasting equations in the
confluent hypergeometric solutions of the CIR generator
(Theorems~\ref{thm:exit} and \ref{thm:entry}). Getting there was a reduction
rather than a new piece of machinery. Under $\QQ$ the problem has no content;
under $\PP$ the accrued variance separates exactly, the maturity, strike and
costs collapse into the single term $\varepsilon\lambda \gQ(t)v+(s-c_m)$, and on
the perpetual the running reward comes off and the entry strike cancels, leaving
a reward affine in the state. An affine reward on a CIR process is a problem
\citet{dayanik2003} and \citet{leung2014} have solved; the contribution is that
a variance swap should have been expected to produce one.

Opening the position and closing it are not mirror images. For the short --- the only side worth
opening under the empirical sign of the premium --- the holding cost is what
makes an unwind rule exist at all, and the charges admitting both a reachable unwind and
a reachable entry form an interval; inside it the entry threshold still sits
near the $95$th percentile of the physical law, because a trader who is out of
the market pays nothing to stay out. Charging idle capital fifty basis points a
year moves it to the $58$th. The premium and the trading spread, both
observable, say when to close a variance swap; saying when to open one needs a
number that describes the trader rather than the market.

One limitation is worth naming. \citet{dewbecker2017} locate the price of
variance risk at one and two months and find it indistinguishable from zero
beyond a quarter, which is not where a perpetual contract puts its weight;
Appendix~\ref{app:extras} shows the construction survives without a premium,
driven by a state-dependent carrying charge instead, at the cost of
conditioning the result on the shape of a trader's book rather than on market
prices.

\section*{Declaration of generative AI use}

The author used Anthropic's Claude during this work, both for the research ---
proof development, symbolic verification, and the code behind the numerical
results --- and for drafting and editing the manuscript. The author checked all
arguments and results and takes full responsibility for the content of this
article.

\appendix

\section{Proofs}
\label{app:proofs}

\begin{proof}[Proof of Lemma~\ref{lem:H}]
Continuity and differentiability are inherited from those of $F$ and $\varphi$.
Differentiating $H(\varphi(y))=\tilde h(y)/F(y)$ once gives the first expression
in \eqref{eq:Hprime}; differentiating again and using that $F$ solves
$\mathcal{L}F=\delta F$ together with $\varphi'=\mathrm{Wr}/F^2$, where
$\mathrm{Wr}$ is the Wronskian of $F$ and $G$, gives the second. Since
$\tilde h'\equiv\beta$ by Proposition~\ref{prop:pcarry}, the numerator of $H'$
is $\beta F-\tilde hF'$.

For $H(0)$: $\varphi(y)\to0$ as $y\to\infty$, and by \eqref{eq:derivs} $F'$ is
strictly increasing and unbounded, so by L'H\^opital
$H(0)=\lim_{y\to\infty}\tilde h(y)^+/F(y)=\lim_{y\to\infty}\beta/F'(y)=0$. The
sign statement is immediate from $F>0$, the monotonicity of $\varphi$, and the
fact that $v_0^*$ is the unique zero of the affine $\tilde h$.

For convexity, $\tilde h''=0$ and $\tilde h'=\beta$ give
$(\mathcal{L}-\delta)\tilde h(v)=\kappa_\PP(\theta_\PP-v)\beta-\delta\tilde h(v)$,
which is affine and strictly decreasing in $v$ with coefficient
$-\beta(\kappa_\PP+\delta)<0$; it therefore has a unique zero $v_s^*$ and is
positive to its left. By the second expression in \eqref{eq:Hprime}, $H''$ has
the same sign, and $\varphi$ is increasing, which gives the stated convexity and
concavity.
\end{proof}

\begin{proof}[Proof of Theorem~\ref{thm:exit}]
We give (i); (ii) is the same argument under Proposition~\ref{prop:dk} with the
roles of the endpoints exchanged and $G$ in place of $F$. By Lemma~\ref{lem:H},
$H(\varphi(v_0^*))=0$ and $H(0)=0$ with $H>0$ in between, so by Rolle's theorem
$H'$ vanishes at some interior $z^*=\varphi(b^*)$; by the concavity of $H$ on
$[\varphi(v_s^*),0)$ established in Lemma~\ref{lem:H}, and since $H$ is
increasing to the left of $\varphi(v_0^*)\vee\varphi(v_s^*)$, that point is
unique and is a maximum. From the first expression in \eqref{eq:Hprime}, $z^*$
solves \eqref{eq:spF}. The smallest decreasing concave majorant of $H$ is then
\[
\widehat H(z)=\begin{cases}H(z^*), & z<z^*,\\ H(z), & z\ge z^*,\end{cases}
\]
and Proposition~\ref{prop:dk2} gives $U(v)=F(v)\widehat H(\varphi(v))$, which is
the stated form; adding $R$ gives $V$ by Lemma~\ref{lem:reduce}. Optimality of
$\tau^*=\inf\{t:v_t\ge b^*\}$ follows from Proposition~\ref{prop:dk}. If $H$ has
no interior stationary point the exercise region is empty and the position is
never unwound.
\end{proof}

\begin{proof}[Proof of Lemma~\ref{lem:rho}]
By Theorem~\ref{thm:exit}(ii) the exercise region is $(0,b^*]$, on which
$U=\tilde h$, and the continuation region is $(b^*,\infty)$, on which
$U=\tilde h(b^*)G/G(b^*)$. Substituting into $\rho=U-\tilde h-k$ gives
\eqref{eq:rhoclosed}; continuity and the $C^1$ fit at $b^*$ are the
value-matching and smooth-pasting conditions of that theorem.

For \eqref{eq:Lrho}: on $(0,b^*)$, $\rho$ is the constant $-k$, so
$(\mathcal{L}-\delta)\rho=-\delta(-k)=\delta k$. On $(b^*,\infty)$,
$\mathcal{L}G=\delta G$ gives $(\mathcal{L}-\delta)G=0$, so
$(\mathcal{L}-\delta)\rho=-(\mathcal{L}-\delta)\tilde h+\delta k$; with
$\tilde h=\alpha+\beta v$, $\tilde h''=0$ and $\tilde h'=\beta$,
\[
(\mathcal{L}-\delta)\tilde h(v)=\beta\kappa_\PP(\theta_\PP-v)-\delta(\alpha+\beta v)
=\beta\kappa_\PP\theta_\PP-\delta\alpha-\beta(\kappa_\PP+\delta)v ,
\]
which on substitution yields the second branch. Its slope is
$\beta(\kappa_\PP+\delta)<0$ because $\beta<0$, so it is strictly decreasing
and vanishes only at $v_\rho$; the first branch is the positive constant
$\delta k$.

It remains to place $v_\rho$. Subtracting the two displays gives
$v_\rho=v_s^*-\delta k/(\beta(\kappa_\PP+\delta))$, and $\beta<0$ makes the
second term positive, so $v_\rho>v_s^*$. For $v_s^*>b^*$, note that where
$(\mathcal{L}-\delta)\tilde h>0$ the obstacle is a strict subsolution and
stopping is not optimal, so the exercise region is contained in
$\{(\mathcal{L}-\delta)\tilde h\le0\}$; for $\beta<0$ that function is
\emph{increasing} in $v$ with zero at $v_s^*$, so the containment reads
$(0,b^*]\subseteq(0,v_s^*]$. Hence $(\mathcal{L}-\delta)\rho>0$ on all of
$(0,v_\rho)$, including across $b^*$, and $<0$ beyond. The statement about
$H_\rho$ then follows from the second expression in \eqref{eq:Hprime}, whose
derivation used only that the obstacle is $C^2$ on the relevant interval, with
$\rho$ in place of $\tilde h$: $H_\rho''$ carries the sign of
$(\mathcal{L}-\delta)\rho$, and $\varphi$ is increasing.
\end{proof}

\begin{proof}[Proof of Theorem~\ref{thm:entry}]
The entry problem $J(v)=\sup_\zeta\EE^\PP_v[e^{-\delta\zeta}\rho(v_\zeta)]$ is
\eqref{eq:general} with obstacle $\rho$, which is continuous by
Lemma~\ref{lem:rho}, so Propositions~\ref{prop:dk} and \ref{prop:dk2} apply.
By Lemma~\ref{lem:rho}, $H_\rho$ is convex then concave with the switch at
$\varphi(v_\rho)$, which is the structure Lemma~\ref{lem:H} supplies for
$\tilde h$ when $\beta>0$. Although we are in $\beta<0$, it is therefore
branch~(i) of Theorem~\ref{thm:exit} whose argument transfers, with $\rho$ for
$\tilde h$: $H_\rho$ has a unique
interior stationary point $\varphi(d^*)$, which is a maximum and solves
$\rho'F-\rho F'=0$ by the first expression in \eqref{eq:Hprime}; the smallest
decreasing concave majorant is constant to its left and equal to $H_\rho$ to its
right; and $J(v)=F(v)\widehat H_\rho(\varphi(v))$ gives the stated form, with
$\zeta^*=\inf\{t:v_t\ge d^*\}$ optimal by Proposition~\ref{prop:dk}.

That $d^*>b^*$ is Proposition~\ref{prop:interval}(ii): $\rho=-k<0$ on the
exercise region while $J\ge0$ always, so no point of $(0,b^*]$ lies in
$\{J=\rho\}$. The case $\beta>0$ is the same argument with the roles of the two
endpoints exchanged, $G$ in place of $F$, and Proposition~\ref{prop:dk} in place
of Proposition~\ref{prop:dk2}.
\end{proof}

\begin{proof}[Proof of Proposition~\ref{prop:dmono}]
Throughout $\beta<0$, $k:=\sbar+\sebar>0$ and $m:=c_m/\delta$, so that
$\tilde h(v;m)=\tilde h(v;0)+m$ and $\partial_m\tilde h\equiv1$. We use
repeatedly that $G(v)=\mathcal{U}(\mathsf{a},\nu;\varsigma v)$ of \eqref{eq:FG},
with $\mathsf{a}=\delta/\kappa_\PP>0$, is completely monotone in its argument, by
Kummer's integral representation
\[
\mathcal{U}(p,q;x)=\frac{1}{\Gamma(p)}\int_0^\infty e^{-xt}t^{p-1}(1+t)^{q-p-1}\,dt ,
\qquad p>0,\ x>0,
\]
which exhibits $\mathcal{U}$ as the Laplace transform of a positive measure;
hence $G>0$, $G'<0$ and $G''>0$. Likewise $F(v)=M(\mathsf{a},\nu;\varsigma v)$ has
non-negative Taylor coefficients because $\mathsf{a},\nu>0$, so $F,F',F''>0$.

\emph{The exit threshold.} Write \eqref{eq:spG} as
$\Psi_b(b,m):=\beta G(b)-\tilde h(b;m)G'(b)$. Then $\partial_m\Psi_b=-G'(b)>0$, and
since $\tilde h'\equiv\beta$,
\[
\partial_b\Psi_b=\beta G'(b)-\tilde h'(b)G'(b)-\tilde h(b)G''(b)=-\tilde h(b)G''(b).
\]
At $b=b^*$ we have $\tilde h(b^*)>0$: the exercise region is non-empty and $U\ge0$
gives $\tilde h(b^*)\ge0$, while $\tilde h(b^*)=0$ would reduce \eqref{eq:spG} to
$\beta G(b^*)=0$ and hence $\beta=0$. So $\partial_b\Psi_b(b^*)<0$, the implicit
function theorem applies, and
\[
\frac{\partial b^*}{\partial m}=-\frac{\partial_m\Psi_b}{\partial_b\Psi_b}
=\frac{-G'(b^*)}{\tilde h(b^*)G''(b^*)}>0 ,
\]
which is \eqref{eq:dbcm} after dividing by $\delta$. This strengthens
Proposition~\ref{prop:interval}(i), whose contact-set argument gives only a weak
inequality. For $\beta>0$ replace $G$ by $F$ throughout: $\partial_m\Psi_b=-F'<0$,
$\partial_b\Psi_b=-\tilde hF''<0$, and $\partial b^*/\partial m<0$, so the
exercise region $[b^*,\infty)$ again enlarges.

\emph{Step 1: the derivative of the entry obstacle.} On the interval
$(b^*,\infty)$, Theorem~\ref{thm:exit}(ii) gives
$U(v;m)=\Lambda(b^*(m),m)\,G(v)$ with
$\Lambda(b,m):=\tilde h(b;m)/G(b)$. Since
$\partial_b\Lambda=(\beta G-\tilde hG')/G^2$ vanishes at $b=b^*$ by
\eqref{eq:spG}, the movement of $b^*$ contributes nothing and
$\frac{d}{dm}\Lambda(b^*(m),m)=1/G(b^*)$. Hence $\partial_mU=G/G(b^*)$ and
\begin{equation}\label{eq:dmrho}
\partial_m\rho(v)=\frac{G(v)}{G(b^*)}-1\in(-1,0),
\qquad
\partial_m\rho'(v)=\frac{G'(v)}{G(b^*)}<0 ,
\end{equation}
for $v>b^*$; \eqref{eq:dcmrho} is \eqref{eq:dmrho} divided by $\delta$, together
with $\EE_v[e^{-\delta T_{b^*}}]=G(v)/G(b^*)$ for $v>b^*$.

\emph{Step 2: the implicit function theorem.} Write
$\Psi_d(z,m):=\rho'(z)F(z)-\rho(z)F'(z)$, so that $d^*$ solves $\Psi_d=0$ by
\eqref{eq:spd}. Both $\rho$ and $F$ are twice differentiable, so
$\partial_z\Psi_d=\rho''F-\rho F''$; multiplying by $\gamma^2z/2$ and adding and
subtracting $\kappa_\PP(\theta_\PP-z)\rho'F$ gives the identity
\[
\tfrac{\gamma^2z}{2}\bigl(\rho''F-\rho F''\bigr)
=F\,(\mathcal{L}-\delta)\rho-\kappa_\PP(\theta_\PP-z)\bigl(\rho'F-\rho F'\bigr),
\]
where $\mathcal{L}F=\delta F$ was used to eliminate $F''$. At $z=d^*$ the second
term vanishes, so
\begin{equation}\label{eq:dzPsi}
\partial_z\Psi_d(d^*)=\frac{2}{\gamma^2d^*}\,F(d^*)\,(\mathcal{L}-\delta)\rho(d^*).
\end{equation}
Note that \eqref{eq:dzPsi} holds \emph{at $d^*$ only}: away from a zero of
$\Psi_d$ the second term in the identity does not vanish and carries no
determined sign, so nothing here says $\Psi_d$ is monotone on either side of
$v_\rho$, and $v_\rho$ is in general neither a zero nor a stationary point of
$\Psi_d$.

It remains to sign \eqref{eq:dzPsi}. By Theorem~\ref{thm:entry}, $\varphi(d^*)$
maximises $H_\rho$, so $d^*$ maximises the ratio $\rho/F$, and
$(\rho/F)'=\Psi_d/F^2$ gives $(\rho/F)'(d^*)=0$, which is \eqref{eq:spd}. The
obstacle $\rho$ is $C^\infty$ on $(b^*,\infty)$ and $F$ is analytic and
positive, so $\rho/F$ is $C^2$ near the interior point $d^*$ and the
second-order condition for a maximum gives $(\rho/F)''(d^*)\le0$; since
$\Psi_d(d^*)=0$,
\[
(\rho/F)''(d^*)=\frac{\partial_z\Psi_d(d^*)}{F(d^*)^2},
\]
so $\partial_z\Psi_d(d^*)\le0$ and hence, by \eqref{eq:dzPsi} and $F>0$,
$(\mathcal{L}-\delta)\rho(d^*)\le0$. By Lemma~\ref{lem:rho} that function is
affine with slope $\beta(\kappa_\PP+\delta)<0$ and single root $v_\rho$, so it
is non-positive exactly on $[v_\rho,\infty)$: therefore $d^*\ge v_\rho$.

The inequality is strict, and the convexity structure already established
supplies it without further computation. By Lemma~\ref{lem:rho}, $H_\rho$ is
strictly convex on $(\varphi(0),\varphi(v_\rho)]$ and concave on
$[\varphi(v_\rho),0)$. A stationary point in the strictly convex part is a local
minimum, and a stationary $\varphi(v_\rho)$ is not a maximum either: since
$H_\rho'$ is strictly increasing to its left by convexity and vanishes there by
stationarity, $H_\rho'<0$ just to the left, putting higher values further left. So the stationary point
that is the maximum lies strictly inside the concave part, and $\varphi$ is
increasing: $d^*>v_\rho$. Consequently $\partial_z\Psi_d(d^*)<0$, and $d^*$ is
continuously differentiable in $m$ with
\begin{equation}\label{eq:ift}
\frac{\partial d^*}{\partial m}=-\frac{\partial_m\Psi_d(d^*)}{\partial_z\Psi_d(d^*)},
\qquad
\partial_m\Psi_d(d^*)
=\frac{G'(d^*)F(d^*)-G(d^*)F'(d^*)}{G(b^*)}+F'(d^*)
=F'(d^*)-\frac{W(d^*)}{G(b^*)},
\end{equation}
using \eqref{eq:dmrho}. Combining \eqref{eq:dzPsi} and \eqref{eq:ift} and
dividing by $\delta$ gives \eqref{eq:ddcm}, whose sign is that of
$\partial_m\Psi_d(d^*)$ because $\partial_z\Psi_d(d^*)<0$.

\emph{Step 3: the sign.} Since $\rho(b^*)=-k<0$ and $\rho/F$ attains a positive
maximum at $d^*$ --- positive because the entry region is non-empty --- we have
$\rho(d^*)>0$, and then $\rho'(d^*)=\rho(d^*)F'(d^*)/F(d^*)>0$ by \eqref{eq:spd}. Using
$F(d^*)/F'(d^*)=\rho(d^*)/\rho'(d^*)$,
\[
\frac{W(d^*)}{F'(d^*)}=G(d^*)-\frac{F(d^*)}{F'(d^*)}G'(d^*)
=\frac{\rho'(d^*)G(d^*)-\rho(d^*)G'(d^*)}{\rho'(d^*)} ,
\]
so $\partial_m\Psi_d(d^*)>0$ if and only if
$\rho'(d^*)G(d^*)-\rho(d^*)G'(d^*)<\rho'(d^*)G(b^*)$. With
$\rho=\Lambda^*G-\tilde h-k$ on $(b^*,\infty)$ --- the second branch of
\eqref{eq:rhoclosed}, with $\Lambda^*:=\Lambda(b^*,m)=\tilde h(b^*)/G(b^*)$ the
coefficient of Step 1 --- and using $\Lambda^*G(b^*)=\tilde h(b^*)$, the two
sides are
$-\beta G(d^*)+(\tilde h(d^*)+k)G'(d^*)$ and $\tilde h(b^*)G'(d^*)-\beta G(b^*)$, and
since $\tilde h(b^*)-\tilde h(d^*)=\beta(b^*-d^*)$ the inequality reads
\begin{equation}\label{eq:key}
-\beta\bigl(G(d^*)-G(b^*)\bigr)<\bigl(\beta(b^*-d^*)-k\bigr)G'(d^*).
\end{equation}
By the mean value theorem $G(d^*)-G(b^*)=G'(\hat v)(d^*-b^*)$ for some
$\hat v\in(b^*,d^*)$, so \eqref{eq:key} is
\[
-\beta\,(d^*-b^*)\bigl(G'(\hat v)-G'(d^*)\bigr)<-k\,G'(d^*).
\]
By complete monotonicity $G''>0$, so $G'$ is strictly increasing and
$\hat v<d^*$ gives $G'(\hat v)<G'(d^*)$: the left side is negative, while $k>0$ and
$G'(d^*)<0$ make the right side positive. The inequality holds strictly, so
$\partial_m\Psi_d(d^*)>0$ and $\partial d^*/\partial c_m>0$.

For $\beta>0$ the exercise region is $[b^*,\infty)$ and the entry region
$(0,d^*]$ with $d^*<b^*$, and Steps 1--3 run with $F$ and $G$ interchanged:
$\partial_m\rho=F/F(b^*)-1$, $\partial_m\Psi_d(d^*)=W(d^*)/F(b^*)+G'(d^*)$, and
\eqref{eq:key} becomes
$\beta(F(b^*)-F(d^*))>(\beta(b^*-d^*)-k)F'(d^*)$, that is
$\beta(b^*-d^*)\bigl(F'(\hat v)-F'(d^*)\bigr)>-kF'(d^*)$ with $\hat v\in(d^*,b^*)$.
Since $F''>0$ the left side is non-negative and the right side negative, so
$\partial_m\Psi_d(d^*)<0$ and $d^*$ falls as $c_m$ rises.
\end{proof}

\section{A state-dependent carrying charge}
\label{app:extras}

This appendix collects two extensions that the body sets aside. The first
gives the long position an entry rule, which Remark~\ref{rem:long} shows it
otherwise lacks. The second asks how much of the construction survives if the
variance risk premium is withdrawn altogether.

Everything so far rests on $\lambda\neq0$. By Corollary~\ref{cor:plam} the
timing problem disappears when the premium does, and by
Proposition~\ref{prop:carry} the premium is the entire content of
$\mathcal{A}\Phi$ once the spread and the carrying charge are netted into the
constant $s-c_m$. That is a
heavy load for one parameter to carry, and it is worth asking what happens if
the empirical support for it is withdrawn.

There is reason to ask. \citet{dewbecker2017} decompose the return to selling
variance by the \emph{horizon} of the variance being sold, and find it
concentrated at one and two months: claims on variance realised more than a
quarter ahead earn a premium they cannot distinguish from zero, and that remains
true out to the fourteen-year horizon their strips reach. A perpetual claim
loads on exactly the horizons at which they find nothing. Nor is the mismatch a
matter of calibration. Writing $\pi(u,v)=\EE^\QQ_v[v_u]-\EE^\PP_v[v_u]$ for the
per-horizon premium and averaging over the stationary law of $v$ under $\PP$,
under which $\EE[v]=\theta_\PP$, every two-measure affine specification gives
\begin{equation}\label{eq:piu}
\EE\bigl[\pi(u)\bigr]=(\theta_\QQ-\theta_\PP)\bigl(1-e^{-\kappa_\QQ u}\bigr),
\end{equation}
which is zero at $u=0$ and grows monotonically in magnitude to
$|\theta_\QQ-\theta_\PP|$.
Expression \eqref{eq:piu} involves neither $\kappa_\PP$ nor the link
\eqref{eq:link}, so no choice of $(\kappa_\QQ,\theta_\QQ)$, and no freeing of
the link, makes it decay. The model puts the premium at long horizons; the
measurement finds it only at short ones.

The response is not to argue with the measurement but to observe that $\lambda$
is not the only thing that can make $\tilde h$ non-constant.
Section~\ref{sec:setup} took the carrying charge to be the constant $c_m$, and
there is no reason it should be. A long variance position held against a book
that is structurally short volatility is worth more to that book the more
stressed it is; margin on a short position scales with the level of variance
rather than sitting at a fixed number of basis points. Both are statements that
the charge depends on the state. Take the simplest such dependence,
\begin{equation}\label{eq:cstate}
c(v)=c_m+c_1v ,
\end{equation}
with $c_1<0$ a hedging benefit that widens in stress and $c_1>0$ a financing
charge that rises with it.

Nothing in the construction breaks. The running reward of \eqref{eq:obj} becomes
$(\varepsilon-c_1)v-c_m$, which is still affine, so
Proposition~\ref{prop:reduce} holds verbatim --- the already accrued balance $A$
still carries the coefficient $\varepsilon$, since the charge applies to
variance not yet accrued --- and the carry identity \eqref{eq:carry} acquires
one term,
\begin{equation}\label{eq:carrystate}
\mathcal{A}\Phi(t,v)=\bigl[\varepsilon\lambda \gQ(t)-c_1\bigr]v+(s-c_m).
\end{equation}
On the perpetual contract the running reward is
$\varepsilon(v_u-K)-(c_m+c_1v_u)=(\varepsilon-c_1)v_u-\varepsilon K-c_m$: the
accrual and the state-dependent part of the charge are both linear in $v$ and
both accumulate under $\PP$, so they share one perpetuity \eqref{eq:perp} and
\[
R(v)=(\varepsilon-c_1)P_\PP(v)-\frac{\varepsilon K}{\delta}-\frac{c_m}{\delta},
\]
which is Proposition~\ref{prop:pcarry}'s $R$ with $\varepsilon$ replaced by
$\varepsilon-c_1$. Hence
\begin{equation}\label{eq:pbetastate}
\tilde h(v)=\varepsilon D(v)+c_1P_\PP(v)-\sbar+\frac{c_m}{\delta},
\qquad
\beta=-\frac{\varepsilon\lambda}{(\delta+\kappa_\QQ)(\delta+\kappa_\PP)}
+\frac{c_1}{\delta+\kappa_\PP},
\end{equation}
with $P_\PP$ the perpetuity \eqref{eq:perp}. The entry strike
still cancels, so Corollaries~\ref{cor:pgeom} and \ref{cor:pnoK} hold with
$\beta$ read from \eqref{eq:pbetastate}. What does not hold is
Corollary~\ref{cor:plam}: the premium is no longer the sole source of
optionality. Now $\beta=0$ if and only if
$c_1=\varepsilon\lambda/(\delta+\kappa_\QQ)$, and the timing problem degenerates
on that surface rather than on $\{\lambda=0\}$.

Two consequences are worth separating. The first concerns the accruing contract.
Since $\gQ(T)=0$, the slope in \eqref{eq:carrystate} tends to $-c_1$ as the swap
runs off, so near maturity the charge alone fixes the geometry whatever the
premium is. If $\varepsilon\lambda$ and $c_1$ share a sign and
$c_1/(\varepsilon\lambda)<\gQ(0)$ there is a unique $t^\ast$ at which
$\gQ(t^\ast)=c_1/(\varepsilon\lambda)$, and the exercise region changes sides
during the life of the contract --- a possibility the constant-charge model does
not admit at all.

The second is the point of the subsection. Set $\lambda=0$. There is then no
premium at any horizon, so \eqref{eq:piu} vanishes identically and there is
nothing left for \citet{dewbecker2017} to contradict; $D\equiv0$, and
\begin{equation}\label{eq:nopremium}
\tilde h(v)=c_1P_\PP(v)-\sbar+\frac{c_m}{\delta},
\qquad
\beta=\frac{c_1}{\delta+\kappa_\PP} .
\end{equation}
The position $\varepsilon$ has disappeared: with no premium the two sides are
symmetric and the charge alone decides both the geometry and the thresholds. The
stopping problem is nonetheless non-trivial, and the two-stage structure of
\eqref{eq:Jentry} goes through unchanged. Table~\ref{tab:statedep} solves it
on the physical law of \eqref{eq:cal7}.

\begin{table}[t]\centering
\begin{tabular}{rrrll}
\toprule
$c_1$ & $c_m$ & $c(v)$ at $p1/p50/p99$ & enter & unwind\\
\midrule
$-0.10$ & $\phantom{-}\phantom{0}40$ bp & $+35\,/\,+4\,/\,-77$ bp
  & above $23.40\%\ (p98.9)$ & never\\
$-0.20$ & $\phantom{-}\phantom{0}80$ bp & $+69\,/\,+8\,/\,-154$ bp
  & above $22.54\%\ (p96.5)$ & below $18.46\%\ (p0.9)$\\
$-0.30$ & $\phantom{-}120$ bp & $+104\,/\,+12\,/\,-230$ bp
  & above $22.20\%\ (p94.7)$ & below $18.63\%\ (p3.1)$\\
$-0.50$ & $\phantom{-}240$ bp & $+213\,/\,+60\,/\,-344$ bp
  & above $22.46\%\ (p96.2)$ & below $19.32\%\ (p24.1)$\\
\midrule
$+0.20$ & $-\phantom{0}80$ bp & $-69\,/\,-8\,/\,+154$ bp
  & below $18.62\%\ (p2.9)$ & above $22.57\%\ (p96.7)$\\
$+0.30$ & $-120$ bp & $-104\,/\,-12\,/\,+230$ bp
  & below $18.82\%\ (p7.1)$ & above $22.31\%\ (p95.4)$\\
\bottomrule
\end{tabular}
\caption{Entry and unwind rules with \emph{no variance risk premium}
($\lambda=0$, so $D\equiv0$), driven entirely by the state-dependent carrying
charge \eqref{eq:cstate}. The physical law is that of \eqref{eq:cal7}, with
$\delta=0.5$ and $\sbar=\sebar=20$ basis points; since $\lambda=0$ the perpetual fair
rate is the physical one. Thresholds are perpetual rates with percentiles of the
stationary law of $v$ in brackets, and the third column reports the charge in
basis points a year at the first percentile, the median and the ninety-ninth.
The first row is the degenerate regime of
Proposition~\ref{prop:holdforever}: the slope is too shallow to pay for the
spread, and the position once entered is held to termination.}
\label{tab:statedep}
\end{table}

Two features of the table matter more than the individual numbers. The
\emph{level} of the charge is almost irrelevant: at the median state it never
exceeds $60$ basis points a year in absolute value, and in five of the six rows
it does not exceed $12$. What creates the optionality is the \emph{slope}, and
\eqref{eq:nopremium} says so exactly --- $\beta$ depends on $c_1$ alone, and
$c_m$ enters only the intercept, exactly as in \eqref{eq:htD}. And
the round trips are of a usable length: at $c_1=+0.30$, $c_m=-120$ basis points,
simulation under $\PP$ from the stationary law puts the mean time to entry at
$1.5$ years and the mean holding period at $4.2$; at $c_1=-0.50$, $c_m=240$
basis points, $3.9$ years and $1.3$ --- the magnitudes of the premium-driven
round trips of Section~\ref{sec:working}, produced with no premium at all.

The honest reading of this is narrower than it may appear. The coefficient $c_1$
is a preference parameter, in the class of $c_0$ and not in the class of
$\lambda$ or of $\delta$: it describes the trader's book rather than the market's
prices, and it cannot be read off an option surface. The subsection therefore
trades an empirical claim for a behavioural one. It does not assert that a
premium of the required shape exists; it says that if a hedging demand of this
shape exists, then here is when to trade against it, and it prices how large the
state-dependence must be before the two-threshold structure appears at all --- a
swing of roughly two hundred basis points a year across the middle $98\%$ of the
law, which is an order of magnitude a risk manager can be asked to confirm or
deny. That is a weaker claim than the one Section~\ref{sec:carry} makes. It is
also the one that survives \citet{dewbecker2017}.

Finally, the generalisation does not rescue the entry threshold, and it is worth
recording how it fails. For the short at the three-point premium of
\eqref{eq:cal7}, a financing charge rising with variance is the realistic
specification, $c_1>0$, and it makes matters worse. Taking the holding cost
that makes the entry rule as reachable as it can be at each slope, subject to
Definition~\ref{def:nondeg} at $\eta$, the entry threshold is the $93$rd
percentile at $c_1=0$, the $97$th at $c_1=0.15$ --- a charge of $54$ basis
points a year at the median state --- and the $99$th at $c_1=0.20$; from
$c_1=0.25$ to $c_1=0.40$ no holding cost admits a non-degenerate pair at all.

Past that band the geometry inverts. At
$c_1=\varepsilon\lambda/(\delta+\kappa_\QQ)=0.323$, or $117$ basis points a year
at the median, $\beta$ changes sign by \eqref{eq:pbetastate} and the entry
region becomes the lower set: the short is then sold in calm markets and bought
back into a spike. The entry threshold does become reachable there --- the
$15$th percentile at $c_1=0.50$, the $30$th at $c_1=0.70$ --- but the holding
cost that achieves it has turned negative, from $-84$ to $-190$ basis points a
year. What buys the reachable entry rule on that side is a subsidy, not a
charge, and a subsidy is what makes a long worth opening in the first place.
As a charge, $c_1$ leaves the asymmetry of
Section~\ref{sec:idle} where it found it: a statement about how thin the
tails of the physical law are, which a state-dependent charge does not thicken.

One caveat attaches to all of these figures, and to the interval of
Figure~\ref{fig:window}. They are screened by the reachability test \eqref{eq:reach} at $\eta=0.5\%$, and
the screen is doing more work than a tolerance should. Raising it to $5\%$ --- still a rule
that fires on one observation in twenty --- closes the window of
Figure~\ref{fig:window} entirely below a premium of about three and a half
volatility points: at the three-point premium of \eqref{eq:cal7} the $93$ basis
points of admissible carrying charge become none. Where the window survives it
is narrow, $14$ basis points at four points against $159$ at $\eta=0.5\%$. The round trips this paper reports are therefore usable in the
sense that the physical law reaches both thresholds, not in the stronger sense
that it reaches them often.

\end{document}